\newcommand{\ybot}{-6em}
\newcommand{\ymiddiff}{0.8em}
\newcommand{\xfull}{5.5em}
\newcommand{\onpathdeco}[2]{
\fill (#1) circle [radius=0.14em]; 
\node[xshift = -0.6em, yshift= 0.18em] at (#1){#2};
} 
\tikzstyle{flow} = [->, dashed, gray!120]
\tikzstyle{onpath} = [inner sep=0.15em, fill = white, circle, draw]
\tikzstyle{offpath} = [inner sep=0.11em, draw, fill, circle]
\tikzstyle{mathcoord} = [remember picture, baseline=-0.55ex, inner sep = 0.1ex, every node/.style = {font = \strut}, remember picture, inner sep = -0.3ex, outer sep = 0.1ex]
\tikzstyle{labeloffset} = [xshift = 0em, yshift = .4em, anchor = west]
\newtheorem{df}{Definition}[section]
\newtheorem{lm}[df]{Lemma}
\newtheorem{ex}[df]{Example}
\newtheorem{theo}[df]{Theorem}
\newtheorem{cor}[df]{Corollary}
\newtheorem{ob}[df]{Observation}
\newcommand{\nat}{\mathbb{N}}
\newcommand{\seml}{[\![}
\newcommand{\semr}{]\!]}
\newcommand{\pimplies}{\stackrel{\mathrm{+}}{\rightarrow}}
\newcommand{\pos}{\mathrm{pos}}
\newcommand{\height}{\mathrm{height}}
\newcommand{\cut}{\mathrm{cut}}
\newcommand{\RMSO}{\mathrm{RMSO}}
\newcommand{\MSO}{\mathrm{MSO}}
\newcommand{\bMSO}{\mathrm{BMSO}}
\newcommand{\bFO}{\mathrm{BFO}}
\newcommand{\bFOmod}{\mathrm{BFO}\! + \! \mathrm{mod}}
\newcommand{\FO}{\mathrm{FO}}
\newcommand{\FOmod}{\mathrm{FO}\! + \! \mathrm{mod}}
\newcommand{\FOtmod}{\mathrm{FO}_t\! + \! \mathrm{mod}}
\newcommand{\Free}{\mathrm{Free}}
\newcommand{\rt}{\mathrm{root}}
\newcommand{\edge}{\mathrm{edge}}
\newcommand{\lab}{\mathrm{label}}
\newcommand{\path}{\mathrm{path}}
\newcommand{\TC}{\mathrm{TC}}
\newcommand{\BTC}{\mathrm{BTC}}
\newcommand{\uBTC}{\mathrm{unf}\mbox{-}\mathrm{BTC}}
\newcommand{\maxrk}{{\mathrm{maxrk}}}
\newcommand{\formpath}{\mathrm{form}\mbox{-}\mathrm{path}}
\newcommand{\formlmp}{\mathrm{form}\mbox{-}\mathrm{lmp}}
\newcommand{\formcut}{\mathrm{form}\mbox{-}\mathrm{cut}}
\newcommand{\onlmp}{\mathrm{on}\mbox{-}\mathrm{lmp}}
\newcommand{\head}{\mathrm{head}}
\newcommand{\nextbase}{\mathrm{next}\mbox{-}\mathrm{base}}
\newcommand{\sibl}{\mathrm{sibl}}
\newcommand{\dec}{\mathrm{dec}}
\newcommand{\chk}{\mathrm{check}}
\newcommand{\sel}{\mathrm{sel}}
\newcommand{\listpos}{\mathrm{list}\mbox{-}\mathrm{pos}}
\DeclareMathOperator{\rk}{rk}
\DeclareMathOperator{\size}{size}
\DeclareMathOperator{\n}{-}
\let\S=\Sigma
\def\-{$-$}
\def\|{\hspace{1mm} | \hspace{1mm}}
\def\seq#1#2#3{#1_{#2},\ldots,#1_{#3}} 
\let\u=\cup
\def\g0{\geq0}             
\def\xe{x_\varepsilon}
\def\ui#1{^{(#1)}}
\begin{document}

\title{Characterizing\\ Weighted MSO for Trees by\\ Branching Transitive Closure Logics}

\author{Zolt\'an F\"ul\"op$^a$\thanks{Research of this author was financially supported by the programs
    T\'AMOP-4.2.1/B-09/1/KONV-2010-0005 and T\'AMOP-4.2.2.A-11/1/KONV-2012-0073 of the Hungarian National
    Development Agency.} \, and Heiko Vogler$^b$\\
  {\small $^a$ Department of Foundations of Computer Science,
    University of Szeged} \\[-.5ex]
  {\small \'Arp\'ad t\'er 2., H-6720 Szeged, Hungary.
    {fulop@inf.u-szeged.hu}} \smallskip \\
  {\small $^b$ Faculty of Computer Science, Technische Universit\"at
    Dresden} \\[-.5ex]
  {\small Mommsenstr.~13, D-01062 Dresden, Germany.
    {Heiko.Vogler@tu-dresden.de}}}

\date{\today}

\maketitle

\sloppy

\begin{quote}{\bf Abstract:} We introduce the branching transitive
  closure operator on progressing weighted monadic second-order logic formulas where the
  branching corresponds in a natural way to the branching inherent in
  trees. For arbitrary commutative semirings, we prove that
  weighted monadic second order logics on trees  is equivalent to the
  definability by formulas which start with one of the following
  operators: (i) a branching transitive closure or (ii) one existential second-order quantifier followed by one universal first-order quantifier; in both cases the operator is applied to step-formulas over (a) Boolean first-order logic enriched by modulo counting or (b) Boolean monadic-second order logic.   
\end{quote}

{\bf ACM classification:} F.1.1, F.4.1, F.4.3.

{\bf Key words and phrases:} weighted tree automata, weighted monadic second-order logic, transitive closure.

\section{Introduction}\label{introduction}

In \cite{barmak92} monadic second order logic (MSO) for strings is
characterized by the extension of first-order (FO) logic with
unary transitive closure ($\FO+\TC^{[1]}$). In \cite[Thm.10]{bolgasmonzei10}
weighted restricted MSO for strings is characterized by the
application of a (progressing) unary transitive closure operator to
step formulas over FO formulas extended by modulo counting. For trees such
a characterization of MSO in terms of transitive closure existed neither for the weighted nor
for the unweighted case.  In  \cite{tenseg08}
it was proved that MSO on trees is strictly more powerful than
$\FO+\TC^{[1]}$. Moreover, MSO is strictly less powerful than
$\bigcup_{k\geq 1}\FO+\TC^{[k]}$ where $\TC^{[k]}$ denotes the transitive
closure of some binary relation over the set of $k$-tuples of
positions in trees \cite{tiekep09}; even
$\FO+\TC^{[2]}$ contains a tree language which is not definable in MSO
(cf.  \cite[Prop. 4]{tiekep09}). 
This raises the following question: is there a version of transitive
closure for trees which characterizes MSO in the unweighted and the
weighted case?

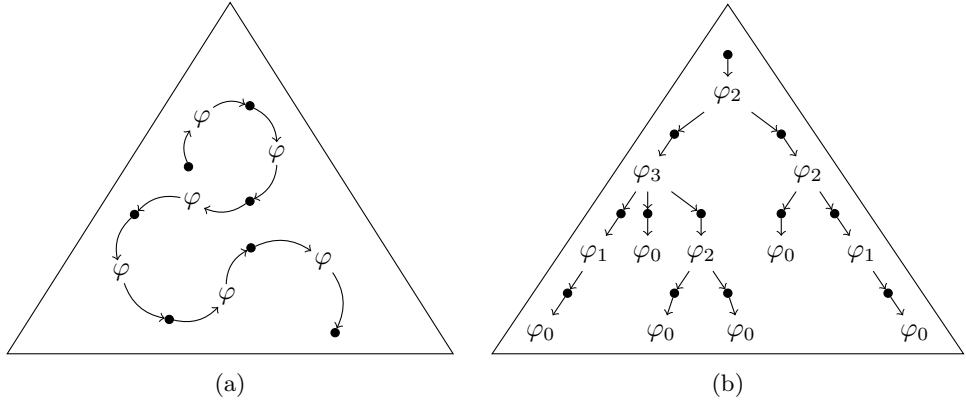
\begin{figure} 
\centering
	\begin{subfigure}[b]{0.49\textwidth}
	\centering
	\begin{tikzpicture}
		\tikzstyle{op} = [offpath]
		\tikzstyle{a1} = [bend left, shorten > = .5em, shorten < = .11em, ->]
		\tikzstyle{a2} = [bend left, shorten < = .5em, shorten > = .11em, ->]
		\tikzstyle{w} = [inner sep = .1em]
		\newcommand{\smalldiff}{.5em}
		\newcommand{\bigdiff}{1.8em}
		\node[op] at (0,0) (n0) {};
		\node[w, shift ={(\smalldiff, \bigdiff)}] at (n0) (n1) {$\varphi$};
		\node[op, shift ={(\bigdiff, \smalldiff)}] at (n1) (n2) {};
		\node[w, shift ={(2* \smalldiff, -\bigdiff)}] at (n2) (n3) {$\varphi$};
		\node[op, shift ={(-2 * \smalldiff, -\bigdiff)}] at (n3) (n4) {};
		\node[w, shift ={(-1.2 * \bigdiff, 0)}] at (n4) (n5) {$\varphi$};
		\node[op, shift ={(-\bigdiff * 1.2, -\smalldiff)}] at (n5) (n6) {};
		\node[w, shift ={(-\smalldiff, -\bigdiff * 1.2)}] at (n6) (n7) {$\varphi$};
		\node[op, shift ={(\bigdiff, -\bigdiff)}] at (n7) (n8) {};
		\node[w, shift ={(\bigdiff * 1.2, \bigdiff /2)}] at (n8) (n9) {$\varphi$};
		\node[op, shift ={(\bigdiff / 2, \bigdiff)}] at (n9) (n10) {};
		\node[w, shift ={(\bigdiff * 1.5, -\smalldiff)}] at (n10) (n11) {$\varphi$};
		\node[op, shift ={(\bigdiff/4, -1.5* \bigdiff)}] at (n11) (n12) {};	

		\foreach \n in {0,1,2,3,4,9, 10,11}{\pgfmathtruncatemacro{\next}{\n + 1} \draw[bend left, ->] (n\n) to (n\next);}
		\foreach \n in {5,6,7,8}{\pgfmathtruncatemacro{\next}{\n + 1} \draw[bend right, ->] (n\n) to (n\next);}
		
		\coordinate[left = 1.4em of n7] (ll);
		\coordinate[right = 2em of n11] (rr);
		\node[fit = (n1) (n2) (n3) (n4) (n5) (n6) (n7) (n8) (n9) (n10) (n11) (ll) (rr), inner sep = 2.4em, yshift = 1.3em] (bound) {};
		\draw (bound.south west) -- (bound.north) -- (bound.south east) -- cycle;
	\end{tikzpicture}
	\caption{}
	\end{subfigure}
	\begin{subfigure}[b]{0.49\textwidth}
	\centering
	\begin{tikzpicture}[level distance = 1.5em, sibling distance = 6em, level 4/.style={sibling distance = 2em}, edge from parent/.style={draw,->}]
		\tikzstyle{op} = [offpath]
		\node[op] (n0) {}
			child {node {$\varphi_2$}
				child {node[op, xshift = 1em] {}
					child {node[xshift = -1em] {$\varphi_3$}
						child {node[op, xshift = 1em] {}
							child {node[xshift = -1em] {$\varphi_1$}
								child {node[op, xshift = -1em] {}
									child {node[xshift = -1em] (n1) {$\varphi_0$}}
								}
							}
						}
						child {node[op] {}
							child {node {$\varphi_0$}}
						}
						child {node[op] {}
							child {node {$\varphi_2$}
								child {node[op] {}
									child {node[xshift = -0.5em] {$\varphi_0$}}
								}
								child {node[op] {}
									child {node[xshift = 0.5em] {$\varphi_0$}}
								}
							}
						}
					}
				}
				child {node[op, xshift = -1em] {}
					child {node[xshift = 1em] {$\varphi_2$}
						child {node[op] {}
							child {node {$\varphi_0$}}
						}
						child {node[op] {}
							child {node[xshift = 1em] {$\varphi_1$}
								child {node[op, xshift = 1em] {}
									child {node[xshift = 1em] (n2) {$\varphi_0$}}
								}
							}
						}
					}
				}
			};
		\node [fit = (n0) (n1) (n2), inner sep = 0.9em, yshift = .8em] (bound) {};
		\draw (bound.south west) -- (bound.north) -- (bound.south east) -- cycle;
	\end{tikzpicture}
	\caption{}
	\end{subfigure} 

\caption{\label{fig:TC}(a) A sequence of positions chosen by
  $\TC^{[1]}(\varphi)$ and
  (b) a ``tree'' of positions chosen by $\BTC(\Phi)$. }
\end{figure}

In this paper we define the concept of {\em  branching
  transitive closure} for trees ($\BTC$) and we characterize MSO on
trees by $\BTC$ applied to FO extended by modulo counting. 
Let us informally explain the concept of $\BTC$ by first recalling how
$\TC^{[1]}$ works. The operator  $\TC^{[1]}$  is applied to a formula $\varphi(x,y)$ with two free
variables $x$ and $y$, called input and output variable, respectively. Then
$\TC^{[1]}(\varphi)$ is interpreted as the transitive closure of the binary
relation induced by $\varphi$. If $\TC^{[1]}(\varphi)$ is interpreted on a
tree, then a sequence of positions is chosen;
these positions might be thought of as intermediate points of a tree-walk
(cf. Fig. \ref{fig:TC}(a)).  Contrary to  $\TC^{[1]}$, the operator $\BTC$ is applied to a finite family $\Phi = (\varphi_k(x,y_1,\ldots,y_k) \mid 0\le k
\le m)$ of formulas where $\varphi_k(x,y_1,\ldots,y_k)$ has the free
input variable $x$ and the free output variables 
$y_1,\ldots,y_k$. Then $\BTC(\Phi)$ is interpreted on a
tree as follows (cf. Fig. \ref{fig:TC}(b)). The interpretation starts by choosing an arbitrary position $v$ as assignment for $x$.
Then, the operator chooses a number $l$ of positions which it will visit (in Fig. \ref{fig:TC}(b) $l=12$); $v$ is one of these positions. Next the operator chooses a branching degree $k$ and thereby the
formula $\varphi_k(x,y_1,\ldots,y_k)$. To each $y_i$ it assigns
a position $v_i$;  in
particular, the $v_i$ are in a certain sense ``below'' $v$ such that
the application of $\varphi_k$ ensures a progress down the tree. Before calling itself on $v_i$, the operator chooses a distribution of $l-1$ to its $k$ recursive calls, i.e., it chooses a sequence $l_1,\ldots,l_k$ of numbers such that $l-1 = l_1 + \ldots + l_k$ and $l_i$ is the number of all positions which are visited in the $i$th recursive call (including $v_i$). Thereafter the operator splits into $k$ copies and, for each $1 \le i \le k$, one copy visits position $v_i$.
This process is iterated where the output positions of an iteration step
become the input positions for the next step. Finally, the formula $\varphi_0(x)$ has to be chosen which finishes the iteration.
Figure 1(b) shows a protocol of this interpretation of $\BTC(\Phi)$ which we will call {\em unfolding}. 
Hence, $\BTC(\Phi)$ reflects in a natural way the branching structure of trees. 
The tree satisfies this unfolding if $\varphi_k(v,v_1,...,v_k)$ holds for every
chosen formula and assigment belonging to the formula.  Moreover, the tree satisfies $\BTC(\Phi)$ if
it satisfies at least one of its unfoldings. For a class ${\cal   L}$ of formulas, we denote by $\BTC({\cal L})$ the class of all formulas of the form $\BTC(\Phi)$ and $\Phi$ is a family of progressing formulas in ${\cal L}$.

In this paper we characterize  MSO by $\BTC(\FOmod)$
where $\FOmod$ is a class of FO formulas extended by modulo counting
(similar to \cite{bolgasmonzei10}). Let us explain how an arbitrary MSO formula is transformed into formulas of
$\BTC(\FOmod)$.  For this we represent the MSO formula by a
finite state tree automaton ${\cal A}$  \cite{thawri68,don70}. Then we use the idea of \cite{tho82} of splitting the given input tree into slices, where
the number $n$ of states completely determines the shape and the number of the
slices. However,  due to the branching inherent in trees, the appropriate definition of a slice was a technical challenge (cf. Section \ref{sec:slices}).  For instance, in Fig.~\ref{fig:slices-phi},  for $n=3$ the input tree $\xi$ is
splitted into the shown slices $\zeta_1,\ldots,\zeta_6$.

The state behaviour of ${\cal A}$ on  $\xi$ induces a
state behaviour on the slices of $\xi$. 
Then, due to the idea invented in \cite{tho82}, the state in which the evaluation of a slice starts can be represented by a position of this slice.  The retrieval of the state
from a position uses the  modulo counting technique. The state behaviour on the
slices is handled by assigning the representing positions to the free variables
of the instances of the $\varphi_k$-formulas.

To understand the idea, let us consider Fig.~\ref{fig:slices-phi}.
Let us assume that ${\cal A}$ has the state set $\{0, 1, 2\}$ and that the  evaluation of the slices $\zeta_1,\ldots,\zeta_6$ are started in
states 0, 1, 1, 2, 0, and 1, respectively.  Then, e.g., the state $0$ in
the slice $\zeta_1$ is represented by position $\varepsilon$, the state $1$ in
$\zeta_2$ by $2111$, and the state $2$
in $\zeta_4$ by $32111$. Hence, the state behaviour on
$\zeta_1$ is handled by $\varphi_2(x,y_1,y_2)$ under the assignment $x
\mapsto \varepsilon, y_1 \mapsto 2111, y_2 \mapsto 32111$.

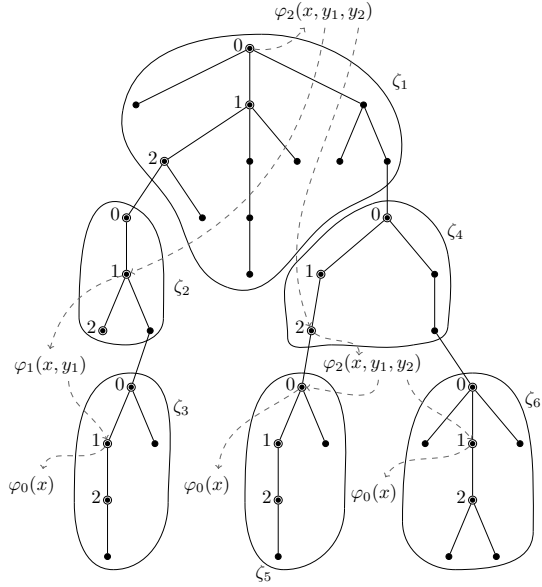
\begin{figure}
\begin{center}
\scalebox{0.71}{
\begin{tikzpicture}[level distance = 3em]
\begin{scope}[every node/.style={offpath}, sibling distance = 6em, level 2/.style={sibling distance = 2.5em}]
	\node[onpath] (z10) {}
		child {node (z13) {}}
		child {node[onpath] (z11) {}
			child { node[onpath, xshift = -2em] (z12) {}
				child {node[xshift = 2em] {}}
			}
			child { node {}
				child {node {}
					child {node (z14) {}}
				}
			}
			child { node {}}
		}
		child {node {}
			child {node {}}
			child {node (z15){}	}
		}
	;
\end{scope}
\foreach \z in {0,1,2}{
	\onpathdeco{z1\z}{\z}
}
\node[fit = (z10) (z11) (z12) (z13) (z14) (z15), inner sep = .5em] (b1) {};
\path (b1.north) edge[out = 180, in = 80] (b1.160) 
		  (b1.160) edge[out = 260, in = 110] ($(b1.180) !.5! (b1.260)$)
		  ($(b1.180) !.5! (b1.260)$) edge[out = 290, in = 160] (b1.255) 
		  (b1.255) edge[out= 340, in = 215] ($(b1.357) !.4! (b1.258)$) 
		  ($(b1.357) !.4! (b1.258)$) edge[out = 35, in = 260] (b1.east)
		  (b1.east) edge[out = 80, in = 0] (b1.north);
\node at (b1.30) {$\zeta_1$};

\begin{scope}[every node/.style={offpath}, sibling distance = 6em, level 2/.style={sibling distance = 2.5em}]
	\node[onpath, xshift = -2em, yshift = -3em] (z20) at (z12) {}
		child { node[onpath] (z21) {}
			child { node[onpath] (z22){}}
			child { node (z23) {}}
	}
	;
	\draw (z12) to (z20);
\end{scope}
 	\foreach \z in {0,1,2}{
 		\onpathdeco{z2\z}{\z}
 	}
\node[fit = (z20) (z21) (z22) (z23), inner sep = .5em] (b2) {};
\path (b2.north) edge[out = 160, in = 90] ($(b2.225) + (-.5em,-0.3em)$) 
		($(b2.225) + (-.5em,-0.3em)$)  edge[out = 270, in = 180] (b2.270) 
		(b2.270) edge[out = 0, in = 270] ($(b2.310)$)
		($(b2.310)$) edge[out = 90, in = 340] (b2.north);
\node[xshift = 1em] at (b2.340) {$\zeta_2$};

\begin{scope}[every node/.style={offpath}, sibling distance = 2.5em]
	\node[onpath, yshift = -3em, xshift = -1em] (z30) at (z23) {}
		child { node[onpath] (z31) {}
			child { node[onpath] (z32){}
				child {node (z33) {}}
			}
		}
		child { node (z34) {}}
	;
\end{scope}
 	\draw (z23) to (z30);
 	\foreach \z in {0,1,2}{
 		\onpathdeco{z3\z}{\z}
 	}
\node[fit = (z30) (z31) (z32) (z33) (z34), inner sep = .5em] (b3) {};
\path (b3.north) edge[out = 180, in = 90] ($(b3.220)+(-1em,0)$)
		  ($(b3.220) + (-1em,0)$) edge[out = 270, in = 180] (b3.260)
		  (b3.260) edge[out = 0, in = 270] (b3.320)
		  (b3.320) edge[out = 90, in = 0] (b3.north);
\node[xshift = .7em] at (b3.60) {$\zeta_3$};

\begin{scope}[every node/.style={offpath}, sibling distance = 5em]
	\node[onpath, yshift = -3em] (z40) at (z15) {}
		child { node[onpath, xshift = -1em] (z41) {}
			child { node[onpath, xshift = -.5em] (z42){}}
		}
		child { node {}
			child {node (z43) {}}
		}
	;
\end{scope}
 	\draw (z15) to (z40);
 	\foreach \z in {0,1,2}{
 		\onpathdeco{z4\z}{\z}
 	}

\node[fit = (z40) (z41) (z42) (z43), inner sep = .5em] (b4) {};
\path (b4.north) edge[out = 200, in = 45] ($(b4.140) !.8! (b4.west)$)
		($(b4.140) !.8! (b4.west)$) edge[out = 225, in = 90] ($(b4.220)+(-0.5em,0)$)
		  ($(b4.220) + (-0.5em,0)$) edge[out = 270, in = 180] (b4.240)
			(b4.240) edge [out = 0, in = 190] (b4.300)
		  (b4.300) edge[out = 10, in = 265] (b4.325)
		  (b4.325) edge[out = 85, in = 280] (b4.15)
		  (b4.15) edge[out = 100, in = 20] (b4.north);
\node[xshift = .4em] at (b4.30) {$\zeta_4$};

\begin{scope}[every node/.style={offpath}, sibling distance = 2.5em]
	\node[onpath, yshift = -3em, xshift = -0.5em] (z50) at (z42) {}
		child { node[onpath] (z51) {}
			child { node[onpath] (z52){}
				child {node (z53) {}}
			}
		}
		child { node (z54) {}}
	;
\end{scope}
 	\draw (z42) to (z50);
 	\foreach \z in {0,1,2}{
 		\onpathdeco{z5\z}{\z}
 	}

\node[fit = (z50) (z51) (z52) (z53) (z54), inner sep = .5em] (b5) {};
\path (b5.north) edge[out = 180, in = 90] ($(b5.220)+(-1em,0)$)
		  ($(b5.220) + (-1em,0)$) edge[out = 270, in = 180] (b5.260)
		  (b5.260) edge[out = 0, in = 270] ($(b5.315) + (.3em,0)$)
		  ($(b5.315) + (.3em,0)$)edge[out = 90, in = 0] (b5.north);
\node[yshift = -2em] at (b5.240) {$\zeta_5$};

\begin{scope}[every node/.style={offpath}, sibling distance = 2.5em]
	\node[onpath, yshift = -3em, xshift = 2em] (z60) at (z43) {}
		child { node (z63) {}}
		child { node[onpath] (z61) {}
			child { node[onpath] (z62){}
				child {node (z64) {}}
				child {node (z65) {}}
			}
		}
		child { node (z66) {}}
	;
\end{scope}
 	\draw (z43) to (z60);
 	\foreach \z in {0,1,2}{
 		\onpathdeco{z6\z}{\z}
 	}

\node[fit = (z60) (z61) (z62) (z63) (z64) (z65) (z66), inner sep = .5em] (b6) {};
\path (b6.north) edge[out = 180, in = 90] ($(b6.220)+(-0.5em,0)$)
		  ($(b6.220) + (-0.5em,0)$) edge[out = 270, in = 180] (b6.260)
		  (b6.260) edge[out = 0, in = 270] (b6.315)
		  (b6.315) edge[out = 90, in = 0] (b6.north);
\node[xshift = 0em] at (b6.50) {$\zeta_6$};

\node[yshift = 1.2em] (p1) at (b1.64) {$\varphi_2(x,y_1,y_2)$};
\node[yshift = -1.4em,xshift = -2em] (p2) at (b2.240){$\varphi_1(x, y_1)$};
\node[xshift = -3.25em] (p3) at (b3.200) {$\varphi_0(x)$};
\node[xshift = -3em] (p5) at (b5.200) {$\varphi_0(x)$};
\node[yshift = -1em] (p4) at (b4.south) {$\varphi_2(x,y_1,y_2)$};
\node[xshift = -2em] (p6) at (b6.200) {$\varphi_0(x)$};

\draw[flow,bend right] (z10) to (p1.220);
\draw[flow,bend left] (p1.274) to (z21);
\draw[flow,bend right] (p1.335) to[out = 360, in = 210] (z42);

\draw[flow,bend right] (z21.200) to (p2.92);
\draw[flow, bend right] (p2.318) to[out = 330, in = 170] (z31.100);
\draw[flow, bend right] (z31) to[out = 30, in = 240] (p3.55);
\draw[flow, bend right] (z50) to[out = 350, in = 220] (p5.55);
\draw[flow, bend right] (z61) to[out = 20, in = 220] (p6.55);
\draw[flow, bend right] (z42) to[out = 350, in = 110] ($(p4.130) + (0,-.4em)$);

\draw[flow, bend right] (p4.290) to[out = 80, in = 170] (z50);
\draw[flow, bend right] (p4.335) to[out = 330, in = 170] (z61.100);
\end{tikzpicture}
}
\end{center}
\caption{\label{fig:slices-phi} An example of slices, representation of states by positions, and
formulas which handle the state behaviour of $\cal A$.}
\end{figure}

For the other inclusion $\BTC(\FOmod) \subseteq \MSO$ we 
simulate every formula  $\BTC(\Phi)$ by a particular $\MSO$-formula of the form
$\exists X. \forall x. \theta(X,x)$ where $\theta(X,x) \in {\cal L}$ and $\exists X$ is the only  second-order quantification (as it
was done in  \cite{bolgasmonzei10} for strings).

In fact, we prove our characterization result in a more general
setting, viz. for weighted MSO logics over semirings \cite{drogas05,drogas07,drogas09}. There, the expression $\xi \models
\varphi$ does not have a Boolean value, indicating whether $\xi$
is a model of $\varphi$ or not; rather, this expression takes a value
in some given semiring \cite{gol99}. The progress down the tree guarantees that no infinite summations occur in the definition of the
semantics of the operator $\BTC$. If the Boolean semiring (with
disjunction and conjunction) is employed, then the classical, unweighted case is
reobtained.  If the semiring of natural numbers is employed, then $\xi \models \varphi$ can be understood as the number of proofs for the claim that $\xi$ is a model of $\varphi$ assuming that atomic formulas have the weight 1 (cf. Example II of \cite{drogas09} where the reader can find more motivating examples). The investigation of many-valued logics has a long tradition. Already ~{\L}ukasiewicz \cite{luk20} and Post \cite{pos21} investigated logics with different degrees of certainty; Birkhoff and von Neumann \cite{birneu36} introduced quantum logics with values in orthomodular lattices as logics of quantum mechanics. In the spirit of \cite{drogas09}, MSO over arbitrary bounded lattices were considered in \cite{drovog12}.

For the proof of our main results, we represent a weighted MSO formula by a weighted tree automaton. This is possible because weighted MSO logics over (commutative) semirings is equivalent
to weighted tree automata \cite{drovog06}. Weighted string automata and weighted tree automata have a rich theory \cite{eil74,salsoi78,wec78,berreu82,kuisal86,sak09,drokuivog09} and they are applied in different areas, like modelling and analysis of weighted distributed systems \cite{fickusmei09}, digital image compression \cite{albkar09}, and natural language processing \cite{moh09,knimay09}.

In our main result (Theorem \ref{main}) we generalize 
\cite[Thm. 10]{bolgasmonzei10} to recognizable weighted tree languages, but
only for commutative semirings. We prove  that the expressive powers of recognizability and of the  logics
\begin{center}
(i)~weighted RMSO\; (ii)~$\BTC({\cal L}_{\mathrm{step}})$, (iii)~~$\exists\forall({\cal L}_{\mathrm{step}})$,
\end{center}
are equivalent, where 
\begin{compactitem}
\item $\RMSO$ stands for restricted $\MSO$,
\item ${\cal L}_{\mathrm{step}}$ is the logic which contains all ${\cal L}$-step formulas (cf.  \cite[Equ. (1)]{bolgasmonzei10}) where ${\cal L} \in \{\bFOmod, \bMSO\}$ and $\bFO$ and $\bMSO$ are the Boolean fragments of FO and MSO, resp., and mod allows modulo counting,  and
\item $\exists\forall({\cal L}_{\mathrm{step}})$ is the logic which contains all formulas of the form $\exists X. \forall y. \varphi$ where  $\varphi~\in~{\cal L}_{\mathrm{step}}$.
\end{compactitem}

The handling of the weights is done in the same way as in
\cite{bolgasmonzei10}, from which we borrow several notations and notions; also we follow their lines of argumentation.
 However, the switch from strings to trees created two technical
 difficulties: (1) the appropriate splitting of an input tree into
 slices and (2) the unique representation of states in a slice in
 order to avoid counting a state behaviour too often. We employed the
 $\bFO$-formulas  $\formcut$ and $\onlmp$, respectively, for handling
 these difficulties (cf. Section \ref{sec:construction-of-Phi}). Moreover, we use \cite[Prop. 18]{mal06a}
 (cf. Lemma \ref{lm:decomp}) for the fact that  the state-value behaviour of a weighted tree
 automaton on an input tree $\xi$ induces a
 state-value behaviour on the slices of $\xi$. This needs the
 commutativity of the semiring multiplication.

In Section 2 we recall general notations on trees, the definitions of weighted tree automata and (fragments of) weighted MSO. In Section 3 we introduce our branching transitive closure operator and illustrate it by means of an example. Section 4 shows the main result of this paper (cf. Theorem \ref{main}); its proof uses results which are proved in Sections  5, 6, and 7. We conclude in Section 8 by indicating some open problems.

We will use a number of macros in weighted MSO, and we introduce them at the places where they are needed first time. For the convenience of the reader we have collected all the macros in an appendix.  

We have tried to make the paper self-contained concerning the formal definitions. This implies that the preliminaries contain the list of all the logics used in this paper. The experienced reader can skip this subsection upon first reading and consult if necessary. 

\section{Preliminaries}
\label{prel}

\subsection{General Notation}

Let $\mathbb{N}$ denote the set $\{0,1,2,\ldots\}$ of natural numbers; let $\mathbb{N}_+ = \mathbb{N} \setminus \{0\}$.
The cardinality of a set $A$ is denoted by $|A|$. Frequently we abbreviate a tuple $(a_1,\ldots, a_k)\in A^k$ by $a_1\ldots a_k$.
Note that $A^0=\{(\,)\}$, and we abbreviate $(\,)$ by $\varepsilon$.

An alphabet is a non-empty and finite set $\Delta$. The set of 
strings over $\Delta$ is denoted by $\Delta^*$. The empty string is denoted by $\varepsilon$ and the length of $w \in \Delta^*$ is denoted by $|w|$.

\subsection{Trees}

We use the usual notions and notations concerning trees, cf., e.g.,
\cite{fulvog09}. For a ranked alphabet $\Sigma$ we denote by
$\Sigma^{(k)}$ the set of symbols of $\Sigma$ having rank $k$ and by
$\mathrm{maxrk}(\Sigma)$ the maximal rank of symbols in $\Sigma$. The
set of $\Sigma$-trees indexed by some set $A$ is denoted by
$T_\Sigma(A)$. In case $A=\emptyset$ we write $T_\S$ for
$T_\Sigma(A)$. Given a tree $\xi \in T_\Sigma(A)$, we denote the set
of its positions by $\pos(\xi) \subseteq \nat_+^*$ (using the usual
Gorn-notation). The usual prefix ordering on $\pos(\xi)$ is denoted by $\leq$.
 We abbreviate a sequence $v_1,\ldots,v_k$ of positions
by $v_{1,k}$. 
For every $\xi \in T_\Sigma(A)$ and $w \in \pos(\xi)$, we denote the
label of $\xi$ at $w$ by $\xi(w)$ and the subtree of $\xi$ at $w$ by
$\xi|_w$. For any set $B \subseteq A \cup \Sigma$, we denote by
$\pos_B(\xi)$ the set $\{ w\in \pos(\xi)\mid \xi(w) \in B\}$. If,
additionally, $\zeta \in T_\Sigma(A)$, then $\xi[\zeta]_w$ denotes the tree obtained from $\xi$ by replacing the subtree at $w$ by $\zeta$.

If $A$ is finite, then we can define the ranked alphabet $(\Sigma \cup A,\rk)$ by $\rk(a) = 0$ for every $a \in A$ and $\rk(\sigma) = \rk_\Sigma(\sigma)$ for every $\sigma \in \Sigma$.

We define the \emph{height}~$\height(\xi)$ and the \emph{size}~$\size(\xi)$  of a tree $\xi \in T_\Sigma(A)$ recursively as follows.  For
every $\alpha \in A \cup \Sigma^{(0)}$, let $\height(\alpha) = 0$ and $\size(\alpha) = 1$, and for
every $\sigma \in \Sigma^{(k)}$ with $k \ge 1$ and $\xi_1,\ldots,\xi_k \in T_\Sigma$, let $\height(\sigma(\xi_1,\ldots,\xi_k)) = 1 + \max\{\height(\xi_i) \mid 1 \leq i \leq k \}$  and $\size(\sigma(\xi_1,\ldots,\xi_k)) = 1 + \sum_{i=1}^k \size(\xi_i)$. In fact, $\size(\xi) = |\pos(\xi)|$ for every $\xi \in T_\Sigma$.

A ranked alphabet $\Sigma$ is {\em monadic} if $\Sigma = \Sigma^{(1)} \cup \Sigma^{(0)}$ and $\Sigma^{(0)}~=~\{e\}$ is a singleton. For such a $\Sigma$  there is an obvious  bijection from $T_\Sigma$ to $(\Sigma^{(1)})^*$ which transforms monadic trees into strings.

\subsection{Weighted Tree Languages and Weighted Tree Automata}

A commutative semiring is an algebra $(S,+,\cdot,0,1)$ where $(S,+,0)$ and $(S,\cdot,1)$ are commutative monoids, $\cdot$ distributes over $+$, and $0$ is absorbing with respect to $\cdot$. As usual, we abbreviate $(S,+,\cdot,0,1)$ by $S$.
\begin{quote}\it In this paper, $S$ will always denote an arbitrary commutative semiring. 
\end{quote}
For more details on semirings we refer the reader to \cite{hebwei98,gol99}. A {\em weighted tree language} is a mapping $r: T_\Sigma \rightarrow S$ for
some ranked alphabet $\Sigma$. In particular, for
every tree language $L \subseteq T_\Sigma$, we denote by
$\mathds{1}_L$ the weighted tree language $\mathds{1}_L: T_\Sigma
\rightarrow S$ with $\mathds{1}_L(\xi) = 1$ for every $\xi \in L$,
and $0$ otherwise. We call $\mathds{1}_L$ the {\em characteristic
  weighted tree language of $L$}. A {\em recognizable step function} \cite{drogas05,drogas07,drogas09}
is a weighted tree language $r: T_\Sigma \rightarrow S$ such that
there are $n \ge 0$, recognizable tree languages $L_1,\ldots,L_n$ over
$\Sigma$ \cite{gecste84,gecste97}, and coefficients $a_1,\ldots,a_n$ in $S$ such that $r = \sum_{i=1}^n a_i \cdot \mathds{1}_{L_i}$.

We recall the concepts of weighted tree automata from \cite{fulvog09}. 
A \emph{weighted tree automaton
  over~$S$} (wta) is a tuple ${\cal A} = (Q, \Sigma, \delta,F)$ where
 $Q$~is a finite, nonempty  set (of \emph{states}),
 $\Sigma$~is a ranked alphabet (of \emph{input symbols}),
$F \subseteq Q$~is a set of \emph{final states}, and
 $\delta$~is a family~$(\delta_k \mid k \in \nat)$ of \emph{weighted
    transitions} with $\delta_k \colon Q^k \times \Sigma^{(k)} \times Q
  \to S$ for every $k \in \nat$.

Let $k \in \nat$. We define $Z_k = \{z_1,\ldots,z_k\}$ (hence $Z_0 = \emptyset$) and,
for every $q_1,\ldots, q_k \in Q$, we define the mapping $h^{q_1 \ldots q_k} \colon T_\Sigma(Z_k) \to S^Q$ recursively as follows.  For every $q \in Q$ and
\begin{itemize}
\item for every $z_i \in Z_k$ we have
$h^{q_1 \ldots q_k}(z_i)_q = 1$ if $q = q_i$, and $0$ otherwise, and
\item for every $\sigma \in \Sigma^{(l)}$ with $l\ge 0$ and $\seq \xi1l \in
T_\Sigma(Z_k)$ we have
\[ h^{q_1 \ldots q_k}(\sigma(\seq \xi1l))_q = \sum_{p_1 \ldots p_l  \in Q^l}
\delta_l(p_1 \ldots p_l, \sigma, q) \cdot \prod_{i = 1}^k
h^{q_1 \ldots q_k}(\xi_i)_{p_i} \enspace. \]
\end{itemize}
We abbreviate $h^\varepsilon$ by $h$.
The weighted tree language \emph{recognized} by~$\cal A$, denoted also by~$\cal
A$, is the mapping $r_{\cal A} \colon T_\Sigma \to S$ defined for every
$\xi \in T_\Sigma$ by
\[ r_{\cal A}(\xi) = \sum_{q \in F} h(\xi)_q \enspace. \] A weighted tree language~$r: T_\Sigma \rightarrow S$ is \emph{recognizable} if
there exists a wta~${\cal A}$ such that $r_{\cal A} = r$. 

We note that a wta over a monadic ranked alphabet of input symbols is equivalent to an initial state normalized weighted automaton (as, e.g., used in \cite{bolgasmonzei10}). For a more detailed discussion about this special case we refer to  \cite[p.324]{fulvog09}.

We call a wta ${\cal A} = (Q, \Sigma, \delta,F)$ {\em final state normalized} if $|F| = 1$. 

\begin{lm}\cite[Thm.3.6]{fulvog09} \rm\label{lm:wta-normal}  For every wta there is an equivalent wta which is final state normalized.
\end{lm}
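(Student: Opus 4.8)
The goal is to show that for every wta $\mathcal{A} = (Q, \Sigma, \delta, F)$ there is an equivalent wta $\mathcal{A}'$ with a single final state, i.e., $|F'| = 1$. The plan is to introduce one fresh state $q_f \notin Q$ that will serve as the unique new final state, and to route into $q_f$ exactly the weight that $\mathcal{A}$ would have collected by ending in any of its original final states. Concretely, I would set $Q' = Q \cup \{q_f\}$, $F' = \{q_f\}$, and keep all of the original transitions of $\mathcal{A}$ intact so that the behaviour $h$ of $\mathcal{A}$ is preserved on the old states; the only thing the new state must do is mimic, at the root, the summation $\sum_{q \in F} h(\xi)_q$ that defines $r_{\mathcal{A}}$.

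First I would define the new transition family $\delta'$ as follows. On all arguments built only from old states, $\delta'$ agrees with $\delta$, so for every $\xi \in T_\Sigma$ and every $q \in Q$ we have $h'(\xi)_q = h(\xi)_q$ (a routine induction on the structure of $\xi$, using that no transition of $\mathcal{A}$ targets or reads $q_f$). Next, for each $\sigma \in \Sigma^{(k)}$ and each $p_1 \ldots p_k \in Q^k$, I would add a transition into $q_f$ with weight $\delta'_k(p_1 \ldots p_k, \sigma, q_f) = \sum_{q \in F} \delta_k(p_1 \ldots p_k, \sigma, q)$, and I would give weight $0$ to every transition that reads $q_f$ in any child position. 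The idea is that a run of $\mathcal{A}'$ ending in $q_f$ at the root is, below the root, an ordinary run of $\mathcal{A}$, and the single top transition into $q_f$ lumps together all the choices of a final state that $r_{\mathcal{A}}$ sums over.

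The verification step is to compute $h'(\xi)_{q_f}$ for an arbitrary tree $\xi = \sigma(\xi_1, \ldots, \xi_k)$ and check it equals $r_{\mathcal{A}}(\xi)$. Unfolding the definition of $h'$ at the root,
\[
h'(\xi)_{q_f} = \sum_{p_1 \ldots p_k \in Q^k} \delta'_k(p_1 \ldots p_k, \sigma, q_f) \cdot \prod_{i=1}^k h'(\xi_i)_{p_i},
\]
and substituting the definition of $\delta'_k(\cdots, q_f)$ together with $h'(\xi_i)_{p_i} = h(\xi_i)_{p_i}$ gives
\[
h'(\xi)_{q_f} = \sum_{p_1 \ldots p_k} \Bigl(\sum_{q \in F} \delta_k(p_1 \ldots p_k, \sigma, q)\Bigr) \prod_{i=1}^k h(\xi_i)_{p_i} = \sum_{q \in F} h(\xi)_q = r_{\mathcal{A}}(\xi).
\]
Since $F' = \{q_f\}$, we get $r_{\mathcal{A}'}(\xi) = h'(\xi)_{q_f} = r_{\mathcal{A}}(\xi)$, as required.

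I do not expect a genuine obstacle here; the construction is a standard root-normalization. The only point needing a little care is bookkeeping: one must ensure that $q_f$ never feeds back into the computation (hence the zero weights on transitions reading $q_f$), so that $q_f$ can only appear at the root of an accepting run and the preservation lemma $h'(\xi)_q = h(\xi)_q$ for $q \in Q$ goes through cleanly. This uses distributivity of $\cdot$ over $+$ in the semiring to pull the sum over $F$ out past the product, but requires no commutativity and no other special assumption.
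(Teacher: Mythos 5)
Your construction is correct, and it is essentially the standard root-normalization argument: the paper itself gives no proof but cites \cite[Thm.~3.6]{fulvog09}, where exactly this construction (a fresh sink final state $q_f$ whose incoming transition weights aggregate the transitions into the original final states, with $q_f$ barred from child positions) is used. Your verification via distributivity, and the remark that no commutativity is needed, are both accurate.
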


\subsection{Weighted Logics}

\paragraph{The Weighted MSO-Logic:}

Here we recall the weighted MSO-logic on trees which we will use in this paper. This weighted logic has its origin in \cite{drogas05,drogas07,drogas09} where it was defined for strings. It has been extended to trees in \cite{drovog06,fulvog09,drovog10a}. We present it in the form of \cite{bolgasmonzei10}.

 As usual in MSO-logic, we use first-order variables, like $x,x_1,x_2,\ldots, y,z$ and second-order variables, like $X,X_1,X_2, \ldots,Y,Z$.

We define the set of {\em weighted  MSO-logic formulas over $\Sigma$ and $S$}, denoted by  $\MSO(\Sigma,S)$ (or shortly: $\MSO$), to be the set of formulas generated by the following EBNF with nonterminal $\varphi$:
\begin{align*}
\varphi & ::=  a \mid
  \lab_\sigma(x)\mid \edge_i(x,y)\mid x\le y\mid x \in X \mid \\
& \hspace{6mm} \neg \varphi \mid \varphi \vee \varphi \mid \varphi \wedge
\varphi \mid \exists x.\varphi \mid \forall x.\varphi \mid \exists X.\varphi \mid \forall X.\varphi,
\end{align*}
\noindent where $a \in S$, $x,y$ are first-order variables, $\sigma \in \Sigma$, $1 \le i \le \maxrk(\Sigma)$, and
$X$ is a second-order variable. We will abbreviate a sequence $\exists x_1 \ldots \exists x_k$ of quantifications by $\exists x_{1,k}$. The set of free variables of a formula $\varphi$ is denoted by $\Free(\varphi)$. The formula $\varphi$ is called {\em sentence} if  $\Free(\varphi)=\emptyset$.
Often we indicate the free variables of a formula explicitly. For
instance, if a formula $\varphi$ has the free variables $x$, $y$, and
$z$, then we denote this fact by $\varphi(x,y,z)$. If $x_1,\ldots,x_k$
are the free variables of some formula $\psi$, then we write
$\psi(x_{1,k})$, and accordingly for other sequences of variables.

As usual in logics, we deal with free variables of a formula by means
of variable assignments. In the following we collect the most important notations and refer 
the reader to \cite{drogas05,drogas07,drogas09,drovog06,fulvog09,drovog10a,bolgasmonzei10}
for details. 

Let $\xi \in T_\Sigma$. For a finite set $\mathcal V$ of first-order and second-order
variables we denote a \emph{$\mathcal V$-assignment for $\xi$} by
$\rho$. For any  position $w \in \pos(\xi)$ and set $W \subseteq
\pos(\xi)$, we denote the $x$- and $X$-update of $\rho$ by
$\rho[x\mapsto w]$ and $\rho[X\mapsto W]$, respectively. 

In the usual way, we can encode a pair $(\xi,\rho)$, where $\rho$ is a
$\cal V$-assignment for $\xi$, as a tree $\zeta$ over the ranked alphabet
$\Sigma_\mathcal V$ with  $\Sigma_\mathcal V^{(k)}=\Sigma^{(k)}\times
\mathcal P(\mathcal V)$ for every $k\in\nat$.
	A tree $\zeta\in T_{\Sigma_\mathcal V}$ is called \emph{valid}
        if for every first-order variable $x\in \mathcal V$ there is a
        unique $w\in\pos(\zeta)$ such that $x$ occurs in the second
        component of $\zeta(w)$. We denote the set of all valid trees in $T_{\Sigma_\mathcal V}$ by $T_{\Sigma_\mathcal V}^\mathrm v$.

Let $\varphi \in \MSO$ and $\mathcal{V}$ be a finite set of
  variables containing $\Free(\varphi)$. The {\it semantics}
  of $\varphi$ is the weighted tree language $\seml \varphi
  \semr_\mathcal{V}: T_{\Sigma_\mathcal{V}} \rightarrow S$ defined as follows. If $\zeta \in  T_{\Sigma_\mathcal{V}}$ is not valid, then we put $\seml \varphi
  \semr_\mathcal{V}(\zeta) = 0$. Otherwise, we define $\seml \varphi
  \semr_\mathcal{V}(\zeta) \in S$ inductively as follows where $(\xi,\rho)$
  corresponds to $\zeta$.

{\small
\[
\begin{array}{ll}
\seml a \semr_\mathcal{V}(\zeta) = a &
\seml \varphi \vee \psi \semr_\mathcal{V}(\zeta)  = \seml \varphi \semr_\mathcal{V}(\zeta) + \seml \psi \semr_\mathcal{V}(\zeta)\\

\seml \mathrm{label}_\sigma(x) \semr_\mathcal{V}(\zeta)  = 
\left\{\begin{array}{ll}
1 & \hbox{if } \xi(\rho(x)) = \sigma,\\
0 & \hbox{otherwise}
\end{array}
\right.  &
\seml \varphi \wedge \psi \semr_\mathcal{V}(\zeta)  = \seml \varphi \semr_\mathcal{V}(\zeta) \cdot \seml \psi \semr_\mathcal{V}(\zeta)\\

\seml \mathrm{edge}_i(x,y) \semr_\mathcal{V}(\zeta)  = 
\left\{\begin{array}{ll}
1 & \hbox{if } \rho(y) =  \rho(x).i,\\
0 & \hbox{otherwise}
\end{array}
\right. &
\seml \exists x. \varphi  \semr_\mathcal{V}(\zeta)  = \hspace{-3mm} \sum\limits_{w \in \mathrm{pos}(\zeta)} \seml \varphi \semr_{\mathcal{V}\cup\{x\}}(\zeta[x  \mapsto w])\\

\seml x \le y \semr_\mathcal{V}(\zeta)  = 
\left\{\begin{array}{ll}
1 & \hbox{if }  \rho(x) \le \rho(y)\\
0 & \hbox{otherwise}
\end{array}
\right. &
\seml \exists X. \varphi  \semr_\mathcal{V}(\zeta)  = \hspace{-3mm}\sum\limits_{I\subseteq \mathrm{pos}(\zeta)} \seml \varphi \semr_{\mathcal{V}\cup\{X\}}(\zeta[X \mapsto I])\\

\seml x \in X \semr_\mathcal{V}(\zeta)  = 
\left\{\begin{array}{ll}
1 & \hbox{if } \rho(x) \in \rho(X),\\
0 & \hbox{otherwise}
\end{array}
\right. & 
	\seml \forall x. \varphi  \semr_\mathcal{V}(\zeta)  = \hspace{-3mm}\prod\limits_{w \in \mathrm{pos}(\zeta)} \seml \varphi \semr_{\mathcal{V}\cup\{x\}}(\zeta[x
	\mapsto w])\\

\seml \neg \varphi \semr_\mathcal{V}(\zeta) =
\left\{\begin{array}{ll}
1 & \hbox{ if } \seml \varphi \semr_\mathcal{V}(\zeta) = 0,\\
0 & \hbox{ otherwise }
\end{array}
\right. &
	\seml \forall X. \varphi  \semr_\mathcal{V}(\zeta)  = \hspace{-3mm}\prod\limits_{I \subseteq \mathrm{pos}(\zeta)} \seml \varphi \semr_{\mathcal{V}\cup\{X\}}(\zeta[X\mapsto I])
\end{array}
\]
}
The order of the factors in the product over $\pos(\zeta)$ is
arbitrary because $S$ is a commutative semiring. Let $\varphi$ be a formula with free variables $x_1,\ldots, x_n$, $\xi \in T_\Sigma$, and $\rho$ a $\Free(\varphi)$-assignment for $\xi$ such that $\rho(x_i)=u_i$ for $1\leq i \leq n$. Then we denote
the semiring element $\seml \varphi\semr(\zeta)$ by $\seml \varphi\semr(\xi,u_1,\ldots,u_n)$, where $\zeta=(\xi,\rho)$.

We abbreviate $\seml \varphi \semr_{\mathrm{Free}(\varphi)}$ by $\seml \varphi \semr$. We say that two formulas $\varphi$ and  $\psi$ with the same set of free variables are {\em equivalent}, and write $\varphi \equiv \psi$, if $\seml \varphi \semr = \seml \psi\semr$.
For any ${\cal L} \subseteq \MSO$, a weighted tree language $r:~T_\Sigma~\rightarrow~S$ is called {\em $\cal L$-definable} if there is a sentence $\varphi \in \cal L$ such that $r = \seml \varphi\semr$. 

A formula $\varphi$ is called {\em Boolean-valued} if $\{ \seml \varphi \semr_{\mathcal V}(\zeta) \mid \zeta  \in T_{\Sigma_{\cal V}}\}\subseteq \{0,1\}$ for every $\mathcal V$ containing $\Free(\varphi)$.  If $\seml \varphi\semr(\xi,u_1,\ldots,u_n)=1$ for some $\xi \in T_\Sigma$, and $u_1,\ldots, u_n\in \pos(\xi)$, then we abbreviate this fact by writing that
``$\varphi^\xi(u_1,\ldots,u_n)$ holds" or ``we have $\varphi^\xi(u_1,\ldots,u_n)$'' or just ``$\varphi^\xi(u_1,\ldots,u_n)$''.

For every $\varphi, \psi \in \MSO$, we define the macro $\varphi
\stackrel{+}{\rightarrow} \psi:= \neg \varphi \vee (\varphi \wedge
\psi)$. Then, for every $\mathcal V$ containing $\Free(\varphi)\cup \Free(\psi)$ and $\zeta  \in T_{\Sigma_{\cal V}}$, we have $\seml \varphi \stackrel{+}{\rightarrow} \psi \semr_{\cal V}(\zeta) =0$ if
$\zeta$ is not valid. If $\zeta$ is valid and $\varphi$ is
Boolean-valued, then  we have  that \label{p:+}
\[
\seml\varphi \stackrel{+}{\rightarrow} \psi\semr (\zeta) =
\left\{
\begin{array}{ll}
\seml\psi\semr(\zeta) & \hbox{ if } \seml \varphi\semr(\zeta) = 1\\
1 & \hbox{ if } \seml \varphi\semr(\zeta) = 0 \enspace.
\end{array}
\right.
\]
Clearly, if $\varphi$ and $\psi$ are Boolean-valued, then $\varphi \stackrel{+}{\rightarrow} \psi$ is Boolean-valued.

\paragraph*{The Boolean Fragment $\bMSO$:}

Next we define the Boolean fragment of $\MSO$  according to \cite{bolgasmonzei10}. The {\em Boolean fragment of $\MSO$}, denoted by $\bMSO$, is the set of all formulas generated by the EBNF
\begin{align*}
\varphi & ::=  0  \mid  1 \mid
  \lab_\sigma(x)\mid \edge_i(x,y)\mid x\le y\mid x \in X \mid  \neg \varphi \mid \varphi \wedge
\varphi \mid \forall x.\varphi \mid \forall X.\varphi\enspace.
\end{align*}
Clearly, every $\varphi \in \bMSO$ is Boolean-valued.

In $\bMSO$ we define the following macros: for every $\varphi, \psi \in \bMSO$ we let \label{p:B}
\[
 \varphi \underline{\vee} \psi := \neg(\neg \varphi \wedge \neg \psi), 
 \underline{\exists} x. \varphi := \neg \forall x. \neg \varphi, \text{ and }
\underline{\exists} X. \varphi := \neg \forall X. \neg \varphi. 
\]
Note that $\varphi\stackrel{+}{\rightarrow} \psi\equiv \neg \varphi \underline{\vee} (\varphi \wedge\psi)$.

\begin{ob}\label{ob:BMSO-rec} The semantics of any $\bMSO$-formula has the form $\mathds{1}_L$ where $L$ is a recognizable tree language.
\end{ob}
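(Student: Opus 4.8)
The plan is to show that over an arbitrary commutative semiring $S$ a $\bMSO$-formula computes nothing more than the characteristic function of the tree language it defines when read as a classical (unweighted) MSO-formula, and then to invoke the Thatcher--Wright/Doner theorem \cite{thawri68,don70} together with the recognizability of characteristic series. Concretely, fix $\varphi \in \bMSO$ and a finite set $\mathcal{V} \supseteq \Free(\varphi)$, and put
\[
L \;=\; \{\zeta \in T_{\Sigma_\mathcal{V}} \mid \seml \varphi \semr_\mathcal{V}(\zeta) = 1\} \enspace .
\]
Since every $\varphi \in \bMSO$ is Boolean-valued (as already observed), the task splits into two independent claims: (a) $\seml \varphi \semr_\mathcal{V} = \mathds{1}_L$, and (b) $L$ is a recognizable tree language over $\Sigma_\mathcal{V}$. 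Claim (b) then feeds Lemma 3.3(1) of \cite{drovog06}, which guarantees that the characteristic weighted tree language $\mathds{1}_L$ is recognizable.

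For claim (a), the key point --- and the only place where the syntactic restriction of $\bMSO$ is essential --- is that the semantics over $S$ coincides with the semantics over $\Bo$. I would prove this by structural induction on $\varphi$, simultaneously re-establishing Boolean-valuedness. The atoms $0$, $1$, $\lab_\sigma(x)$, $\edge_i(x,y)$, $x \preceq y$, $x \in X$ already take values in $\{0,1\}$ and are interpreted identically in $S$ and in $\Bo$. In the inductive step the only available connectives are $\neg$, $\wedge$, $\forall x$, and $\forall X$: on Boolean values $\neg$ acts as Boolean complement by definition; the product $\seml \varphi \semr \cdot \seml \psi \semr$ interpreting $\wedge$ returns $1$ iff both factors equal $1$, since $1 \cdot 1 = 1$ and $0$ is absorbing, so it agrees with conjunction; and the products over $\pos(\zeta)$ and over all $I \subseteq \pos(\zeta)$ interpreting $\forall x$ and $\forall X$ return $1$ iff every factor is $1$ --- again because $0$ is absorbing --- so they agree with universal quantification. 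Hence each construct both preserves $\{0,1\}$-valuedness and matches the interpretation one would obtain in $\Bo$, and the induction closes. In particular $\seml \varphi \semr_\mathcal{V}(\zeta) = 1$ exactly when $\zeta$ is a model of $\varphi$ read as a classical MSO-formula, which is claim (a).

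For claim (b), reading $\varphi$ as a classical MSO-formula over the ranked alphabet $\Sigma_\mathcal{V}$, the set $L$ is precisely the MSO-definable language of its models; by the Thatcher--Wright/Doner theorem \cite{thawri68,don70} it is recognizable. (If one prefers to restrict to valid trees, one intersects with $T_{\Sigma_\mathcal{V}}^{\mathrm v}$, which is itself MSO-definable, hence recognizable, and recognizable tree languages are closed under intersection.) Combining (a) and (b), $\seml \varphi \semr_\mathcal{V} = \mathds{1}_L$ with $L$ recognizable, and Lemma 3.3(1) of \cite{drovog06} then yields that $\mathds{1}_L$ is recognizable, as required. I do not expect any genuine obstacle: the whole content is the semiring-to-Boolean collapse in claim (a), which is routine once one notices that $\bMSO$ deliberately omits exactly the genuinely weighted constructs --- the constants $a \in S$, the $+$-valued disjunction, and the sum-valued existential quantifiers --- whose presence would destroy $\{0,1\}$-valuedness.
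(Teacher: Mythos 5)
Your proposal is correct and follows essentially the same route as the paper: collapse the $S$-semantics of a $\bMSO$-formula to its Boolean semantics (the paper cites \cite{bolgasmonzei10} for this coincidence, which you instead verify by the routine structural induction), identify the result as $\mathds{1}_L$ for the classically MSO-definable, hence recognizable, tree language $L$, and conclude via Lemma 3.3(1) of \cite{drovog06}. The extra care you take with validity and with spelling out the induction is fine but does not change the argument.
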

\begin{proof} Let $\varphi$ be a $\bMSO$-formula. Then 
$\varphi$ can be considered as a classical (unweighted) MSO-formula for
trees. If $\varphi$ does not contain the atomic formula $(x \le y)$,
then we obtain the result directly from Lemma 3.3(1) of
\cite{drovog06}. 

Now let $\varphi = (x \le y)$. Then it is easy to construct a wta
which checks the validity of the input tree; moreover, it checks
whether the position labeled by $x$ is a prefix of the $y$-labeled
position. It can perform the latter task  by switching into an alert state while
reading the $y$-labeled position, propagating the alert state, and
switching to the final state while reading the $x$-labeled position. 
\end{proof}

\paragraph*{${\cal L}$-step Formulas:}

Let ${\cal L} \subseteq \bMSO$ be closed under $\wedge$ and $\neg$. According to \cite{bolgasmonzei10},  the set of {\em ${\cal L}$-step formulas}, denoted by ${\cal L}_{\mathrm{step}}$, is the set of all $\MSO$-formulas generated by the EBNF
\begin{align*}
\varphi ::=  a \mid \alpha \mid \neg \varphi \mid \varphi \vee
\varphi \mid \varphi \wedge \varphi \hbox{ with $a \in S$ and $\alpha \in {\cal L}$}
\enspace.
\end{align*}

We will use the following technical result.

\begin{lm}\rm\label{Lemma3} For every  ${\cal L}$-step formula $\varphi$, there are $k \in \nat_+$, $a_1,\ldots,a_k\in S$, and $\varphi_1,\ldots,\varphi_k \in {\cal L}$ such that $\varphi \equiv \bigvee_{1 \le i \le k}(a_i \wedge \varphi_i)$. In particular, the semantics of $\varphi$ is a recognizable step function.
\end{lm}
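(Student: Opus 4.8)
The plan is to establish the normal form by structural induction on the generation of the $\mathcal{L}$-step formula $\varphi$ according to its EBNF, carrying the shape $\varphi \equiv \bigvee_{1\le i\le k}(a_i \wedge \varphi_i)$, with $a_i \in S$ and $\varphi_i \in \mathcal{L}$, as an invariant. Two facts will be used throughout: every $\varphi_i \in \mathcal{L} \subseteq \bMSO$ is Boolean-valued, so $\seml\varphi_i\semr(\zeta) \in \{0,1\}$; and on an invalid $\zeta$ every formula has semantics $0$, so it suffices to compare the two sides on valid trees. I also record at the outset that $\mathcal{L}$ contains formulas $\top$ and $\bot$ equivalent to the constants $1$ and $0$ (for instance $\neg(\alpha\wedge\neg\alpha)$ and $\alpha\wedge\neg\alpha$ for any $\alpha\in\mathcal{L}$), using that $\mathcal{L}$ is nonempty and closed under $\wedge$ and $\neg$.

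The base cases and the positive Boolean connectives are routine. For an atom $a\in S$ I take $\varphi\equiv a\wedge\top$, and for an atom $\alpha\in\mathcal{L}$ I take $\varphi\equiv 1\wedge\alpha$. For $\varphi=\varphi'\vee\psi'$ I simply concatenate the two lists given by the induction hypothesis, which is correct because $\seml\cdot\vee\cdot\semr=\seml\cdot\semr+\seml\cdot\semr$ and $+$ is associative and commutative. For $\varphi=\varphi'\wedge\psi'$ with $\varphi'\equiv\bigvee_i(a_i\wedge\varphi_i)$ and $\psi'\equiv\bigvee_j(b_j\wedge\psi_j)$, I multiply the lists out: using $\seml\cdot\wedge\cdot\semr=\seml\cdot\semr\cdot\seml\cdot\semr$, the distributivity and commutativity of $S$, and the closure of $\mathcal{L}$ under $\wedge$, I obtain $\varphi\equiv\bigvee_{i,j}\big((a_i\cdot b_j)\wedge(\varphi_i\wedge\psi_j)\big)$ with $\varphi_i\wedge\psi_j\in\mathcal{L}$.

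The negation step is the one I expect to be the real obstacle, since in a semiring $\neg$ is a zero-test and whether a partial sum $\sum_{i\in J}a_i$ vanishes is not determined by which $\varphi_i$ hold but also by the coefficients. I would resolve it by passing to the Boolean types of the family $(\varphi_i)_{1\le i\le k}$ coming from $\varphi'\equiv\bigvee_i(a_i\wedge\varphi_i)$: for each $J\subseteq\{1,\dots,k\}$ set $\chi_J:=\bigwedge_{i\in J}\varphi_i\wedge\bigwedge_{i\notin J}\neg\varphi_i$, which lies in $\mathcal{L}$ by closure under $\wedge$ and $\neg$ (with $\top$ for empty conjunctions). On any valid $\zeta$ exactly one $\chi_J$ holds, namely for $J=\{i:\seml\varphi_i\semr(\zeta)=1\}$, and there $\seml\varphi'\semr(\zeta)=\sum_{i\in J}a_i$. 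Hence $\seml\neg\varphi'\semr(\zeta)=1$ exactly when $J\in\mathcal{J}:=\{J:\sum_{i\in J}a_i=0\}$, which yields $\neg\varphi'\equiv\bigvee_{J\in\mathcal{J}}(1\wedge\chi_J)$; in the degenerate case $\mathcal{J}=\emptyset$ I write $\neg\varphi'\equiv 0\wedge\top$ so that $k\ge 1$ is preserved.

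Finally, the ``in particular'' clause follows at once from the normal form. By the semantics of $\vee$ and $\wedge$ one has $\seml\varphi\semr=\sum_{i=1}^k a_i\cdot\seml\varphi_i\semr$, and since each $\varphi_i\in\mathcal{L}\subseteq\bMSO$ there is a recognizable tree language $L_i$ with $\seml\varphi_i\semr=\mathds{1}_{L_i}$ (Observation \ref{ob:BMSO-rec}). Therefore $\seml\varphi\semr=\sum_{i=1}^k a_i\cdot\mathds{1}_{L_i}$ is, by definition, a recognizable step function.
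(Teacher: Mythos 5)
Your proof is correct and takes essentially the same route as the paper: the paper settles the normal form by citing \cite[Lm.~3]{bolgasmonzei10} as an ``easy adaptation'' to trees, and your structural induction---with the Boolean-type formulas $\chi_J$ resolving the negation case, which is the only non-routine step---is exactly that adaptation written out in full. Your derivation of the ``in particular'' clause from the normal form and Observation~\ref{ob:BMSO-rec} coincides with the paper's argument.
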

\begin{proof} The first statement can be proved by an easy adaptation of  \cite[Lm. 3]{bolgasmonzei10}. The second statement follows from the first one, the definition of the semantics of MSO-formulas, and Observation \ref{ob:BMSO-rec}.
\end{proof}

\paragraph*{$\exists\forall({\cal L})$-Formulas:} Let  ${\cal L} \subseteq \MSO$. The fragment $\exists\forall({\cal L})$ consists of all $\MSO$-formulas of the form 
\[
\exists X. \forall x. \theta(X,x),
\]
where $\theta(X,x)\in {\cal L}$ and $\theta$ has the free variables $X$ and $x$. A weighted tree language $r: T_\Sigma \rightarrow S$ is {\em $\exists\forall({\cal L})$-definable} if there is a formula $\theta(X,x) \in {\cal L}$ such that for every $\xi \in T_\Sigma$
\[
r(\xi) = \seml \exists X. \forall x. \theta(X,x)\semr(\xi)\enspace.
\]

\paragraph*{The Fragment $\RMSO$:}

We define the fragment  $\RMSO$ of restricted MSO in the spirit of \cite{gas10,bolgasmonzei10}.  

Formally, the fragment $\RMSO$ is the set of all {\em weighted restricted MSO-formulas} generated by the EBNF:
\begin{align*}
\varphi & ::=  a \mid
  \lab_\sigma(x)\mid \edge_i(x,y)\mid x\le y\mid x \in X \mid \\
& \hspace{6mm} \neg \psi \mid \varphi \vee \varphi \mid \varphi \wedge
\varphi \mid \exists x.\varphi \mid \forall x.\psi \mid \exists X.\varphi \mid \forall X.\chi,
\end{align*}
where $\psi$ is a $\bMSO$-step formula and $\chi$ is a $\bMSO$-formula.

\paragraph*{The Fragments of First-Order Logic $\FO$  and $\bFO$:}
Another fragment of $\MSO$ is the set of {\em weighted first-order formulas over $\Sigma$ and $S$}, denoted by $\FO$, which is the set of all formulas generated by the EBNF
\begin{align*}
\varphi & ::=  a \mid
  \lab_\sigma(x)\mid \edge_i(x,y)\mid x\le y\mid x \in X \mid \neg \varphi \mid \varphi \vee \varphi  \mid \varphi \wedge
\varphi \mid \exists x.\varphi \mid \forall x.\varphi \enspace.
\end{align*}
Note that second-order variables may occur (as free variables).

The fragment $\bFO$ is defined to be the intersection $\bMSO \cap \FO$. That is, $\bFO$ is the set of all formulas generated by the EBNF
\begin{align*}
\varphi & ::=  0 \mid 1 \mid 
  \lab_\sigma(x)\mid \edge_i(x,y)\mid x\le y\mid x \in X \mid  \neg \varphi \mid  \varphi \wedge
\varphi \mid \forall x.\varphi \enspace.
\end{align*}

\paragraph*{The Fragment using Modulo Constraints $\bFOmod$:}

Let $n \in \nat_+$ and $m \in \nat$ such that $0 \le m < n$. We introduce the macro $|x| \equiv_n m$ with $x$ as the only free variable, and its intended meaning is as follows. For every tree $\xi \in T_\Sigma$ and $v \in \pos(\xi)$  we have \label{p:|x|equiv_n}
\begin{equation}
\seml |x| \equiv_n m \semr (\xi, v) =
\left\{
\begin{array}{ll}
1 & \hbox{ if } | v | \equiv m\, (\mbox{\hspace*{-3mm}}\mod n)\\
0 & \hbox{ otherwise,}
\end{array}
\right. \label{equ:modulo}
\end{equation}

We define $|x| \equiv_n m$ by the following $\bMSO$-formula: 
\begin{align*}
& (|x| \equiv_n m) := \\
& \forall X. \left(\left( (x\in X) \wedge \left(\forall y. ((y\in X) \wedge (|y| > n)) \stackrel{+}\rightarrow (y/n \in X)\right)\right)\stackrel{+}\rightarrow (m \in |X|)\right)
\end{align*}
where we use the following macros. For every $n\in \mathbb{N}$, let
\begin{itemize}
\item $(|y| > n) := \underline{\exists} x. \; (x \le_{n+1} y)$,
\item $x \le_n y := \underline{\bigvee}_{w \in \{1,\ldots,\maxrk(\Sigma)\}^n} x\le_w y$, \label{p:le_n}
\item $x\le_w y := \underline{\exists} y_{0,n}. (x=y_0) \wedge
\formpath_w(y_{0,n}) \wedge (y_n=y)$\\ for every $w = w_1 \ldots w_n \in \mathbb{N}^*$ with  $w_i \in \mathbb{N}$, \label{p:le_w}
\item $(x=y) := (x \le y) \wedge (y \le x)$,

\item $\formpath_w(y_{0,n}) := \bigwedge\limits_{1 \le i \le n}
\edge_{w_i}(y_{i-1},y_i)$,\\
($y_0,\ldots,y_n$  form a path via $w_1 \ldots w_n$).

\item $(y/n \in X) := \underline{\exists} x. \; (x\in X) \wedge (x
  \le_n  y)$,

\item $\edge(x,y) := \underline{\bigvee}_{1\leq i \leq \maxrk(\Sigma)} \edge_i(x,y)$,

\item $\rt(x) :=  \forall y. \neg\,\edge(y,x)$,

\item $(m \in |X|) := \underline{\exists} x,y. \; \rt(x) \wedge (x
  \le_m y) \wedge (y \in X)$\enspace.
\end{itemize}
It is easy to see that (\ref{equ:modulo}) holds.

Let us denote by $\bFOmod$ the fragment of $\bMSO$ which we obtain by adding the formula $|x| \equiv_n m$ for every $n \in \nat_+$ and $m \in \nat$  to the list of the alternatives defining the fragment $\bFO$.

\section{Branching Transitive Closure}\label{bTC-section}

In this section we introduce our branching transitive closure operator
$\BTC$, and we define its application $\BTC(\Phi)$ where $\Phi$ is a finite family of formulas of the form $\varphi_k(x,y_1,\ldots,y_k)$ with one free input variable $x$ and $k$ free output variables $y_1,\ldots,y_k$. We require that the formulas of $\Phi$ satisfy a certain progress which is determined by a natural number $n \in \nat_+$. 

The progress is defined in terms of base positions. For every $\xi \in T_\Sigma$ and position $v \in\pos(\xi)$ there is a uniquely determined prefix $u$ of $v$ such that $|u| = i\cdot n$ for some $i \in \mathbb{N}$ and $0 \le |v| - |u| < n$. We call this $u$ the {\em base position of $v$} and denote it by $\langle v\rangle$.

Then, intuitively, positions $u$ and $u_{1,k}$ constitute a progress if (a) for every $1 \le i \le k$, the base positions $\langle u\rangle$ and $\langle u_i\rangle$ have distance $n$ and (b) $\langle u_1\rangle, \ldots, \langle u_k\rangle$ are siblings ordered from left to right (cf. Figure \ref{fig:progress} for $k=2$). Formally, we define the {\em $n$-progress formula} $\psi_{k}(x,y_{1,k})$ for every $k\in \nat$ as follows:
\[
\begin{array}{rcl}
\psi_{k}(x,y_{1,k}) &:=&  \underline{\exists} z,z_{1,k}.
(z=\langle x\rangle_n) \wedge \bigwedge_{1\le i\le k} (z_i=\langle y_i\rangle_n)\\
&& \wedge
\bigwedge_{1 \le i \le k} (z \le_n z_i) \wedge
\bigwedge_{1 \le i \le k-1} \sibl_n(z_i,z_{i+1}),
\end{array}
\]
where \label{p:BTC}
\begin{itemize}
\item $(y=\langle x\rangle_n) := \bigwedge_{0\leq q < n} \big((|x|
  \equiv_n q) \stackrel{+}\rightarrow (y \le_q x)\big)$

\item $\sibl_n(x, y) := 
\underline{\exists} x',y'. \big( \sibl(x', y') \wedge \underline{\bigvee}_{1\leq l\leq n-1}((x'\le_{l} x) \wedge
(y' \le_{l} y))\big)$
\item $\sibl(x, y) := \underline{\exists} z. \underline{\bigvee}_{1\leq i < j \leq \maxrk(\Sigma)} \edge_i(z,x)\wedge \edge_j(z,y)$\\
 ($x$ is a younger sibling of $y$).
\end{itemize}
\begin{figure}
\begin{center}
\includegraphics[scale=0.97]{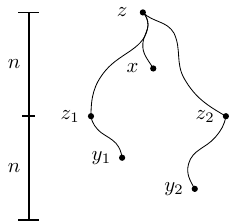}
\end{center}
\caption{\label{fig:progress} Illustration of a progress for $k=2$.}
\end{figure} 
We note that $\psi_{k}(x,y_{1,k})$ is in BFO+mod and it is irreflexive in the following sense.
Since $n \ge 1$ we have that for every $\xi \in T_\Sigma$ and $v,v_{1,k} \in \pos(\xi)$: if $\psi_{k}^\xi(v,v_{1,k})$ holds, then $|v| < |v_i|$ for every $1 \le i \le k$. Note that, in general, $v$ is not a prefix of $v_i$. Finally, we note that  $\psi_{k}^\xi(v,v_{1,k})$ implies that $\psi_{1}^\xi(v,v_i)$ for every $1\le i \le k$.

Let ${\cal L} \subseteq \MSO$, $m \in \mathbb{N}$ and $n \in \mathbb{N}_+$. An {\em $m$-family of  formulas in $\cal L$} is a family 
$$\Phi = (\varphi_k(x,y_{1,k}) \mid 0 \le k \le m),$$ 
where  $\varphi_k(x,y_{1,k})  \in {\cal L}$ for every $0 \le k \le m$. Moreover $\Phi$  is {\em $n$-progressing} if for every $0 \le k \le m$, tree $\xi \in T_\Sigma$, and $v,v_1,...,v_k \in \pos(\xi)$: 
\begin{equation}
\seml\varphi_{k}(x,y_{1,k})\semr(\xi,v,v_{1,k}) \neq 0 \text{ implies } \psi_{k}^\xi(v,v_{1,k})\enspace. \label{equ:progress}
\end{equation}
In particular, $\Psi=(\psi_{k}(x,y_{1,k}) \mid 0 \le k \le m)$ is an $m$-family of $n$-progressing  formulas in BFO+mod. \label{Psi-def}

For the definition of our BTC operator, we need a source of infinitely many fresh first-order variables. Therefore we specify a first-order variable $x_w$ for every $w\in \mathbb{N}_+^*$ such that $w\neq w'$ implies $x_w\neq x_{w'}$. For every $w\in \mathbb{N}_+^*$ and $k\in \mathbb{N}_+$, we abbreviate the sequence $x_{w1},\ldots,x_{wk}$ by $x_{w1,k}$. For $k=0$ we define $x_{w1,k}$ to be the empty sequence.

Now let $\Phi$ be an $m$-family of $n$-progressing formulas in $\cal L$ and $w\in \mathbb{N}_+^*$.  
We  define the family $(\BTC^{l}_w(\Phi) \mid l \ge 1)$ of $\MSO$-formulas by induction as follows:
\begin{enumerate}
\item[(i)] $\BTC^{1}_w(\Phi) = \varphi_0(x_w)$  
\item[(ii)] $\BTC^{l+1}_w(\Phi) = \bigvee\limits_{1 \le k \le m} \exists
  x_{w1,k}. \; 
  \varphi_k(x_w,x_{w1,k}) \wedge\bigvee\limits_{\substack{l_1,...,l_k \in \nat_+\\l_1+...+l_k = l}}
\bigwedge\limits_{1 \le i \le k} 
 \BTC^{l_i}_{wi}(\Phi)$\enspace.
\end{enumerate}
Notice that $\varphi_k(x_w,x_{w1,k})$ abbreviates the formula $\varphi_k(x/x_w,y_1/x_{w1},\ldots,y_k/x_{wk})$ obtained by variable substitution. Moreover, $x_w$ is the only free variable of $\BTC^{l}_w(\Phi)$. 

For instance, consider the family $\Phi =
(\varphi_0(x), \varphi_1(x,y_1), \varphi_2(x,y_1,y_2))$. Then we have
{\small
\[
\begin{array}{rclcl}
\BTC_\varepsilon^4(\Phi) & = &
\Big(\exists x_1. \; \varphi_1(\xe,x_1) & \hspace{-12mm} \wedge &
    \hspace{-8mm} [( \exists x_{11}. \varphi_1(x_1,x_{11}) \wedge \exists x_{111}. \varphi_1(x_{11},x_{111})  \wedge  \varphi_0(x_{111}))\\
& &  &  &\hspace{-11mm} \vee \;\;  (\exists x_{11}, x_{12}. \varphi_2(x_{1},x_{11},x_{12})  \wedge \varphi_0(x_{11})  \wedge  \varphi_0(x_{12})) ]\Big)\\[3mm]
&  \vee 
& \Big( \exists x_1,x_2. \varphi_2(\xe,x_1,x_2) & \wedge & [ (\exists x_{11}. \varphi_1(x_1,x_{11}) \wedge \varphi_0(x_{11})
       \wedge \varphi_0(x_2))\\
& & & &  \vee \;\; (\varphi_0(x_1) \wedge \exists
x_{21}. \varphi_1(x_1,x_{21}) \wedge \varphi_0(x_{21}))] \Big)
\end{array}
\]
}

The {\em branching transitive closure of $\Phi$ } is just
the expression $\BTC(\Phi)$. The {\em semantics of  $\BTC(\Phi)$},
denoted by $\seml \BTC(\Phi) \semr$, is the mapping defined for every $\xi\in
T_\Sigma$ and $v\in \pos(\xi)$ by
$$\seml \BTC(\Phi)\semr(\xi,v)=\seml \bigvee_{1\leq l\leq \size(\xi)} \BTC^{l}_\varepsilon(\Phi)\semr (\xi,v).$$
We note that it suffices to let $l$ range over the finite set
$\{1,\ldots,\size(\xi)\}$, because the progress formula
$\psi_{k}(x,y_{1,k})$ is irreflexive and implication
\eqref{equ:progress} holds. Hence we have $\seml \BTC^{l}_\varepsilon(\Phi)\semr (\xi, v) =0$ for every $\xi \in T_\Sigma$ and $v\in \pos(\xi)$, provided that $l> \size(\xi)$.

Moreover, we define $\BTC({\cal L})$ to be the class of all expressions of
the form $\BTC(\Phi)$, where $\Phi$ is an $m$-family of $n$-progressing formulas in $\cal L$
for some $m \in \mathbb{N}$ and $n \in \mathbb{N}_+$.
Finally, a weighted tree language  $r: T_\Sigma \rightarrow S$  is 
 {\em $\BTC({\cal L})$-definable}  if there is an expression
 $\BTC(\Phi)$ in $\BTC({\cal L})$ such that for every $\xi \in T_\Sigma$:
\[
r(\xi) = \seml \BTC(\Phi) \semr (\xi,\varepsilon).
\]


Before showing an example of a $\BTC({\cal L})$-definable weighted tree language, we take a slightly different point of view to the underlying formulas; this will be helpful in Section \ref{sect:BTC-EA}.

The MSO-formula $\BTC_w^l(\Phi)$ contains a number of scattered occurrences of disjunction, where each occurrence has one of the following two forms: (1) disjunction of the form  ``$\bigvee_{1 \le k \le m}$'' for the choice of a rank $k$ or (2) disjunction of the form ``$\bigvee_{\substack{l_1,...,l_k \in \nat_+\\l_1+...+l_k = l}}$''
for the choice of a partitioning of $l$ into summands $l_1,\ldots,l_k$. Instead of having these disjunctions scattered over the whole formula, we could pull them out and make all the choices in advance. This leads to the notion of unfolding. 

Formally, let $\Phi=(\varphi_{k}\mid 0\le k\le m)$ be again an $m$-family of $n$-progressive formulas in $\cal L$ and $w\in \mathbb{N}_+^*$. For every $l\in \nat_+$, we define
the {\em set of unfoldings of $\BTC_w^l(\Phi)$}, denoted by  $\uBTC_w^l(\Phi)$, by induction on $l$:
\begin{enumerate}
\item[(i)]  $\uBTC_w^1(\Phi)=\{\varphi_0(x_w)\}$,
\item[(ii)] $\uBTC_w^{l+1}(\Phi)=  \bigcup\limits_{1\le k\le m}\big\{\exists x_{w1,k}.\varphi_k(x_w,x_{w1,k})\wedge \chi_1 \wedge \ldots \wedge \chi_k \mid$\\
$\hspace*{45mm} \forall i: \chi_i \in \uBTC_{wi}^{l_i}(\Phi) \text{ such that } l_1+\ldots + l_k=l \big\}.$
\end{enumerate}
Again, let $\Phi =
(\varphi_0(x), \varphi_1(x,y_1), \varphi_2(x,y_1,y_2))$. Then we have
{\small
\[
\begin{array}{rcl}
\uBTC_\varepsilon^4(\Phi) & = \Big\{ & 
 \exists x_1. \; \varphi_1(\xe,x_1)  \wedge \big(
    \exists x_{11}. \varphi_1(x_1,x_{11}) \wedge (\exists x_{111}. \varphi_1(x_{11},x_{111})  \wedge  \varphi_0(x_{111}))\big), \\[2mm]
& & \exists x_1. \; \varphi_1(\xe,x_1)  \wedge \big(
\exists x_{11}, x_{12}. \varphi_2(x_{1},x_{11},x_{12})  \wedge \varphi_0(x_{11})  \wedge  \varphi_0(x_{12})\big),  \\[2mm]
&  & \exists x_1,x_2. \varphi_2(\xe,x_1,x_2) \wedge \big(\exists x_{11}. \varphi_1(x_1,x_{11}) \wedge \varphi_0(x_{11})\big)
       \wedge \varphi_0(x_2), \\[2mm]
&  &  \exists x_1,x_2. \varphi_2(\xe,x_1,x_2)  \wedge 
\varphi_0(x_1) \wedge \big(\exists
x_{21}. \varphi_1(x_1,x_{21}) \wedge \varphi_0(x_{21})\big)
 \Big\}
\end{array}
\]
}
Note that each formula $\uBTC^{l}_w(\Phi)$ is rectified and that $x_w$ is its only free variable.

Due to the distributivity of multiplication over addition in the semiring $S$, we can easily prove the following connection between $\BTC_w^l(\Phi)$ and its set of unfoldings by induction on $l$.

\begin{ob} \rm \label{lm:alternative-semantics} For every $l\in \nat_+$, $w\in \mathbb{N}_+^*$ $\xi\in T_\S$, and $v \in\pos(\xi)$,  we have
\[\seml \BTC_w^l(\Phi)\semr (\xi,v) = \sum\limits_{\chi \in \uBTC_w^l(\Phi)}\seml \chi\semr (\xi,v).\]
\end{ob}

For a formula $\varphi \in \MSO$, we denote by
$\overline{\varphi}$ the formula obtained by deleting all quantifications $\exists x$ from $\varphi$. 

For every $\chi \in \uBTC_w^l(\Phi)$, the formula $\overline{\chi}$ is a conjunction of $l$ formulas taken from $\Phi$; all variables occurring in $\overline{\chi}$ are free; in fact, $\overline{\chi}$ has $l$ free variables.  Figure \ref{fig:unfolding-fig} illustrates $\overline{\chi}$ and a particular assignment of positions (shown as solid bullets) to its free variables. The base positions of these positions are indicated by circles and the tree structure of these base positions is indicated by lines.

\begin{figure}[h]
\begin{center}
\includegraphics[scale=0.87]{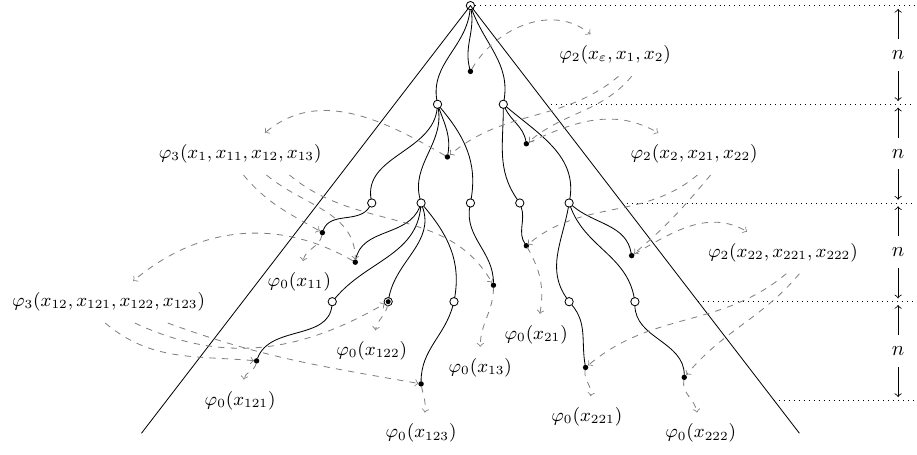} 
\end{center}
\caption{\label{fig:unfolding-fig} An example unfolding in $\overline{\chi} \in \uBTC_\varepsilon^{13}(\Phi)$.} 
\end{figure}

\begin{ex}\rm  1) Let $\Sigma = \{\delta^{(3)}, \alpha^{(0)},
  \beta^{(0)}\}$ and consider the semiring $(\nat,+,\cdot,0,1)$ of
  natural numbers. Let $L$ be the set of all trees  generated by the regular tree grammar with the two rules
\[S \rightarrow \beta \; \text{ and } \; S \rightarrow \delta(S,\alpha,S)\enspace.\]
We now want to define a family $\Phi$ of formulas such that $\seml \BTC(\Phi)\semr$ is the characteristic mapping of $L$, i.e.,
\[
\seml \BTC(\Phi)\semr(\xi,\varepsilon) = 
\left\{
\begin{array}{ll}
1 & \text{ if }\xi \in L \\
0 & \text{ otherwise.}
\end{array}
\right.
\]
For this we define the 2-family $\Phi = (\varphi_k(x,y_{1,k}) \mid 0 \le k \le 2)$  by
\begin{align*}
\varphi_0(x) & =  \lab_\beta(x),\\
\varphi_1(x,y_1) & =  \mathrm{false}, \;\;\text{ and }\\
\varphi_2(x,y_1,y_2) & =  \lab_\delta(x) \wedge \edge_1(x,y_1) \wedge 
\Big(\exists z. \edge_2(x,z) \wedge \lab_\alpha(z)\Big) \wedge \edge_3(x,y_2)\enspace.
\end{align*}
Note that for every $\xi \in T_\Sigma$ and $u,v \in \pos(\xi)$ we have $\seml y=\langle x \rangle_1 \semr(\xi,u,v) = 1$ implies $u=v$. Thus for the 1-progress formulas $\psi_{0}$, $\psi_{1}$, and $\psi_{2}$ we obtain the following equivalences:
\begin{align*}
\psi_{0}(x)& \equiv \mathrm{true},\\ 
\psi_{1}(x,y_1)& \equiv (x \le_1 y_1), \;\; \text{ and }\\
\psi_{2}(x,y_1,y_2)& \equiv (x \le_1y_1)\wedge 
(x \le_1y_2)\wedge\sibl(y_1,y_2)\enspace.
\end{align*}
Since for every $\xi \in T_\Sigma$ and $v,v_1,v_2 \in \pos(\xi)$: 
\begin{align*}
\seml\varphi_{0}(x)\semr(\xi,v) \neq 0 & \text{ implies } \psi_{0}^\xi(v)\\
\seml\varphi_{1}(x,y_1)\semr(\xi,v,v_1) \neq 0 & \text{ implies } \psi_{1}^\xi(v,v_1)\\
\seml\varphi_{2}(x,y_1,y_2)\semr(\xi,v,v_1,v_2) \neq 0 & \text{ implies } \psi_{2}^\xi(v,v_1,v_2)\enspace,
\end{align*}
we have that $\Phi$ is a  2-family of 1-progressing formulas.

By induction on $l$ we can show the following statement: for every $l \ge 1$, $w\in \mathbb{N}_+^*$, $\xi \in T_\Sigma$, and $v  \in \pos(\xi)$
\begin{equation}
\seml \BTC_w^l(\Phi)\semr(\xi,v) =
\left\{
\begin{array}{ll}
1 & \text{ if }\xi|_v \in L \text{ and } l = |\pos_{\{\delta,\beta\}}(\xi|_v)| \\
0 & \text{ otherwise.}
\end{array}
\right. \label{eq:ex}
\end{equation}
Then, due to the definitions and using \eqref{eq:ex} with $v=\varepsilon$ we have:
\[
\seml \BTC(\Phi)\semr(\xi,\varepsilon) = 
\sum_{1 \le l \le \mathrm{size}(\xi)} \seml \BTC_\varepsilon^l(\Phi)\semr(\xi,\varepsilon) = 
\seml \BTC_\varepsilon^{|\pos_{\{\delta,\beta\}}(\xi)|}(\Phi)\semr(\xi,\varepsilon)\enspace. 
\]
Finally, again using \eqref{eq:ex} we obtain that $\seml \BTC(\Phi)\semr$ is the characteristic mapping of $L$ in the above sense. 

We note that the family $\Phi$ traverses the given tree $\xi$ vertically maximal in the sense that the iteration has to stop at $\beta$-labeled leaves.

2) In our second example we consider the ranked alphabet $\Sigma = \{\sigma^{(2)}, \alpha^{(0)}\}$ and the semiring of natural numbers.
We define the 2-family  $\Phi' = (\varphi_k(x,y_{1,k}) \mid 1 \le k \le 2)$ of 1-progressing  formulas by
\begin{align*}
\varphi_0(x) & =  \mathrm{true},\\
\varphi_1(x,y_1) & =  \mathrm{false}, \;\; \text{ and }\\
\varphi_2(x,y_1,y_2) & =  \lab_\sigma(x) \wedge \edge_1(x,y_1) \wedge \edge_2(x,y_2)\enspace.
\end{align*}
Moreover, for each $\xi \in T_\Sigma$, we define the set $\mathrm{prefix}_l(\xi)$ of $l$-prefixes  of $\xi$ to be the set of all
``top parts'' of $\xi$ which contain $l$ occurrences of the symbol $\sigma$. Formally, let $\Delta =\{ \sigma^{(2)}, *^{(0)} \}$ and
for every $l \geq 1$, we define
$$\mathrm{prefix}_l(\xi) = \{ \zeta \in T_\Delta \mid \pos(\zeta) \subseteq \pos(\xi) \text{ and }| \pos_\sigma(\zeta) | = l \}.$$
Let, for instance, $\xi= \sigma(\sigma(\alpha,\alpha),\sigma(\alpha,\alpha))$. Then $\mathrm{prefix}_1(\xi) = \{\sigma(*,*)\}$,
 $\mathrm{prefix}_2(\xi) = \{\sigma(\sigma(*,*),*), \sigma(*,\sigma(*,*)) \}$,  $\mathrm{prefix}_3(\xi) = \{\sigma(\sigma(*,*),\sigma(*,*)) \}$, and $\mathrm{prefix}_l(\xi) =\emptyset$ for every $l\geq 4$.

Then, for every $\xi \in T_\Sigma$, $l \ge 1$, and $w\in \mathbb{N}_+^*$, we have
\[
\seml \BTC_w^{2l+1}(\Phi')\semr(\xi,\varepsilon) = |\mathrm{prefix}_l(\xi)|.
\]
because every element of $\mathrm{prefix}_l(\xi)$ can be identified with
a $2l+1$-fold iteration  of $\BTC_w$ on $\xi$. (The factor $2l+1$ comes from the fact that each  element
of $\mathrm{prefix}_l(\xi)$ has $2l+1$ nodes.)
An iteration of the $\BTC_w$-operator is not any more vertically maximal, because it can also stop at inner positions. Intuitively speaking, an iteration only spans a prefix of the tree (starting from its root). Thus
\[
\seml \BTC(\Phi')\semr(\xi,\varepsilon) = \sum_{1 \le l \le \size(\xi)}|\mathrm{prefix}_l(\xi)|\enspace.
\]
\end{ex}

\section{The Main Result}

\begin{theo}\label{main} Let $S$ be an arbitrary commutative semiring and $r: T_\Sigma \rightarrow S$ a weighted tree language. Then the following are equivalent:
\begin{enumerate}
\item[(a)] $r$ is recognizable,
\item[(b)] $r$ is $\BTC((\bFOmod)_\mathrm{step})$-definable,
\item[(c)] $r$ is $\BTC(\bMSO_\mathrm{step})$-definable,
\item[(d)] $r$ is $\exists\forall((\bFOmod)_\mathrm{step})$-definable,
\item[(e)] $r$ is $\exists\forall(\bMSO_\mathrm{step})$-definable,
\item[(f)] $r$ is $\RMSO$-definable.
\end{enumerate}
\end{theo}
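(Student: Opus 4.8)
The plan is to prove the eight statements equivalent by establishing a family of implications whose induced directed graph on the vertices $(a),\ldots,(h)$ is strongly connected, so that a single cycle through all of them is obtained. Concretely I would prove the anchor $(a)\Rightarrow(b)$; the main construction $(b)\Rightarrow(c)$; the three weakening inclusions $(c)\Rightarrow(d)$, $(e)\Rightarrow(f)$, $(g)\Rightarrow(h)$; the two ``bounded-to-unbounded'' passages $(c)\Rightarrow(e)$ and $(d)\Rightarrow(f)$; the two closure-to-$\exists\forall$ simulations $(e)\Rightarrow(g)$ and $(f)\Rightarrow(h)$; and finally the membership $(h)\Rightarrow(a)$. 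One then checks directly that from $(a)$ every vertex is reachable (e.g. $(a)\to(b)\to(c)\to(d)\to(f)\to(h)$ and $(c)\to(e)\to(g)$) and that from every vertex $(a)$ is reachable (all paths funnel through $(h)\to(a)$), so the eight conditions are equivalent.

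Several of these arcs are cheap. The implications $(c)\Rightarrow(d)$, $(e)\Rightarrow(f)$, $(g)\Rightarrow(h)$ are immediate from $\bFOmod\subseteq\bMSO$, hence $(\bFOmod)_{\mathrm{step}}\subseteq\bMSO_{\mathrm{step}}$: a family of $(\bFOmod)_{\mathrm{step}}$-formulas is already a family of $\bMSO_{\mathrm{step}}$-formulas, and the closure and progress data are unchanged. The passages $(c)\Rightarrow(e)$ and $(d)\Rightarrow(f)$ are exactly Observation~\ref{ob:path-prec}, since both $(\bFOmod)_{\mathrm{step}}$ and $\bMSO_{\mathrm{step}}$ are closed under conjunction and contain $\desc_{[1,n]}$ (which lies in $\bFO\subseteq\bFOmod\subseteq\bMSO$ and hence in each step-class). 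The anchor $(a)\Rightarrow(b)$ I would simply invoke from the known equivalence between recognizability and $\RMSO$-definability for weighted tree languages over a commutative semiring \cite{drovog06}.

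For $(e)\Rightarrow(g)$ and $(f)\Rightarrow(h)$ I would adapt the string argument of \cite{bolgasmonzei10}: given $\desc_+\mbox{-}\TC_x(\Phi)$, encode the descending, finitely-branching structure of positions chosen by the closure operator by one second-order variable $X$, and verify the local contributions of the $\varphi_k$ together with the progress macro $\psi_k$ by one universal first-order quantifier, producing an equivalent formula $\exists X.\forall y.\varphi$ with $\varphi$ a step formula over the same base logic. Irreflexivity of $\desc_+$ guarantees that the encoded object is finite and well-founded, so the semiring sum defining the closure is finite and coincides with the $\exists X$-sum. Then $(h)\Rightarrow(a)$: a formula $\exists X.\forall y.\varphi$ with $\varphi\in\bMSO_{\mathrm{step}}$, evaluated with $x$ at the root, is captured by the $\RMSO$-sentence $\exists x.\bigl(\rt(x)\wedge\exists X.\forall y.\varphi\bigr)$; this is $\RMSO$-legal because $\forall y$ is applied to the step formula $\varphi$ and the $\forall$ hidden in $\rt$ is applied to a $\bMSO$-formula, while $\exists x$, $\exists X$ and $\wedge$ are unrestricted, and its semantics at $\xi$ equals $\seml\exists X.\forall y.\varphi\semr(\xi,\varepsilon)=r(\xi)$ since $\rt(x)$ selects only $x\mapsto\varepsilon$.

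The main obstacle is $(b)\Rightarrow(c)$, the construction of a $\desc_{[1,n]}\mbox{-}\TC((\bFOmod)_{\mathrm{step}})$-formula from a wta. I would first normalize the automaton to a single final state (Lemma~\ref{lm:wta-normal}), then slice the input tree into the finitely many pieces whose shape and number are determined by the number $n$ of states, represent the state in which the evaluation of a slice begins by a distinguished position recovered through modulo counting, and use $\bFO$-formulas (the analogues of $\formcut$ and $\onlmp$ mentioned in the introduction) to make both the slicing and the choice of representative position unique, so that no state behaviour is counted twice. The state-value behaviour on the slices is then assembled into the family $\Phi$ of $(\bFOmod)_{\mathrm{step}}$-formulas with progress formula $\desc_{[1,n]}$, where the semiring coefficients come from the transition weights; commutativity of the semiring is essential here and enters precisely through the decomposition lemma (Lemma~\ref{lm:decomp}) when the value of $\xi$ is reconstructed from the values on its slices. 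This is the technical heart of the paper and is carried out in Sections~5--7.
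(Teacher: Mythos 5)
Your proposal is correct and follows essentially the same route as the paper: the identical implication graph, with (b)$\Rightarrow$(c) by the wta normalization/slicing/decomposition construction (Lemmas \ref{lm:wta-normal} and \ref{lm:decomp}, Theorem \ref{BbTC-theo}), (c)$\Rightarrow$(d), (e)$\Rightarrow$(f), (g)$\Rightarrow$(h) from $\bFOmod\subseteq\bMSO$, (c)$\Rightarrow$(e) and (d)$\Rightarrow$(f) by Observation \ref{ob:path-prec}, (e)$\Rightarrow$(g) and (f)$\Rightarrow$(h) by the $\exists\forall$ simulation (Corollary \ref{cor}), and (h)$\Rightarrow$(a) by inclusion in $\RMSO$ (your explicit $\exists x.(\rt(x)\wedge\cdots)$ handling of the free root variable is in fact more careful than the paper's one-line inclusion). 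The only refinement needed: for (a)$\Rightarrow$(b) you should not cite \cite{drovog06} verbatim, since the paper's $\RMSO$ is syntactically different from the fragment $\RMSO'$ used there (here $x\preceq y$ is atomic, negation applies to $\bMSO$-step formulas, and $\forall X$ is allowed on $\bMSO$-formulas); this gap is exactly what Theorem \ref{th:rec=definable} bridges, so invoke that theorem instead.
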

\begin{proof} Theorem \ref{BbTC-theo}  proves that (a) implies (b).
By Theorem~\ref{EA-theo}, (b) implies (d) and (c) implies (e). 
Since $\bFOmod \subseteq \bMSO$, we have that (b) implies (c), and (d) implies (e).
Since $\exists\forall(\bMSO_\mathrm{step}) \subseteq \RMSO$, also (e) implies (f).   
By Theorem \ref{th:rec=definable} (f) implies (a). 
\end{proof}

As a corollary of our main result, we obtain a characterization of recognizable tree languages
in terms of our branching transitive closure operator. Let us denote
by $\MSO_t(\Sigma)$ (or shortly by $\MSO_t$) the set of (unweighted) monadic second order formulas for trees over $\Sigma$ (cf. \cite{don70,thawri68}) and by $\FO_t$ its first order segment. 

\begin{cor}\label{main-cor} Let $L\subseteq T_\Sigma$ be an arbitrary  tree language. Then the following are equivalent:
\begin{enumerate}
\item[(a)] $L$ is recognizable,
\item[(b)] $L$ is $\BTC(\FOtmod)$-definable,
\item[(c)]  $L$ is $\BTC(\MSO_t)$-definable,
\item[(d)] $L$ is $\exists\forall(\FOtmod)$-definable,
\item[(e)] $L$ is $\exists\forall(\MSO_t)$-definable,
\item[(f)] $L$ is $\MSO_t$-definable.
\end{enumerate}
\end{cor}
\begin{proof}  (Sketch.) Since Theorem \ref{main} holds for the Boolean semiring $\mathbb{B}$
(with operations disjunction and conjunction), it suffices to prove the following statement $(\dagger)$:
for every  $L\subseteq T_\Sigma$ and $\mathit{x} \in \{  \mathit{a},\ldots,\mathit{f}\}$, statement $\mathit{(x)}$ holds for $L$ if and only if statement $\mathit{(x)}$ of  Theorem \ref{main} holds for $S=\mathbb{B}$ and $r=\mathds{1}_L$.

For the proof of $x=a$, see   \cite[Subsect. 3.2]{fulvog09}.

To prove case $x=f$, first we observe that the logics $\RMSO(\Sigma,\mathbb{B})$ and $\MSO(\Sigma,\mathbb{B})$ are equivalent. Moreover, each $\MSO_t(\Sigma)$-formula can be considered as an $\MSO(\Sigma,\mathbb{B})$-formula with the same semantics. Vice versa, every $\MSO(\Sigma,\mathbb{B})$-formula can be transformed into an equivalent $\MSO_t(\Sigma)$-formula  by writing, e.g., $\exists x. (\lab_\sigma(x) \wedge \neg \lab_\sigma(x))$ for 0 and  $\forall x. (\lab_\sigma(x) \vee \neg \lab_\sigma(x))$ for 1 for some $\sigma \in \Sigma$. Hence $(\dagger)$ holds in this case.

To prove case $x=b$, we observe that $(\bFOmod)_\mathrm{step}(\Sigma,\mathbb{B})$ and $\FOmod(\Sigma,\mathbb{B})$ are equivalent. Moreover, $\FOtmod(\Sigma)$-formulas and $\FOmod(\Sigma,\mathbb{B})$-formulas correspond to each other in the natural way described above for $\MSO_t(\Sigma)$ and $\MSO(\Sigma,\mathbb{B})$.  The proof of the cases $x \in \{c, d,e\}$ are similar.
\end{proof}

\section{From wta To Branching Transitive Closure}
\label{sec:construction}

In this section we will simulate the behaviour of a wta by the 
branching transitive closure of a particular family of formulas.  
 Our goal is the following theorem.

\begin{theo}\label{BbTC-theo} \rm For every wta ${\cal A}$ with $n$
  states and input alphabet $\Sigma$ there is an $m \in \mathbb{N}_+$ and
  an $m$-family $\Phi_{\cal A}$ of $n$-progressing  formulas in $(\bFOmod)_\mathrm{step}$
  such that  $r_{\cal A}(\xi)~=~\seml \BTC(\Phi_{\cal A})
  \semr(\xi,\varepsilon)$ for every $\xi \in T_\Sigma$. 
\end{theo}

In this section we assume that ${\cal A} = (Q,\Sigma,\delta,F)$ is a wta with $Q =
\{0,\ldots,n-1\}$ and $n\in \nat_+$. By Lemma \ref{lm:wta-normal} we
can assume that $F = \{0\}$.

The main idea behind the following construction and the inductive proof of Theorem \ref{BbTC-theo} (cf.
Statement 1 in the proof of this theorem) is due to \cite{tho82}. First we decompose an input tree $\xi$ into  slices (cf. Section \ref{sec:slices}). The number $n$ and the ranked alphabet $\Sigma$ determine the maximal width of slices which we denote by $\max(\Sigma,n)$.
The behaviour of ${\cal A}$ on $\xi$ induces a behaviour  on the slices of $\xi$ (cf. Lemma~\ref{lm:decomp}).
Then we construct  $\Phi_{\cal A} = (\varphi_k(x,y_{1,k}) \mid 0 \le k \le \max(\Sigma,n))$ such that the behaviour of ${\cal A}$ on slices
is simulated  by $\BTC(\Phi_{\cal    A})$. More precisely,  let us denote the topmost slice of the decomposition of  $\xi$ at  some position $u$ by  $\head_n(\xi,u)$ and the positions   of $\xi$ at which the slices below  $\head_n(\xi,u)$ start, by $u_1\ldots u_k$. Then we construct $\Phi_{\cal A}$ such that  the decomposition 
\begin{equation}
h(\xi|_u)_q = \sum_{q_1,\ldots,q_k \in Q} h^{q_1\ldots
  q_k}(\head_n(\xi,u))_q \cdot \prod_{1 \le i \le k}
h(\xi|_{u_i})_{q_i} \label{equ:synch-h}
\end{equation}
of the behaviour of ${\cal A}$ is synchronized with one level of the iteration 
\begin{equation}
 \BTC^{l+1}_w(\Phi_{\cal A}) = 
 \bigvee\limits_{1 \le k \le \max(\Sigma,n)} \exists x_{w1,k}. \; 
  \varphi_k(x_w,x_{w1,k}) \wedge 
\bigvee\limits_{\substack{l_1,...,l_k \in \nat_+\\l_1+...+l_k = l}}
\bigwedge\limits_{1 \le i \le k}
 \BTC^{l_i}_{wi}(\Phi_{\cal A}) \label{equ:synch-phi} 
\end{equation}
for some $w\in \nat_+^*$ and such that each $\varphi_k$ is an $n$-progressing formula.

In Fig. \ref{fig:synchronization} we visualize this 
synchronization for a wta $\cal A$ with state set $\{0,1,2\}$. In part (a) we show the subexpression $h^{20}(\head_3(\xi,u))_1 \cdot 
h(\xi|_{u_1})_2\cdot h(\xi|_{u_2})_0$ of the right-hand side of
  \eqref{equ:synch-h} with $n=3$, $k=2$, $q=1$, $q_1=2$, and
  $q_2=0$. In part (b) we visualize the synchronization.

\begin{figure}
  \centering
  \begin{subfigure}[b]{0.49\textwidth}
	\centering
	\begin{tikzpicture}[remember picture]
	\draw (0,0) node[offpath] (ur) {} 
					to (-\xfull,\ybot) to[out = 230, in = 180] (-\xfull / 2,\ybot + \ymiddiff) node[offpath] (u1) {}
					to[out = 0, in = 180] (0, \ybot)
					to[out = 0, in = 180] (\xfull / 2, \ybot + \ymiddiff) node[offpath] (u2) {}
					to[out = 0, in = 310] (\xfull, \ybot)
					to (0,0);
	\node at (-\xfull,\ybot/4) {$h^{\,\tikz[mathcoord] \node (2){\scriptsize$2$}; ~ \tikz[mathcoord] \node (0){\scriptsize$0$};}~\bigg(_{\vphantom{\tikz[mathcoord] \node {$1$};}}$};
	\node at (\xfull - .4em,\ybot/4){$\bigg)_{\tikz[mathcoord] \node (1) {\scriptsize$1$};}$};
	\node[labeloffset] at (u1) {$u_1$};
	\node[labeloffset] at (u2) {$u_2$};
	\node[labeloffset] at (ur) {$u$};
	\node[anchor = east] at (\xfull/2, \ybot/2) {\scriptsize$\head_{3}(\xi,u)$};
	\path[dashed] (2) edge (u1)
							 (0) edge (u2)
							 (1) edge (ur);

	\begin{scope}[yshift = 1.3 * \ybot, xshift = -\xfull/2 - 0.8em]
		\draw (0,0) node (xi1) {}
			to (-\xfull/2, \ybot)
			to (\xfull/2, \ybot * 4 / 5)
			to (0,0);
		\node at (-\xfull* 4 / 9,\ybot/4) {$h\bigg($};
		\node at (\xfull* 4 / 9,\ybot/4) {$\bigg)_2$};
		\node at (\xfull / 3 - .3em, \ybot * 3/4 + .3em) {$\xi|_{u_1}$};
	\end{scope}

	\begin{scope}[yshift = 1.3 * \ybot, xshift = \xfull/2 + 0.8em]
		\draw (0,0) node (xi2) {}
			to (-\xfull/2, \ybot * 4.5 / 6)
			to (\xfull/2, \ybot * 4.5 / 6)
			to (0,0);
		\node at (-\xfull * 4 / 9,\ybot/4) {$h\bigg($};
		\node at (\xfull* 4 / 9,\ybot/4) {$\bigg)_0$};
		\node[xshift = -.2em, yshift = .03em] at (\xfull / 3, \ybot * 2/3) {$\xi|_{u_2}$};
	\end{scope}

	\path[->, thick, shorten > = .4em, shorten < = .2em] 
		(xi1) edge (u1)
		(xi2) edge (u2); 
	\end{tikzpicture}
	\caption{}
  \end{subfigure}
  \begin{subfigure}[b]{0.49\textwidth}
	\centering
	\begin{tikzpicture}[remember picture]
	\draw (0,0) node[onpath] (ur) {} 
					to (-\xfull,\ybot) to[out = 230, in = 180] (-\xfull / 2,\ybot + \ymiddiff) node[onpath] (u1) {}
					to[out = 0, in = 180] (0, \ybot)
					to[out = 0, in = 180] (\xfull / 2, \ybot + \ymiddiff) node[onpath] (u2) {}
					to[out = 0, in = 310] (\xfull, \ybot)
					to (0,0);
	
	\node at (\xfull + 0.4em ,\ybot * 4/9){$\varphi(\tikz[mathcoord]\node (xw) {$x_w$};, \tikz[mathcoord] \node (xw1) {$x_{w1}$};, \tikz[mathcoord] \node (xw2) {$x_{w2}$};)$};
	\node[onpath] at ($(u1) !.5! (ur)$) (u3) {};
	\node[onpath, xshift = -\xfull / 4] at (u1) (u4) {};
	\node[labeloffset] at (u1) {$u_1$};
	\node[labeloffset] at (u2) {$u_2$};
	\node[labeloffset] at (ur) {$u$};
	\node[anchor = east, shift={(.2em, -0.3em)}] at (\xfull/2, \ybot/2) {\scriptsize$\head_{3}(\xi,u)$};

	\begin{scope}[shift = {(u1)}]
		\draw (u1) node (xi1) {}
			to (-\xfull/2, \ybot)
			to (\xfull/2, \ybot * 4 / 5)
			to (u1);
		\node at (\xfull / 3 - .3em, \ybot * 3/4 + .3em) {$\xi|_{u_1}$};
		\node[onpath] at (0, \ybot * 1/3) (u5) {};
		\node[onpath] at (-\xfull / 8, \ybot * 2/3) (u6) {};
		\node at (0, 1.15 * \ybot) (btc1) {$\BTC^{\ell_1}_{w1}(\Phi_{\mathcal{A}})$};
	\end{scope}

	\begin{scope}[shift ={(u2)}]
		\draw (u2) node (xi2) {}
			to (-\xfull/2, \ybot * 4.5 / 6)
			to (\xfull/2, \ybot * 4.5 / 6)
			to (u2);
		\node[yshift = .03em, xshift = -.2em] at (\xfull / 3, \ybot * 2/3) {$\xi|_{u_2}$};
		\node[onpath] at (0, \ybot * 1/4) (u7) {};
		\node[onpath] at (-\xfull / 8, \ybot * 1/2) (u8) {};
		\node[xshift = .2em] at (0, 1 * \ybot) (btc2) {$\BTC^{\ell_2}_{w2}(\Phi_{\mathcal{A}})$};
	\end{scope}

	\foreach \u in {ur, u1, u2, u3, u4, u5, u6, u7, u8}{\onpathdeco{\u}}
	\path (ur) edge (u3)
			  (u3) edge (u1)
					  edge (u4)
			  (u1) edge (u5) 
			  (u5) edge (u6)
			  (u2) edge (u7)
			  (u7) edge (u8);
	
	\draw[flow] 
		(u3) edge[out = 30, in = 90] (xw)
		(xw1) to[out = 250, in = 10] ($(u3)!.5!(u2)$) to [out = 190, in = 40] (u6) 
		(u6) edge (btc1)
		(xw2) edge[out = 270, in = 340] (u2) 
		(u2) edge[out = 310, in = 45] (btc2.21);
	
	\coordinate[above = 2em of ur] (xi);
	\draw (xi) --++ (-\xfull, \ybot);
	\draw (xi) --++ (\xfull /2 , \ybot / 2);
	\node[left = 2em of xi, yshift = -1em] {$\xi$};
	
	\end{tikzpicture}
	\caption{}
  \end{subfigure}
\caption{\label{fig:synchronization} Synchronization of the behaviour
  of a wta and of the iteration inherent in the transitive
  closure.} 
\end{figure}
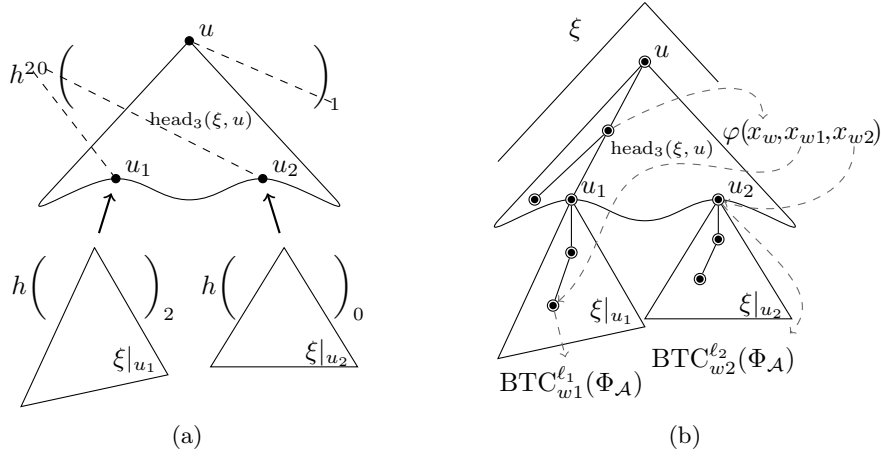

We will represent the states of $\cal A$ by positions of $\xi$. Roughly speaking, the synchronization happens in the way that  $
\seml  \varphi_k(x_w,x_{w1,k})\semr (\xi,v,v_{1,k})$ provides the value
$h^{q_1\ldots q_k}(\head_n(\xi,u))_q$, where $v$ and $v_{1,k}$
are the positions of $\head_n(\xi,u)$ and $\head_n(\xi,u_1),\ldots,\head_n(\xi,u_k)$
which encode $q$ and $q_1,\ldots,q_k$, respectively. Moreover, $\seml
\BTC^{l_i}_{wi}(\Phi_{\cal A})\semr (\xi, v_i)$
provides $h(\xi|_{u_i})_{q_i}$.

\subsection{Decomposition of a Tree into Slices}
\label{sec:slices}

We represent slices as particular trees with variables. For this, we introduce the sets $Z = \{z_1,z_2,z_3,\ldots\}$  and    $Z_k=\{z_1,\ldots,z_k\}$,  $k \in \mathbb{N}$ of variables.
Then we denote by $C_{\Sigma,k}$ the set of all trees $\xi\in T_\Sigma(Z_k)$  such that each $z_i \in Z_k$ occurs exactly once in $\xi$ and the variables occur in the order $z_1,\ldots,z_k$ from left to right. Note that $C_{\Sigma,0}=T_\Sigma$.
For every $k \in \mathbb{N}$, let
$$C_{\Sigma,k}^n = \{\zeta \in C_{\Sigma,k} \mid \forall w \in \pos(\zeta): (|w|<2n) \wedge (\zeta(w) \in Z_k \rightarrow |w|=n) \}.$$
 We note that  $C_{\Sigma,0}^n = \{\xi \in T_{\Sigma} \mid \height(\xi)<2n\}$. Moreover, it should be clear that  there is a $k_0$ (depending also on $\Sigma$) such that $C_{\Sigma,k}^n  =\emptyset$ for every $k > k_0$. We denote the smallest such $k_0$ by  $\max(\Sigma,n)$. It is also clear that  $C_{\Sigma,i}^n \cap C_{\Sigma,j}^n = \emptyset$ for every $i \not= j$.

The next observation is crucial when decomposing a tree into slices.

\begin{ob}\rm \label{ob:unique-breadth} For every $\xi
  \in T_\Sigma$ and $u \in \pos(\xi)$, there is a unique $k \in \mathbb{N}$ and a unique sequence $u_1,\ldots,u_k \in \pos(\xi)$ such that
\begin{itemize}
\item $(\xi[z_1]_{u_1}\ldots [z_k]_{u_k})|_u \in C_{\Sigma,k}^n$ and
\item $\height(\xi|_{u_i}) \ge n$ for every $1 \le i \le k$.
\end{itemize}
\end{ob}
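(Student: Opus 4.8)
The plan is to identify explicitly the set of positions at which $\xi$ must be cut, and then to verify that this set yields the unique decomposition. Concretely, I would set
$$V = \{ v \in \pos(\xi) \mid v = uw \text{ for some } w \text{ with } |w| = n, \text{ and } \height(\xi|_v) \ge n \},$$
i.e. the descendants of $u$ lying at relative depth exactly $n$ whose subtree is still at least $n$ high. Since all elements of $V$ sit at the same relative depth $n$, they are pairwise incomparable, so $V$ carries a total order induced by the lexicographic order $\preceq$ on positions; let $u_1 \prec \cdots \prec u_k$ enumerate $V$ in this order and put $k = |V|$. This choice is forced by the two required conditions, which is exactly what makes both existence and uniqueness fall out.

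For existence I would check the two bullet points for this $V$. The second is immediate: $\height(\xi|_{u_i}) \ge n$ holds by the definition of $V$. For the first, I would show that $\zeta := (\xi[z_1]_{u_1}\cdots[z_k]_{u_k})|_u$ lies in $C_{\Sigma,k}^n$. Each $z_i$ occurs exactly once and, because the $u_i$ are listed in lexicographic order, in the order $z_1,\ldots,z_k$ from left to right; moreover every variable sits at a position of length $n$, so the implication $\zeta(w) \in Z_k \Rightarrow |w| = n$ holds. The only nontrivial point is the depth bound $|w| < 2n$ for all $w \in \pos(\zeta)$, and positions with $|w| \le n$ satisfy it trivially since $n \in \nat_+$. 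For a position $w$ with $|w| > n$ I would consider its length-$n$ prefix $w'$: since $u w'$ is not one of the cut positions (otherwise $w$ would have been removed), we have $uw' \notin V$, and as $uw'$ has relative depth $n$ this forces $\height(\xi|_{uw'}) < n$. Consequently every position below $uw'$ has depth less than $n$ relative to $uw'$, whence $|w| - n < n$, i.e. $|w| < 2n$.

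For uniqueness I would show that any admissible pair $(k,u_1,\ldots,u_k)$ must satisfy $\{u_1,\ldots,u_k\} = V$, after which the left-to-right ordering constraint of $C_{\Sigma,k}$ pins down the sequence (and hence $k$) completely. The inclusion $\{u_1,\ldots,u_k\} \subseteq V$ is direct: membership $\zeta \in C_{\Sigma,k}^n$ forces every variable position, hence every $u_i$, to have relative length $n$, and the second bullet gives $\height(\xi|_{u_i}) \ge n$. For the reverse inclusion I would argue by contradiction: suppose some $v = uw \in V$ is not among the $u_i$. Because all variable positions have length exactly $n$, no cut occurs strictly below relative depth $n$, so the entire subtree $\xi|_v$ survives inside $\zeta$. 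As $\height(\xi|_v) \ge n$, this subtree contains a position at relative depth $\ge n$ below $v$, producing a position of $\zeta$ of length $\ge 2n$ and contradicting the bound $|w| < 2n$ in $C_{\Sigma,k}^n$. Hence $\{u_1,\ldots,u_k\} = V$.

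The main obstacle, and the step I would treat most carefully, is the bookkeeping of which positions of $\xi$ actually survive in the head $\zeta$: one must use that in $C_{\Sigma,k}^n$ all variables occur at depth \emph{exactly} $n$ (so that there are no cuts below relative depth $n$) in order to obtain both the depth bound in the existence part and the ``survival'' of an un-cut tall subtree in the uniqueness part. Everything else is routine prefix arithmetic on Gorn positions together with the recursive definition of $\height$.
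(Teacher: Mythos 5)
Your proof is correct and complete. The paper states this as an Observation without any proof, and your argument—explicitly identifying the forced cut set $V$ of relative-depth-$n$ descendants of $u$ with $\height(\xi|_v)\ge n$, verifying that cutting exactly at $V$ (in left-to-right order) satisfies both conditions, and showing any admissible sequence must coincide with $V$—is precisely the canonical justification the authors evidently had in mind, with the key points (the depth bound $|w|<2n$ via the length-$n$ prefix argument, and the survival of an un-cut tall subtree forcing the reverse inclusion) handled correctly.
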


We will denote the tree $(\xi[z_1]_{u_1}\ldots [z_k]_{u_k})|_u$ by
$\head_n(\xi,u)$ and the  sequence $(u_1,\ldots,u_k)$ by
$\cut_n(\xi,u)$. In particular, $\head_n(\xi,u)=\xi|_u$ and
$\cut_n(\xi,u) = (\,)$, i.e., $k =0$, if and only if  $\xi|_u\in
C_{\Sigma,0}^n$. We abbreviate $\head_n(\xi,\varepsilon)$ and
$\cut_n(\xi,\varepsilon)$ by  $\head_n(\xi)$ and $\cut_n(\xi)$, respectively.

The tree $\head_n(\xi,u)$ is the {\em slice of $\xi$ at $u$} and the positions
$u_1,\ldots,u_k$ are {\em cut-positions for $\xi$ and $u$}. By applying Observation \ref{ob:unique-breadth} repeatedly, we obtain a unique decomposition of $\xi$ into slices (cf. Fig. \ref{fig:slices-phi}). Formally, we define the ranked alphabet $C_\Sigma^n$ such that $(C_\Sigma^n)\ui k=C_{\Sigma,k}^n$ for every $k\geq 0$ (recall that $C_\Sigma^n$ is finite). Moreover, we define the mapping $\dec_n: T_\Sigma \rightarrow T_{C_\Sigma^n}$ inductively as follows. For every $\xi \in T_\Sigma$, let
$$\dec_n(\xi) = \head_n(\xi)\Big( \dec_n(\xi|_{u_1}), \ldots, \dec_n(\xi|_{u_k}) \Big)\enspace,$$
where $\cut_n(\xi) = (u_1,\ldots,u_k)$. 

\begin{ob}\label{ob:small}\rm For every $\xi \in T_\Sigma$, $\size(\dec_n(\xi)) = 1$
  if and only if $\height(\xi) < 2n$.
\end{ob}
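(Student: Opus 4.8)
The plan is to reduce the statement to the already-recorded characterization of when the cut sequence is empty, so that only elementary bookkeeping about $\size$ and the recursive shape of $\dec_n$ remains.

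First I would unfold the recursive definition of $\dec_n$. Writing $\cut_n(\xi) = (u_1,\ldots,u_k)$, we have
$$\dec_n(\xi) = \head_n(\xi)\big(\dec_n(\xi|_{u_1}),\ldots,\dec_n(\xi|_{u_k})\big),$$
so the root of $\dec_n(\xi)$ is the symbol $\head_n(\xi) \in (C_\Sigma^n)^{(k)}$, a symbol of rank $k$, and $\dec_n(\xi)$ has exactly $k$ immediate subtrees. I would then make the case distinction explicit: by the definition of $\size$, if $k = 0$ then $\size(\dec_n(\xi)) = 1$, while if $k \ge 1$ then $\size(\dec_n(\xi)) = 1 + \sum_{i=1}^k \size(\dec_n(\xi|_{u_i})) \ge 1 + k \ge 2$, since each summand is at least $1$. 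Hence $\size(\dec_n(\xi)) = 1$ if and only if $k = 0$, that is, if and only if $\cut_n(\xi) = (\,)$.

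It then remains to chain two equivalences already available in the text. As noted right after Observation \ref{ob:unique-breadth}, $\cut_n(\xi) = (\,)$ holds if and only if $\xi \in C_{\Sigma,0}^n$; and by the identity $C_{\Sigma,0}^n = \{\xi \in T_\Sigma \mid \height(\xi) < 2n\}$ recorded in the definition of $C_{\Sigma,k}^n$, the latter is equivalent to $\height(\xi) < 2n$. Combining these with the equivalence from the previous paragraph yields the claim.

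I do not expect any genuine obstacle here, since the statement is essentially a consequence of the recursive shape of $\dec_n$ together with the already-established description of the empty-cut case. The only point that requires care is making the dichotomy $k = 0$ versus $k \ge 1$ explicit, so that the condition $\size = 1$ is correctly identified with ``the root of the decomposition tree is a leaf'' rather than with some accidental cancellation among the subtree sizes.
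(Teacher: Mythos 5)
Your proof is correct and matches the argument the paper leaves implicit: this statement appears as an unproved Observation, and the intended justification is exactly your chain $\size(\dec_n(\xi))=1 \iff \cut_n(\xi)=(\,) \iff \xi \in C_{\Sigma,0}^n \iff \height(\xi)<2n$, using the remark after Observation \ref{ob:unique-breadth} and the recorded identity $C_{\Sigma,0}^n = \{\xi \in T_\Sigma \mid \height(\xi)<2n\}$. Your explicit handling of the dichotomy $k=0$ versus $k\ge 1$ via the definition of $\size$ is the only bookkeeping needed, and it is done correctly.
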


The following decomposition lemma will be crucial in the simulation of
a wta by means of branching transitive closure.
We note that the lemma can be derived from   \cite[Prop. 18]{mal06a}, which is
proved for bottom-up tree series transducers, i.e. for a generalization of
weighted tree automata. Recall that $S$ is commutative.

\begin{lm}\rm \label{lm:decomp} Let $\xi \in T_\Sigma$, $q \in Q$, and  $\cut_n(\xi) = (u_1,\ldots,u_k)$. Then
$$h(\xi)_q = \sum_{q_1,\ldots,q_k \in Q} h^{q_1\ldots q_k}(\head_n(\xi))_q \cdot \prod_{1 \le i \le k} h(\xi|_{u_i})_{q_i}.$$
\end{lm}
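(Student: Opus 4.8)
The plan is to recognize the identity as an instance of the standard substitution property of the run semantics of a wta and to prove that property by induction on the slice.

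First I would unwind the definitions of $\head_n$ and $\cut_n$: by Observation~\ref{ob:unique-breadth}, if $\head_n(\xi)\in C_{\Sigma,k}$ and $\cut_n(\xi)=(u_1,\ldots,u_k)$, then $\xi$ is recovered from $\head_n(\xi)$ by putting the cut-subtrees back into the variables, i.e.\ $\xi=\head_n(\xi)\theta$ with $\theta=[z_1\leftarrow\xi|_{u_1},\ldots,z_k\leftarrow\xi|_{u_k}]$. Thus it suffices to prove the substitution lemma: for every $k\ge 0$, every $C\in C_{\Sigma,k}$, all $\xi_1,\ldots,\xi_k\in T_\Sigma$, all $q\in Q$, and $\theta=[z_1\leftarrow\xi_1,\ldots,z_k\leftarrow\xi_k]$,
\[
h(C\theta)_q=\sum_{q_1,\ldots,q_k\in Q} h^{q_1\ldots q_k}(C)_q\cdot\prod_{i=1}^k h(\xi_i)_{q_i};
\]
the lemma is then the special case $C=\head_n(\xi)$, $\xi_i=\xi|_{u_i}$.

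Next I would prove this substitution lemma by structural induction on $C$. The base cases are immediate: for $C=z_1$ (which forces $k=1$) both sides equal $h(\xi_1)_q$, since $h^{q_1}(z_1)_q$ is $1$ exactly when $q=q_1$; and for $C=\alpha\in\Sigma^{(0)}$ (which forces $k=0$) both sides equal $h(\alpha)_q$, the empty sum and product being handled correctly. For the inductive step $C=\sigma(C_1,\ldots,C_l)$ I would use that $z_1,\ldots,z_k$ occur once each and in order, so $\{1,\ldots,k\}$ splits into consecutive blocks $B_1,\ldots,B_l$ with $z_i$ occurring in $C_j$ iff $i\in B_j$; up to the obvious re-indexing of each block to an initial segment, $C_j\in C_{\Sigma,|B_j|}$ and the induction hypothesis applies to $C_j\theta$. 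I would then expand $h(\sigma(C_1\theta,\ldots,C_l\theta))_q$ by the defining recursion of $h$, insert the induction hypothesis for each $h(C_j\theta)_{p_j}$, and push the sums over the block-states outside by distributivity.

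The decisive step is the final regrouping, and this is where I expect the main obstacle to lie. After distributing, a typical summand is a product in which the factors $h^{q_1\ldots q_k}(C_j)_{p_j}$ and $h(\xi_i)_{q_i}$ appear interleaved according to the children. To collect it into
\[
\Big(\sum_{p_1\ldots p_l\in Q^l}\delta_l(p_1\ldots p_l,\sigma,q)\prod_{j=1}^l h^{q_1\ldots q_k}(C_j)_{p_j}\Big)\cdot\prod_{i=1}^k h(\xi_i)_{q_i},
\]
which by the definition of $h^{q_1\ldots q_k}$ on $\sigma(C_1,\ldots,C_l)$ equals $h^{q_1\ldots q_k}(C)_q\cdot\prod_{i=1}^k h(\xi_i)_{q_i}$, I must commute each factor $h(\xi_i)_{q_i}$ past the head-factors of the later children. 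This is the only place where the hypotheses are genuinely used: the single-occurrence condition guarantees that each $\xi_i$ contributes one factor summed over a single state $q_i$ (rather than independent states for several occurrences, which would break the identity), and commutativity of $\cdot$ in $S$ is precisely what legitimizes the reordering. Here I would also invoke the trivial sub-fact, proved by a one-line induction, that $h^{q_1\ldots q_k}(C_j)_{p_j}$ depends only on the $q_i$ with $i\in B_j$, so that the re-indexing of blocks is harmless; the remaining bookkeeping is routine. As noted before the statement, the whole lemma alternatively follows from \cite[Prop.~18]{mal06a} specialized from bottom-up tree series transducers to weighted tree automata.
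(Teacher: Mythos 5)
Your proposal is correct and follows essentially the same route as the paper: the paper also generalizes the lemma to the substitution identity for arbitrary contexts $\zeta\in C_{\Sigma,k}$, proceeds by induction on the context (the paper inducts on height, you on structure, which is equivalent), splits the variables into consecutive blocks over the children, and reorganizes the unfolded sum using distributivity and the commutativity of $S$. Your write-up is in fact more explicit than the paper's sketch about where the single-occurrence condition and commutativity are actually needed, but the underlying argument is the same.
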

\begin{proof} (Sketch.) We can prove the following, more general statement: for every $k\in \nat$, $\zeta \in C_{\Sigma,k}$ , $\xi_1,\ldots,\xi_k \in T_\Sigma$, and $q\in Q$, we have
$$h(\zeta[\xi_1,\ldots,\xi_k])_q = \sum_{q_1,\ldots,q_k \in Q}
h^{q_1\ldots q_k}(\zeta)_q \cdot \prod_{1 \le i \le k}
h(\xi_i)_{q_i},$$
where $\zeta[\xi_1,\ldots,\xi_k]$ denotes the tree obtained by replacing every occurrence of $z_i$ in $\zeta$ by $\xi_i$ for $1\le i\le k$.
Since the case $k=0$ is trivial, we may assume that $k\in \nat_+$ and proceed by induction on the height of $\zeta$. 
If $\height(\zeta)=0$, then $k=1$ and $\zeta=z_1$, hence the statement  holds again trivially. Now let $\height(\zeta)>0$, i.e., $\zeta=\sigma(\zeta_1,\ldots,\zeta_l)$ for some $l\in \nat_+$, $\sigma \in \Sigma_l$, and $\zeta_1,\ldots,\zeta_l\in T_\Sigma(Z_k)$. By standard arguments, there are $k_1,\ldots,k_l \in \nat$ and there are $\eta_{j} \in C_{\Sigma,{k_j}}$ for $1\leq j \leq l$, such that $k_1+\ldots +k_l=k$ and
$$\zeta[\xi_1,\ldots,\xi_k]=\sigma(\eta_1[\xi_1,\ldots,\xi_{k_1}],\ldots, \eta_l[\xi_{k_1+\ldots+k_{l-1}+1},\ldots,\xi_{k}]).$$
Now we can prove the statement by unfolding $h(\zeta[\xi_1,\ldots,\xi_k])_q$ and organizing the computation appropriately.  In the  first step we apply the weighted transition for $\sigma$. Then  the statement is proved for indexes $j$ with $k_j=0$, while we apply the induction hypothesis on $\height(\zeta)$ for indexes $j$ with $k_j\in\nat_+$. 
\end{proof}

\subsection{The construction of $\Phi_{\cal A}$}
\label{sec:construction-of-Phi}

The formulas $\varphi_k(x,y_{1,k})$ are composed of subformulas that simulate certain properties of ${\cal A}$ (cf. Lemma \ref{lm:phi-semantics}). Let us first establish these subformulas and then assemble $\varphi_k(x,y_{1,k})$. Conceptually, we follow the construction of the corresponding formulas in \cite{bolgasmonzei10} and we borrow several notions from there. However, due to the branching inherent in trees, we have to employ sometimes more sophisticated formulas.

As mentioned we will represent (encode) states of $\cal A$ by positions of the input tree. 
A subtask of $\varphi_k(x,y_{1,k})$ is to find out, for a position
$v$, the base position of $v$ and the state encoded by $v$. Next we elaborate the corresponding formulas.

\paragraph{Identifying Base Positions and Coded States.}

Let $\xi \in T_\Sigma$ and $v \in \pos(\xi)$. In Section~\ref{bTC-section} we have defined the base position $\langle v\rangle$ of $v$. We can use the macro $y=\langle x\rangle_n$ in $\bFOmod$ to identify the base position in the sense that:
\begin{equation}
\seml y=\langle x\rangle_n \semr(\xi,v,u) =\\
\left\{
\begin{array}{ll}
1 & \hbox{ if } u = \langle v \rangle\\
0 & \hbox{ otherwise} \enspace.
\end{array}
\right. \label{equ:base-node}
\end{equation}
Then the {\em state encoded by $v$} is the number $|v| - |\langle v \rangle|$. This can be turned into a formula by finite disjunction. Let us denote this number by $[v]$. Due to the definition of $\langle v \rangle$ we have that $[v] \in \{0,\ldots,n-1\}$. Note that $\langle v \rangle \in \pos(\xi)$ and $[v] \in Q$.

For reasons detailed later, we would like the base position $\langle v\rangle$ to coincide with a cut-position of $\xi$. But then, due to the branching inherent in $\xi$, the state $[v]$ may be represented by any node $v'$ satisfying that $\langle v'\rangle=\langle v\rangle$
and $[v']=[v]$. We will avoid this by forcing the assignment to choose a $v$ which is on the leftmost path from $\langle v \rangle$, and this leftmost path must have at least length $n-1$ (in order to be able to encode each of the $n$ states). Thus we define the following macros to identify states:
\label{p:onlmp}

\begin{itemize}
\item $\onlmp_{n-1}(x,y) := \underline{\exists} y_{1,n}. (x=y_1) \wedge
  \formlmp(y_{1,n}) \wedge \underline{\bigvee}_{1 \le i
    \le n} (y_i=y)$\\(there is a path of length $n-1$ starting from $x$, and $y$ is a position of the leftmost such path), 

\item $\formlmp(y_{1,n}) :=
  \formpath(y_{1,n}) \wedge$ \\
\hspace*{27mm} $ \forall z.\bigg[\bigg((y_1 \le_{n-1} z) \wedge (z \not= y_n)\bigg)
\stackrel{+}{\rightarrow} \sibl_{n-1}(y_{n},z) \bigg]$ \\
(the positions $y_1,\ldots, y_n$ form the leftmost path of length $n-1$),

\item $\formpath(y_{0,n}) := \underline{\bigvee}_{w \in \{1,\ldots,\maxrk(\Sigma)\}^n}
\formpath_w(y_{0,n})$\\
($y_0,\ldots,y_n$ form a path).
\end{itemize}

\paragraph{Identifying the Cut-Positions.}

Due to Observation \ref{ob:unique-breadth}, any position $u$ uniquely determines the sequence $\cut_n(\xi,u)$ of cut-positions. The next subtask of $\varphi_k(x,y_{1,k})$ is to identify this sequence. For this we employ the macro $\formcut_{n,k}(x,y_{1,k})$ with $k \ge 0$ such that, for every $u,u_1,\ldots,u_k \in \pos(\xi)$: 
\begin{equation}
\seml \formcut_{n,k}(x, y_{1,k}) \semr(\xi,u,u_{1,k}) =\\
\left\{
\begin{array}{ll}
1 & \hbox{ if } \cut_n(\xi,u)=(u_1,\ldots,u_k) \hbox{ and }\\
0 & \hbox{ otherwise.}
\end{array}
\right. \label{equ:form-cut}
\end{equation}
We define
\begin{align*}\formcut_{n,k}(x,y_{1,k}):= & \left(\bigwedge_{i=1}^{k}
    (x \le_n y_i)\wedge (\height(y_i) \geq n)\right) \wedge
  \left(\bigwedge_{i=1}^{k-1} \sibl_n(y_i, y_{i+1})\right) \wedge \\ &
  \left(\forall z. ((x \le_n z) \wedge (\height(z) \geq n))  \stackrel{+}{\rightarrow}\big(\underline{\bigvee}_{i=1}^k z=y_i\big)\right),
\end{align*}
where we have used the following auxiliary macros:
\begin{itemize}
\item $(\height(x) \ge n) := \underline{\exists} z. (x \le_n z)$
\end{itemize} \label{p:height}
Taking the definition of $\cut_n(\xi,u)$ into account, it is not
difficult to see that our macro satisfies (\ref{equ:form-cut}). 
In particular, $\formcut_{n,0}^\xi(u)$ holds if and only if $\xi|_u \in C^n_{\Sigma,0}$.

\paragraph{Identifying the Head.}

For every $u\in \pos(\xi)$ with $\cut_n(\xi,u) = (u_1,\ldots,u_k)$, we
can identify the piece of $\xi$ which starts at $u$ and ends in
$(u_1,\ldots,u_k)$, which is $\head_n(\xi,u)$. More precisely, for every  $k\in \nat$ and $\zeta \in C_{\Sigma,k}^n$ we define the macro $\chk_\zeta(x,y_{1,k})$ such that for every $\xi \in T_\Sigma$, $u,u_1,\ldots,u_k \in \pos(\xi)$:
\begin{equation}
\seml \chk_\zeta(x,y_{1,k}) \semr (\xi,u,u_{1,k})
=
\left\{
\begin{array}{ll}
1 & \hbox{ if }\zeta = (\xi[z_1]_{u_1} \ldots [z_k]_{u_k})|_u\\
0 & \hbox{ otherwise.}
\end{array} \label{equ:check}
\right.
\end{equation}
Hence in case $k=0$ we have
\[
\seml \chk_\zeta(x) \semr (\xi,u)
=
\left\{
\begin{array}{ll}
1 & \hbox{ if }\zeta = \xi|_u\\
0 & \hbox{ otherwise.}
\end{array}
\right.
\]
The definition of the macro is as follows: \label{p:check}
\begin{align*}\chk_\zeta & (x,y_{1,k}) := \\ & \bigwedge\limits_{w \in
  \pos(\zeta) \setminus \pos_{Z_k}(\zeta)} (\underline{\exists}
y. \; (x \le_w y) \wedge \mathrm{label}_{\zeta(w)}(y))
\wedge \bigwedge\limits_{1 \le i \le k} (x\le_{\pos_{z_i}(\zeta)} y_i) \enspace.
\end{align*}
In case $k=0$ we have
\begin{align*}\chk_\zeta & (x) =  \bigwedge\limits_{w \in
  \pos(\zeta) } (\underline{\exists}
y. \; (x \le_w y) \wedge \mathrm{label}_{\zeta(w)}(y))
\end{align*}
It is easy to observe that (\ref{equ:check}) is satisfied.

\paragraph{Construction of $\Phi_{\cal A}$.}

Now we define the family $\Phi_{\cal A} = (\varphi_k(x,y_{1,k}) \mid 0
\le k \le \max(\Sigma,n))$ of MSO-formulas where 
\[
\varphi_0(x) :=  \bigvee\limits_{0 \le q \le
  n-1}\bigvee\limits_{\zeta \in C_{\Sigma,0}^n} \left( \,\underline{\exists} z.\theta_{q,\zeta}(x,z)\right) \wedge h(\zeta)_q
\]
with 
\begin{eqnarray*}
 \theta_{q,\zeta}(x,z) :=  (z=\langle x\rangle_n) 
\wedge  \formcut_{n,0}(z) \wedge  (z \le_q x) \wedge
\chk_\zeta(z) \enspace,
\end{eqnarray*}
and for every $1 \le k \le \max(\Sigma,n)$ 
\[
\varphi_k(x,y_{1,k}) :=  
\bigvee\limits_{0 \le q_1,\ldots,q_k,q \le n-1}
\bigvee\limits_{\zeta \in C_{\Sigma,k}^n} 
\left( \,\underline{\exists} z,z_{1,k}.\theta_{q,q_{1,k},\zeta}(x,y_{1,k},z,z_{1,k})\right) \wedge
h^{q_1\ldots q_k}(\zeta)_q 
\]
with 
\begin{align*}
\theta_{q,q_{1,k},\zeta}(x,y_{1,k},z,z_{1,k}) :=
 &(z=\langle x\rangle_n)   
\wedge \formcut_{n,k}(z,z_{1,k})\wedge \bigwedge_{1 \le i \le k} \onlmp_{n-1}(z_i,y_i) \wedge \\
& (z \le_q x) \wedge \bigwedge_{1 \le i \le k}  (z_i \le_{q_i} y_i)\wedge  
\chk_\zeta(z,z_{1,k})\enspace.
\end{align*}
Note that $\varphi_k$ is a weighted disjunction and not the Boolean one.

\begin{lm}\rm \label{ob:Phi-bFOmod} $\Phi_{\cal A}$ is a  $\max(\Sigma,n)$-family of  $n$-progressing formulas in  $(\bFOmod)_\mathrm{step}$.
\end{lm}
\begin{proof} First, it is easy to check that each formula $\varphi_k$  is in  $(\bFOmod)_\mathrm{step}$. 

Second, we show that the implication (\ref{equ:progress}) holds. For this, let us assume that $\seml\varphi_{k}(x,y_{1,k})\semr(\xi,v,v_{1,k}) \neq 0$. Due to the definition of $\varphi_k$ there are positions $u, u_{1,k} \in \pos(\xi)$ such that 
\begin{itemize}
\item $u = \langle v \rangle$,
\item $\formcut_{n,k}^\xi(u,u_{1,k})$ and $\onlmp_{n-1}^\xi(u_i,v_i)$ hold
\end{itemize}
Since $\formcut_{n,k}^\xi(u,u_{1,k})$ holds, also $(u \le_n u_i)$ holds, and $\onlmp_{n-1}^\xi(u_i,v_i)$ implies that $|v_i| < |u_i| + n$. Thus  $u_i = \langle v_i \rangle$ for every $1 \le i \le k$. Moreover, $\formcut_{n,k}^\xi(u,u_{1,k})$ implies $\sibl_n^\xi(u_i,u_{i+1})$ for every $1 \le i \le k-1$. This means that $\psi_{k}^\xi(v,v_{1,k})$ holds. 
\end{proof}


\subsection{Proof of Theorem \ref{BbTC-theo}}

Now we will prove Theorem \ref{BbTC-theo}.
We split the proof into three steps. In the first step we determine
the semantics of the formula $\varphi_k(x,y_{1,k})$. We prepare this
by the following technical lemma.

\begin{lm}\rm \label{plus-lemma} For every $\xi\in T_\Sigma$, $0 \le k \le \max(\Sigma,n)$, $v,v_1,\ldots,v_k \in \pos(\xi)$,  $0 \le q_1,\ldots,q_k,q \le  n-1$, and $\zeta \in C_{\Sigma,k}^n$ we have
\begin{quote}
$(\underline{\exists}z,z_{1,k}.\theta_{q,q_{1,k},\zeta})^\xi(v,v_{1,k})$ holds $\iff$
$ \theta^\xi_{q,q_{1,k},\zeta}(v,v_{1,k},\langle v \rangle,\langle v\rangle_{1,k})$ holds,
\end{quote}
 where $\langle v\rangle_{1,k}$ abbreviates the sequence $\langle
v_1 \rangle, \ldots, \langle v_k \rangle$.
\end{lm}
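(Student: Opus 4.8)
The plan is to prove the two implications separately, first recording that every constituent of $\theta_{q,q_{1,k},\zeta}$ is a $\bFOmod$-formula and hence Boolean-valued, so that ``holds'' has its expected meaning and $(\underline{\exists}z,z_{1,k}.\theta_{q,q_{1,k},\zeta})^\xi(v,v_{1,k})$ holds exactly when \emph{some} choice of positions for $z,z_{1,k}$ makes $\theta_{q,q_{1,k},\zeta}$ hold. The direction $\Leftarrow$ is then immediate: if $\theta^\xi_{q,q_{1,k},\zeta}(v,v_{1,k},\langle v\rangle,\langle v\rangle_{1,k})$ holds, then, since $\langle v\rangle,\langle v_1\rangle,\ldots,\langle v_k\rangle\in\pos(\xi)$, these positions witness the existential quantifiers. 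All the content lies in $\Rightarrow$, where I would show that \emph{any} witnesses for $z,z_{1,k}$ are forced to be exactly $\langle v\rangle$ and $\langle v_1\rangle,\ldots,\langle v_k\rangle$; substituting them back yields the right-hand side.

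So I would assume $(\underline{\exists}z,z_{1,k}.\theta_{q,q_{1,k},\zeta})^\xi(v,v_{1,k})$ and fix witnessing positions $u,u_1,\ldots,u_k\in\pos(\xi)$ with $\theta^\xi_{q,q_{1,k},\zeta}(v,v_{1,k},u,u_{1,k})$. The conjunct $(z=\langle x\rangle_n)$ of $\theta_{q,q_{1,k},\zeta}$, evaluated at $x\mapsto v$, $z\mapsto u$, forces $u=\langle v\rangle$ directly by~(\ref{equ:base-node}); this pins down the first witness. The real work, which I expect to be the main obstacle, is to show $u_i=\langle v_i\rangle$ for each $1\le i\le k$. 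I would extract three length facts from the remaining conjuncts. First, $\langle v\rangle$ is a base position, so $|u|=|\langle v\rangle|\equiv 0\pmod{n}$. Second, $\formcut_{n,k}(z,z_{1,k})$ contains the conjunct $\desc_n(z,z_i)$, which at $z\mapsto u=\langle v\rangle$, $z_i\mapsto u_i$ gives $u_i=\langle v\rangle w$ with $|w|=n$, whence $|u_i|=|\langle v\rangle|+n\equiv 0\pmod{n}$. Third, the conjunct $\desc_{q_i}(z_i,y_i)$ at $z_i\mapsto u_i$, $y_i\mapsto v_i$ gives $v_i=u_iw_i$ with $|w_i|=q_i$ and $0\le q_i\le n-1$; thus $u_i$ is a prefix of $v_i$ with $0\le|v_i|-|u_i|<n$.

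Now I would invoke the definition of the base position: $\langle v_i\rangle$ is the prefix of $v_i$ whose length is a multiple of $n$ and satisfies $|v_i|-|\langle v_i\rangle|<n$. Since the interval $(|v_i|-n,\,|v_i|]$ contains exactly one multiple of $n$, and both $|u_i|$ and $|\langle v_i\rangle|$ are multiples of $n$ lying in it, they coincide; as $u_i$ and $\langle v_i\rangle$ are then prefixes of $v_i$ of equal length, $u_i=\langle v_i\rangle$. Having shown that every witness must equal $u=\langle v\rangle$ and $u_i=\langle v_i\rangle$, the formula $\theta^\xi_{q,q_{1,k},\zeta}(v,v_{1,k},\langle v\rangle,\langle v\rangle_{1,k})$ holds, which finishes $\Rightarrow$ and hence the lemma. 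I would note in passing that the conjuncts $\onlmp_{n-1}(z_i,y_i)$ and $\chk_\zeta(z,z_{1,k})$ are not needed to pin the witnesses down; they are simply further requirements automatically inherited once the witnesses are known to be the base positions. It is thus the modular length bookkeeping---$|\langle v\rangle|$ being a multiple of $n$, $\desc_n$ adding exactly $n$, and $\desc_{q_i}$ adding less than $n$---that does the actual work; consistently, Observation~\ref{ob:unique-breadth} guarantees that the $u_i$, being the cut-positions $\cut_n(\xi,\langle v\rangle)$, are uniquely determined in the first place.
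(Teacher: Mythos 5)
Your proof is correct and follows essentially the same route as the paper's: the conjunct $(z=\langle x\rangle_n)$ pins down $u=\langle v\rangle$, and then $\formcut_{n,k}$ together with $\desc_{q_i}(z_i,y_i)$ and $0\le q_i\le n-1$ forces $u_i=\langle v_i\rangle$. The only difference is that you spell out the modular length bookkeeping (multiples of $n$ in the interval $(|v_i|-n,\,|v_i|]$) which the paper compresses into a single sentence, so your write-up is a more detailed version of the same argument.
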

\begin{proof} The direction $\Leftarrow$ holds by definition. 
To show the direction $\Rightarrow$, assume that there are $u,u_1,\ldots,u_k \in \pos(\xi)$ such that $\theta^\xi_{q,q_{1,k},\zeta}(v,v_{1,k},u,u_{1,k})$ holds. Then, in particular, we have that 
\begin{enumerate}
\item [(a)] $(u=\langle v\rangle)^\xi$, 
\item [(b)] $\formcut_{n,k}^\xi(u,u_{1,k})$, and
\item [(c)] $(u_i \le_{q_i} v_i)^\xi$ holds for every $1\leq i \leq k$.
\end{enumerate}
Hence $u=\langle v \rangle$ by (a). By (b), we have $\formcut_{n,k}^\xi(\langle v \rangle,u_{1,k})$.  This latter, Condition (c), and the fact that $0\leq q_i \leq n-1$ for every $1\leq i\leq k$ imply that $u_i=\langle v_i \rangle$ for every $1\leq i\leq k$.
\end{proof}

Now we are able to characterize  the semantics of  $\varphi_k(x,y_{1,k})$.

\begin{lm}\rm \label{lm:phi-semantics} For every $\xi\in T_\Sigma$, $0 \le k \le \max(\Sigma,n)$,   and $v,v_1,\ldots,v_k \in \pos(\xi)$, we have
\[
\seml \varphi_k(x,y_{1,k})\semr (\xi, v, v_{1,k}) =
\left\{
\begin{array}{ll}
 h^{[v_1]\ldots [v_k]}(\head_n(\xi,\langle v \rangle)_{[v]} 
& \hbox{if }\formcut_{n,k}^\xi(\langle v \rangle,\langle v\rangle_{1,k}) \hbox{ and } \\ 
& \onlmp^\xi_{n-1}(\langle v_i
\rangle,v_i) \hbox{ hold }\\
& \hbox{for every } 1 \le i \le k\\[2mm]
0 & \hbox{otherwise } \enspace.
\end{array}
\right.
\]
\end{lm}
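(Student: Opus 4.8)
The plan is to unfold the semantics of the disjunction-of-conjunctions defining $\varphi_k$ and to show that, for fixed $\xi$ and $v,v_{1,k}$, at most one summand survives. Writing out the definition of $\seml\cdot\semr$ for $\vee$ and $\wedge$ gives
\[
\seml\varphi_k(x,y_{1,k})\semr(\xi,v,v_{1,k}) = \sum_{0\le q_1,\ldots,q_k,q\le n-1}\ \sum_{\zeta\in C_{\Sigma,k}^n} \seml\underline{\exists}z,z_{1,k}.\theta_{q,q_{1,k},\zeta}\semr(\xi,v,v_{1,k})\cdot h^{q_1\ldots q_k}(\zeta)_q,
\]
where each Boolean factor $\seml\underline{\exists}z,z_{1,k}.\theta_{q,q_{1,k},\zeta}\semr(\xi,v,v_{1,k})$ takes a value in $\{0,1\}$ because $\theta_{q,q_{1,k},\zeta}$ is a $\bFO$-formula. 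Hence every summand equals either $0$ or the coefficient $h^{q_1\ldots q_k}(\zeta)_q$, and it remains to determine precisely which summands are nonzero. The degenerate case $k=0$ is handled in the same manner, with $\theta_{q,\zeta}$ in place of $\theta_{q,q_{1,k},\zeta}$.

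Next I would apply Lemma \ref{plus-lemma} to each summand: the existential $\underline{\exists}z,z_{1,k}.\theta_{q,q_{1,k},\zeta}$ holds at $(v,v_{1,k})$ if and only if $\theta^\xi_{q,q_{1,k},\zeta}(v,v_{1,k},\langle v\rangle,\langle v\rangle_{1,k})$ holds, which fixes the witnesses to the base positions once and for all. With the witnesses fixed, I would read off the contribution of each conjunct of $\theta$ from the semantics of the participating macros: $(z=\langle x\rangle_n)$ is automatically satisfied; $\desc_q(z,x)$ forces $q=[v]$ and each $\desc_{q_i}(z_i,y_i)$ forces $q_i=[v_i]$ by the very definition of the encoded state; $\chk_\zeta(z,z_{1,k})$ forces $\zeta=(\xi[z_1]_{\langle v_1\rangle}\ldots[z_k]_{\langle v_k\rangle})|_{\langle v\rangle}$ by \eqref{equ:check}; while $\formcut_{n,k}(z,z_{1,k})$ and $\bigwedge_i\onlmp_{n-1}(z_i,y_i)$ reduce exactly to the side conditions $\formcut_{n,k}^\xi(\langle v\rangle,\langle v\rangle_{1,k})$ and $\onlmp_{n-1}^\xi(\langle v_i\rangle,v_i)$.

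I would then split into the two cases of the claim. If $\formcut_{n,k}^\xi(\langle v\rangle,\langle v\rangle_{1,k})$ and all $\onlmp_{n-1}^\xi(\langle v_i\rangle,v_i)$ hold, then by \eqref{equ:form-cut} and Observation \ref{ob:unique-breadth} we have $\cut_n(\xi,\langle v\rangle)=(\langle v_1\rangle,\ldots,\langle v_k\rangle)$, so the tree forced by $\chk_\zeta$ is precisely $\head_n(\xi,\langle v\rangle)\in C_{\Sigma,k}^n$ and therefore lies in the index set of the disjunction. Consequently exactly one summand survives, namely the one with $q=[v]$, $q_i=[v_i]$, $\zeta=\head_n(\xi,\langle v\rangle)$, and the sum collapses to $h^{[v_1]\ldots[v_k]}(\head_n(\xi,\langle v\rangle))_{[v]}$, as required. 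Conversely, if $\formcut_{n,k}^\xi(\langle v\rangle,\langle v\rangle_{1,k})$ fails or some $\onlmp_{n-1}^\xi(\langle v_i\rangle,v_i)$ fails, then the corresponding conjunct of $\theta$ is false at the fixed witnesses for every choice of $q,q_{1,k},\zeta$, so every summand vanishes and the value is $0$; for $k=0$ the (empty) family of $\onlmp$-conditions is vacuously true and $\formcut_{n,0}^\xi(\langle v\rangle)$ just expresses $\xi|_{\langle v\rangle}\in C_{\Sigma,0}^n$.

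The step I expect to be the main obstacle is the collapse of the double sum to a single term: beyond checking that $\desc_q$, the $\desc_{q_i}$ and $\chk_\zeta$ pin down $q$, the $q_i$ and $\zeta$ uniquely at the fixed witnesses, one must verify that the unique surviving $\zeta$ genuinely belongs to the index set $C_{\Sigma,k}^n$ over which the disjunction ranges. This is exactly where $\formcut$ (through Observation \ref{ob:unique-breadth}) is indispensable, since $\chk_\zeta$ by itself determines the labels and variable positions of $\zeta$ but does not guarantee membership in $C_{\Sigma,k}^n$.
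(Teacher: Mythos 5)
Your proposal is correct and follows essentially the same route as the paper's proof: both use Lemma \ref{plus-lemma} to fix the existential witnesses at the base positions $\langle v\rangle,\langle v\rangle_{1,k}$, then read off the conjuncts of $\theta_{q,q_{1,k},\zeta}$ to pin down $q=[v]$, $q_i=[v_i]$, $\zeta=\head_n(\xi,\langle v\rangle)$, and split into the same two cases. Your version merely makes explicit what the paper leaves implicit — the unfolding of the disjunction into a sum and the argument that exactly one summand survives (including the membership $\head_n(\xi,\langle v\rangle)\in C_{\Sigma,k}^n$, which follows from $\formcut$ via Observation \ref{ob:unique-breadth}) — which is a welcome clarification rather than a different approach.
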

\begin{proof}  
\underline{Case 1:} $\formcut_{n,k}^\xi(\langle v \rangle,\langle v\rangle_{1,k})$ and  $\onlmp^\xi_{n-1}(\langle v_i
\rangle,v_i)$ hold for every $1 \le i \le k$. Then
$\chk^\xi_\zeta(\langle v \rangle,\langle v
\rangle_{1,k})$ holds for  $\zeta =
\head_n(\xi,\langle v \rangle)$ (due to Equation (\ref{equ:check})). Moreover,
\begin{quote}
$(\langle v\rangle \le_q v)^\xi$ holds iff  $q = [v]$
and 
$(\langle v_i\rangle \le_{q_i} v_i)^\xi$ holds  iff $q_i = [v_i]$.
\end{quote}

\noindent Then  $\theta^\xi_{[v],[v]_{1,k},\zeta}(v,v_{1,k},\langle v\rangle,\langle v\rangle_{1,k})$
holds and thus, by Lemma \ref{plus-lemma},
$(\underline{\exists}z,z_{1,k}.\theta_{[v],[v]_{1,k},\zeta})^\xi(v,v_{1,k})$ holds, where $[v]_{1,k}$ abbreviates the sequence $[v_1],\ldots,[v_k]$. 
Altogether this means that $\seml \varphi_k(x,y_{1,k})\semr (\xi, v, v_{1,k}) =
 h^{[v_1]\ldots [v_k]}(\head_n(\xi,\langle v \rangle))_{[v]}$.

\underline{Case 2:}  $\formcut_{n,k}^\xi(\langle v \rangle,\langle v\rangle_{1,k})$ does not hold or $\onlmp^\xi_{n-1}(\langle v_i
\rangle,v_i)$ does not hold for some $1 \le i \le k$. Then
for every $0 \le q_{1,k},q \le n-1$ and $\zeta \in C_{\Sigma,k}^n$, the property
  $\theta^\xi_{q,q_{1,k},\zeta}(v,v_{1,k},\langle v\rangle, \langle v\rangle_{1,k})$ does not hold and thus, by Lemma \ref{plus-lemma}, 
$(\underline{\exists}z,z_{1,k}.\theta_{q,q_{1,k},\zeta})^\xi(v,v_{1,k})$ does not hold. Hence
   $\seml
  \varphi_k(x,y_{1,k})\semr (\xi, v, v_{1,k}) = 0$.
\end{proof}

In the second step, we prove that in the disjunction (on $l$) which defines $\seml \BTC(\Phi_{\cal A}) \semr(\xi,\varepsilon)$ only one member may differ from 0. In the following we abbreviate  $\Phi_{\cal A}$ by $\Phi$.

\begin{lm} \rm \label{lm:level}
For every  $l \in \nat_+$, $w \in \nat_+^*$, $\xi \in T_\Sigma$, and $v \in \pos(\xi)$, if $l
\not= \size(\dec_n(\xi|_{\langle v \rangle}))$, then $\seml \BTC^{l}_w(\Phi)\semr(\xi,v) =0$. Hence $$\seml \BTC(\Phi)\semr(\xi,v) = \seml \BTC^{\size(\dec_n(\xi|_{\langle v \rangle}))}_w(\Phi)\semr(\xi,v).$$
\end{lm}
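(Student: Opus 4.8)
The plan is to establish the first (quantitative vanishing) statement by induction on $l$, the intuition being that each level of the branching transitive closure consumes exactly the topmost slice of the slice-decomposition $\dec_n(\xi|_{\langle v\rangle})$, so that the only surviving level is the total number of slices. The second statement is then immediate. Throughout I would freely use that, by absorption of $0$, a product of semiring values is nonzero only if every factor is nonzero, and a sum is zero as soon as all its summands are zero.

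For the base case $l=1$ I would use $2n\mbox{-}\TC^1_x(\Phi)=\varphi_0(x)$ and Lemma~\ref{lm:phi-semantics} with $k=0$: a nonzero value forces $\formcut_{n,0}^\xi(\langle v\rangle)$, i.e.\ $\xi|_{\langle v\rangle}\in C_{\Sigma,0}^n$, i.e.\ $\height(\xi|_{\langle v\rangle})<2n$. By the observation that $\size(\dec_n(\zeta))=1$ iff $\height(\zeta)<2n$, this is exactly $\size(\dec_n(\xi|_{\langle v\rangle}))=1$; hence $1\neq\size(\dec_n(\xi|_{\langle v\rangle}))$ forces $\seml 2n\mbox{-}\TC^1_x(\Phi)\semr(\xi,v)=0$.

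For the inductive step I would fix $l\ge 1$, assume the statement for all levels $\le l$, and expand $\seml 2n\mbox{-}\TC^{l+1}_x(\Phi)\semr(\xi,v)$ by its definition into a sum over $1\le k\le\max(\Sigma,n)$, over tuples $v_{1,k}\in\pos(\xi)^k$, and over compositions $l_1+\cdots+l_k=l$, of products
\[
\seml \desc_{[1,2n],k}(x,y_{1,k})\wedge\varphi_k(x,y_{1,k})\semr(\xi,v,v_{1,k})\cdot\prod_{i=1}^{k}\seml 2n\mbox{-}\TC^{l_i}_{y_i}(\Phi)\semr(\xi,v_i).
\]
In any nonzero summand the $\varphi_k$-factor is nonzero, so by Lemma~\ref{lm:phi-semantics} and~(\ref{equ:form-cut}) we get $\cut_n(\xi,\langle v\rangle)=(\langle v_1\rangle,\ldots,\langle v_k\rangle)$; in particular $k$ is the true number of cut-positions and the bases $\langle v_i\rangle$ are precisely those cut-positions. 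Moreover each transitive-closure factor is nonzero, so the induction hypothesis gives $l_i=\size(\dec_n(\xi|_{\langle v_i\rangle}))$ for every $i$. The crux is then the identity
\[
\size(\dec_n(\xi|_{\langle v\rangle}))=1+\sum_{i=1}^{k}\size(\dec_n(\xi|_{\langle v_i\rangle})),
\]
which I would obtain by unfolding $\dec_n(\eta)=\head_n(\eta)(\dec_n(\eta|_{w_1}),\ldots,\dec_n(\eta|_{w_k}))$ for $\eta=\xi|_{\langle v\rangle}$ with $\cut_n(\eta)=(w_1,\ldots,w_k)$, applying the definition of $\size$, and using $\langle v_i\rangle=\langle v\rangle w_i$ and $\eta|_{w_i}=\xi|_{\langle v_i\rangle}$. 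Combining with $l=\sum_i l_i$ yields $l+1=\size(\dec_n(\xi|_{\langle v\rangle}))$; hence if $l+1\neq\size(\dec_n(\xi|_{\langle v\rangle}))$ every summand vanishes, which closes the induction.

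Finally, to read off the displayed equality I would recall $\seml 2n\mbox{-}\TC_x(\Phi)\semr(\xi,v)=\sum_{1\le l\le\size(\xi)}\seml 2n\mbox{-}\TC^l_x(\Phi)\semr(\xi,v)$ and note $1\le\size(\dec_n(\xi|_{\langle v\rangle}))\le\size(\xi)$, so the unique index not killed by the first statement lies in the summation range and the sum collapses to that single term. I expect the main obstacle to be the bookkeeping of absolute versus relative positions when turning the $\formcut$-constraint delivered by $\varphi_k$ into the recursion governing $\dec_n$, and keeping the base-position constraints from $\varphi_k$ aligned with the level constraints from the induction hypothesis so that the size identity is applicable.
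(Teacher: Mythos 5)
Your proof is correct and follows essentially the same route as the paper's: induction on $l$, with the base case settled by the $\formcut_{n,0}$/height characterization and the inductive step combining Lemma \ref{lm:phi-semantics} (forcing $\cut_n(\xi,\langle v\rangle)=(\langle v_1\rangle,\ldots,\langle v_k\rangle)$), the size recursion for $\dec_n$, and the induction hypothesis; the paper merely phrases the step as a proof by contradiction where you argue contrapositively, which is logically the same. Your extra care about the summation range ($\size(\dec_n(\xi|_{\langle v\rangle}))\le\size(\xi)$) and the relative-versus-absolute position bookkeeping only makes explicit what the paper leaves implicit.
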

\begin{proof} We prove the statement by induction on $l$.

\underline{$l=1$}: By our assumption $\size(\dec_n(\xi|_{\langle v
  \rangle}))> 1$. Then $\seml \formcut_{n,0}(x)\semr (\xi, v)=0$ and
thus we have $\seml \BTC^{1}_w(\Phi)\semr(\xi,v) =0$

\underline{$l\Rightarrow l+1$}: Let us assume that $l+1
\not= \size(\dec_n(\xi|_{\langle v \rangle}))$ and that for every $l'\le l$, $w' \in \nat_+^*$, and $v' \in \pos(\xi)$, if $l'\not= \size(\dec_n(\xi|_{\langle v' \rangle}))$, then $\seml
\BTC^{l'}_{w'}(\Phi)\semr(\xi,v') =0$. We prove by contradiction. Therefore, we assume that $\seml
\BTC^{l+1}_w(\Phi)\semr(\xi,v) \ne 0$. This latter, by definition, means that there are $k\geq 1$ and $v_{1},\ldots,v_k \in \pos(\xi)$ such that
\begin{enumerate}
\item[(a)] $\seml \varphi_{k}(x_w,x_{w1,k}) \semr (\xi,v, v_{1,k}) \ne 0$, and 
\item[(b)] $\seml \bigvee\limits_{\substack{l_1,...,l_k \in \nat_+\\l_1+...+l_k = l}}
\bigwedge\limits_{1 \le i \le k} 
 \BTC^{l_i}_{wi}(\Phi)\semr (\xi,v_{1,k}) \ne 0$.
\end{enumerate}
Condition (a) implies that $\psi_{k}^\xi(v,v_{1,k})$ holds. 
By condition (a) and Lemma \ref{lm:phi-semantics}, we obtain that $\formcut_{n,k}^\xi(\langle v\rangle,\langle v\rangle_{1,k})$ holds, which means that $\cut_n(\xi,\langle v \rangle) = (\langle v_1\rangle,\ldots,\langle v_k\rangle)$ (Equation \ref{equ:form-cut}).  Thus
$$\size(\dec_n(\xi|_{\langle v \rangle}))=1+\sum_{i=1}^k \size(\dec_n(\xi|_{\langle v_i \rangle})).$$
Moreover, condition (b) means that there are $l_1,...,l_k \in \nat_+$ with $l_1+...+l_k = l$ such that, for every $1\leq i\leq k$, we have
$\seml \BTC^{l_i}_{wi}(\Phi)\semr (\xi, v_i) \neq 0$.
On the other hand, by our assumption, there is a $1\leq j\leq k$ such that $l_j \neq \size(\dec_n(\xi|_{\langle v_j \rangle}))$.
For this $j$, by the induction hypothesis, we have $\seml \BTC^{l_j}_{wj}(\Phi)\semr (\xi, v_j) = 0$, which is a contradiction.
Hence $\seml \BTC^{l+1}_w(\Phi)\semr(\xi,v) = 0$.
\end{proof}

In the third step we prove that in the disjunction (on $k$) which defines $\BTC^{l+1}_w(\Phi)$  only one member may differ from 0.

\begin{lm}\rm  \label{lm:rank} Let $\xi \in T_\Sigma$, $v \in
  \pos(\xi)$ with $\cut_n(\xi,\langle v \rangle) = (u_1,\ldots,u_k)$ for some $k \in
  \mathbb{N}$, and  $w \in \nat_+^*$. Then  
\[\seml \BTC^{l+1}_w(\Phi)\semr(\xi,v) =
\seml \exists  x_{w1,k}. \; \varphi_k(x_w,x_{w1,k}) \wedge
\bigvee\limits_{\substack{l_1,...,l_k \in \nat_+\\l_1+...+l_k = l}}
\bigwedge\limits_{1 \le i \le k}
 \BTC^{l_i}_{wi}(\Phi)\semr (\xi, v)\]
for every $l \in \nat$.
\end{lm}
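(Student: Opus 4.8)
The plan is to show that, among the disjuncts over $k$ in the definition of $2n\mbox{-}\TC^{l+1}_x(\Phi)$, only the one for $k=j$ can contribute a value different from $0$, where $j$ is the number of cut-positions below $\langle v\rangle$. First I would invoke the semantics of disjunction, $\seml\varphi\vee\chi\semr=\seml\varphi\semr+\seml\chi\semr$, to rewrite the left-hand side as
\[
\seml 2n\mbox{-}\TC^{l+1}_x(\Phi)\semr(\xi,v)=\sum_{1\le k\le \max(\Sigma,n)}\seml D_k\semr(\xi,v),
\]
where the $k$-th disjunct is
\[
D_k:=\exists y_{1,k}.\,\desc_{[1,2n],k}(x,y_{1,k})\wedge\varphi_k(x,y_{1,k})\wedge\bigvee_{\substack{l_1,\ldots,l_k\in\nat_+\\ l_1+\cdots+l_k=l}}\bigwedge_{1\le i\le k}2n\mbox{-}\TC^{l_i}_{y_i}(\Phi).
\]

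Next I would expand each $\seml D_k\semr(\xi,v)$ by the semantics of $\exists$ and $\wedge$ into a sum over tuples $v_{1,k}\in\pos(\xi)^k$ of a product whose middle factor is $\seml\varphi_k(x,y_{1,k})\semr(\xi,v,v_{1,k})$. Since $0$ is absorbing for multiplication in $S$, each summand vanishes unless this factor is nonzero. By Lemma~\ref{lm:phi-semantics}, $\seml\varphi_k(x,y_{1,k})\semr(\xi,v,v_{1,k})\ne 0$ forces $\formcut_{n,k}^\xi(\langle v\rangle,\langle v\rangle_{1,k})$ to hold, which by Equation~(\ref{equ:form-cut}) means that $\cut_n(\xi,\langle v\rangle)$ is a sequence of length exactly $k$.

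The key observation is then that, by Observation~\ref{ob:unique-breadth}, $\cut_n(\xi,\langle v\rangle)$ is the single well-defined tuple $(u_1,\ldots,u_j)$, hence of length $j$. Consequently $\formcut_{n,k}$ can hold only when $k=j$, so every disjunct $D_k$ with $k\ne j$ evaluates to $0$ for all choices of $v_{1,k}$, and the sum over $k$ collapses to the single term $\seml D_j\semr(\xi,v)$, which is precisely the right-hand side claimed in the lemma.

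I expect the only genuine subtlety --- rather than a real obstacle --- to lie in the bookkeeping around the inner summation range $l_1+\cdots+l_j=l$ and the degenerate case $j=0$. Widening the range to $l_i\in\nat$ changes nothing, since by Lemma~\ref{lm:level} a nonzero $\seml 2n\mbox{-}\TC^{l_i}_{y_i}(\Phi)\semr(\xi,v_i)$ already forces $l_i=\size(\dec_n(\xi|_{\langle v_i\rangle}))\ge 1$, so the spurious $l_i=0$ terms contribute $0$ regardless of the convention chosen; the case $j=0$ is handled directly since then no disjunct with $k\ge 1$ can be nonzero. The whole argument is carried by the uniqueness of the cut together with Lemma~\ref{lm:phi-semantics}, with no deep step beyond distributing the weighted semantics of the quantifiers and connectives.
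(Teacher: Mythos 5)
Your proposal is correct and follows essentially the same route as the paper: both arguments use Lemma \ref{lm:phi-semantics} together with Equation (\ref{equ:form-cut}) to show that a nonzero value of $\seml\varphi_k\semr(\xi,v,v_{1,k})$ forces $\cut_n(\xi,\langle v\rangle)$ to have breadth $k$, then invoke the uniqueness of the cut (Observation \ref{ob:unique-breadth}) to kill every disjunct with $k\ne j$, so that the sum over $k$ collapses to the single term for $k=j$. Your extra remarks on the $l_i\in\nat$ versus $l_i\in\nat_+$ bookkeeping and the degenerate case $j=0$ are sensible housekeeping that the paper glosses over, but they do not change the substance of the argument.
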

\begin{proof} First we show by contradiction that, for every $k' \in \mathbb{N}$ with $k' \not= k$ and
  $v_1,\ldots,v_{k'} \in \pos(\xi)$, we have that $\seml
  \varphi_{k'}(x,y_{1,k'})\semr (\xi,v,v_{1,k'}) = 0$.
Assume that there are $k' \,(\not= k)$ and 
  $v_1,\ldots,v_{k'} \in \pos(\xi)$  such that $\seml
  \varphi_{k'}(x,y_{1,k'})\semr (\xi,v,v_{1,k'}) \not= 0$. Then, by Lemma \ref{lm:phi-semantics}, we have $\formcut_{n,k'}^\xi(\langle v\rangle,\langle v\rangle_{1,k'})$ holds, i.e., $\cut_n(\xi,\langle v \rangle) = (\langle v_1\rangle,\ldots,\langle v_{k'}\rangle)$ (by Equation \ref{equ:form-cut}). 
 But this contradicts the fact that the breadth of
$\cut_n(\xi,\langle v \rangle)$ is $k$ and  the uniqueness of the breadth of a cut
(cf. Observation \ref{ob:unique-breadth}). 
Then we can calculate as follows:

\

\hspace*{-6mm}\begin{tabular}{ll}
 & $\seml \BTC^{l+1}_w(\Phi) \semr(\xi,v)$\\[2mm]

= & $\seml\bigvee\limits_{0 \le k' \le m} \exists
  x_{w1,k'}.  \varphi_{k'}(x_w,x_{w1,k'}) \wedge
\bigvee\limits_{\substack{l_1,...,l_{k'} \in \nat_+\\l_1+...+l_{k'} = l}}
\bigwedge\limits_{1 \le i \le k'}
 \BTC^{l_i}_{wi}(\Phi)\semr(\xi,v)$\\[6mm]

= & $\sum\limits_{0 \le k' \le m}\seml \exists
  x_{w1,k'}.  \varphi_{k'}(x_w,x_{w1,k'}) \wedge
\bigvee\limits_{\substack{l_1,...,l_{k'} \in \nat_+\\l_1+...+l_{k'} = l}}
\bigwedge\limits_{1 \le i \le k'}
 \BTC^{l_i}_{wi}(\Phi)\semr(\xi,v)$\\[6mm]

= & $\seml \exists  x_{w1,k}.   \varphi_k(x_w,x_{w1,k}) \wedge
\bigvee\limits_{\substack{l_1,...,l_k \in \nat_+\\l_1+...+l_k = l}}
\bigwedge\limits_{1 \le i \le k}
 \BTC^{l_i}_{wi}(\Phi)\semr (\xi, v)$\\[2mm]

& (since $\seml
  \varphi_{k'}(x_w,x_{w1,k'})\semr (\xi,v,v_{1,k'}) = 0$ for every $k'\ne k$ and $v_1,\ldots,v_{k'}\in\pos(\xi)$\\
& by Lemma \ref{lm:phi-semantics})\enspace.
\end{tabular} 

\

\noindent This proves the statement.
\end{proof}

\

\begin{proof} of Theorem \ref{BbTC-theo}. Let $\xi \in T_\Sigma$. 

  \underline{Case 1:} $\height(\xi)<n$. Then 
\[
\begin{array}{cl}
  & \seml \BTC(\Phi)\semr( \xi, \varepsilon)\\[1mm]

= & \seml \BTC^{ \size(\dec_n(\xi))}_\varepsilon(\Phi)\semr( \xi,
\varepsilon) \hspace{5mm} \hbox{(by Lemma \ref{lm:level})}\\[2mm]

= & \seml \BTC^{1}_\varepsilon(\Phi)\semr( \xi, \varepsilon) \hspace{5mm}
\hbox{(because $\height(\xi)<n$ and by Obs. \ref{ob:small})} \\[2mm]

= & \seml \varphi_0(x_\varepsilon)\semr( \xi, \varepsilon)  \hspace{5mm}
\hbox{(by definition of $\BTC_\varepsilon^{1}(\Phi)$)} \\[2mm]

= &  h(\xi)_{0} \hspace{5mm}
\hbox{(by Lemma \ref{lm:phi-semantics} and the fact that
  $[\varepsilon] = 0$)}\\[2mm]

= & r_{\cal A}(\xi) \enspace.
\end{array}
\]

\underline{Case 2:} $\height(\xi)\geq n$. We consider
the following statement:

\label{Statement1}
\begin{quote}\underline{Statement 1.} For every  $l \ge 1$, $w\in \nat_+^*$, and  $v \in \pos(\xi)$, \\
if $l = \size(\dec_n(\xi|_{\langle v \rangle}))$ and
$\onlmp^\xi_{n-1}(\langle v\rangle,v)$ holds, then $\seml \BTC^{l}_w(\Phi)\semr( \xi, v) = h(\xi|_{\langle v \rangle})_{[v]}.$
\end{quote}

If Statement 1 holds, then we obtain
$$\seml \BTC(\Phi)\semr( \xi, \varepsilon)
=  \seml \BTC^{\size(\dec_n(\xi))}_\varepsilon(\Phi)\semr( \xi, \varepsilon) = h(\xi)_{[\varepsilon]} =  h(\xi)_0 =  r_{\cal A}(\xi),$$
where the first and the second equalities are justified by  Lemma \ref{lm:level} and Statement 1, respectively.

Finally, we prove Statement 1 by induction on $l$. 

\underline{$l=1$}: We have
\begin{align*}
\seml \BTC^{1}_w(\Phi)\semr( \xi, v)& =  \seml \varphi_0(x_w)\semr( \xi, v)  \hspace{5mm}
\hbox{(by the definition of $\BTC_w^{1}(\Phi)$)} \\[2mm]
& = h(\xi|_{\langle v\rangle})_{[v]} 
\hbox{\hspace{15mm}(by Lemma \ref{lm:phi-semantics})} \enspace.
\end{align*}

\underline{$l\Rightarrow l+1$}:  We assume
that  $l + 1 = \size(\dec_n(\xi|_{\langle v \rangle}))$ and
that  Statement 1 holds for every  $1\le l' \le l$. We denote the cut-positions
below $\langle v \rangle$ by $u_i$, i.e., $\cut_n(\xi,\langle v \rangle) =
(u_1,\ldots,u_k)$ for some  $k \ge 1$. Then we can calculate as follows.

\

\begin{tabular}{ll}

& $\seml \BTC^{l+1}_w(\Phi)\semr( \xi, v)$\\[5mm]

= & $\seml \exists x_{w1,k}. \; 
  \varphi_k(x_w,x_{w1,k})  \wedge
\bigvee\limits_{\substack{l_1,...,l_k \in \nat_+\\l_1+...+l_k = l}}
\bigwedge\limits_{1 \le i \le k}
 \BTC^{l_i}_{wi}(\Phi)\semr (\xi, v)$\\[-3mm]
& (by Lemma \ref{lm:rank})\\[6mm]

= & $\sum\limits_{v_1,\ldots,v_k \in \pos(\xi)}
\seml   \varphi_k(x_w,x_{w1,k}) \; \wedge \; 
\bigvee\limits_{\substack{l_1,...,l_k \in \nat_+\\l_1+...+l_k = l}}
\bigwedge\limits_{1 \le i \le k}
 \BTC^{l_i}_{wi}(\Phi)\semr (\xi, v,v_{1,k})$\\[12mm]

= & $\sum\limits_{v_1,\ldots,v_k \in \pos(\xi)}
\bigg[\seml   \varphi_k(x_w,x_{w1,k}) \semr  (\xi, v,v_{1,k}) \cdot 
\sum\limits_{\substack{l_1,...,l_k \in \nat_+\\l_1+...+l_k = l}}
\prod\limits_{1 \le i \le k}
 \seml \BTC^{l_i}_{wi}(\Phi)\semr (\xi,v_i)\bigg]$\\[9mm]

= & $\sum\limits_{\substack{v_1,\ldots,v_k \in \pos(\xi):\\u_1=\langle v_1\rangle, \ldots, u_k=\langle v_k\rangle,\\
\onlmp^\xi_{n-1}(\langle v_i \rangle,v_i) }}
\bigg[h^{[v_1]\ldots [v_k]}(\head_n(\xi,\langle v \rangle))_{[v]} \cdot$\\[-7mm]
& \hfill $\sum\limits_{\substack{l_1,...,l_k \in \nat_+\\l_1+...+l_k = l}}
\prod\limits_{1 \le i \le k}
 \seml \BTC^{l_i}_{wi}(\Phi)\semr (\xi,v_i)\bigg]$\\[5mm]
&(by Lemma \ref{lm:phi-semantics})\\[8mm]

= & $\sum\limits_{\substack{v_1,\ldots,v_k \in \pos(\xi):\\u_1=\langle v_1\rangle, \ldots, u_k=\langle v_k\rangle,\\
\onlmp^\xi_{n-1}(\langle v_i \rangle,v_i)}}
\bigg[h^{[v_1]\ldots [v_k]}(\head_n(\xi,\langle v \rangle))_{[v]} \cdot$\\[-5mm] 
&\hfill  $\prod\limits_{1 \le i \le k}
 \seml \BTC^{\size(\dec_n(\xi|_{\langle v_i\rangle}))}_{wi}(\Phi)\semr (\xi,v_i)\bigg]$\\[3mm]

& (by Lemma \ref{lm:level})\\
\end{tabular}

\begin{tabular}{ll}

= & $\sum\limits_{
\substack{v_1,\ldots,v_k \in \pos(\xi):\\
u_1=\langle v_1\rangle, \ldots, u_k=\langle v_k\rangle,\\
\onlmp^\xi_{n-1}(\langle v_i \rangle,v_i)
}}
\bigg[h^{[v_1]\ldots [v_k]}(\head_n(\xi,\langle v \rangle))_{[v]}
 \cdot 
\prod\limits_{1 \le i \le k}
h(\xi|_{\langle v_i \rangle})_{[v_i]}
\bigg]$\\[10mm]

& (by I.H.) \\[4mm]

= & $\sum\limits_{q_1,\ldots,q_k \in Q}
\bigg[h^{q_1\ldots q_k}(\head_n(\xi,\langle v \rangle))_{[v]} \cdot \prod\limits_{1
  \le i \le k} h(\xi|_{u_i})_{q_i}\bigg]$\\[5mm]

= & $h(\xi|_{\langle v \rangle})_{[v]}$\\

& (by Lemma \ref{lm:decomp}) \enspace.
\end{tabular}

\

\noindent The last but one step is justified by the fact that there is a one-to-one correspondence between the two index sets. In fact, it is easy to see that, for every $1\leq i\leq k$, the set $\{ v\in \pos(\xi) \mid u_i =\langle v\rangle \text{ and } \onlmp_{n-1}(u_i,v) \}$
has exactly $n$ elements.
\end{proof}

\section{From Branching Transitive Closure to $\exists\forall(\bMSO_\mathrm{step})$}
\label{sect:BTC-EA}

In this section let  ${\cal L}$ be a fragment of BMSO which contains
BFO+mod and which is closed under conjunction and the quantification
$\underline{\exists} x$. Our goal is to prove the following theorem.

\begin{theo}\rm \label{EA-theo} Let $m \in \mathbb{N}$ and $n \in
  \mathbb{N}_+$. For every $m$-family $\Phi = (\varphi_k(x,y_{1,k})
  \mid 0 \le k \le m)$ of $n$-progressing formulas in ${\cal
    L}_{\mathrm{step}}$ there is an $\exists\forall({\cal
    L}_{\mathrm{step}})$-formula $\Theta$ such that  $\seml \BTC(\Phi)\semr (\xi,\varepsilon) = \seml \Theta \semr(\xi)$ for every $\xi \in T_\Sigma$. 
\end{theo}

\subsection{Construction of $\Theta$}

Clearly, $\Theta$ should have the form $\exists X. \forall x. \theta(X,x)$ for some ${\cal L}$-step formula $\theta(X,x)$. 
First  we introduce the macro $\sel(X,z)$ which is a conjunction of two formulas. If $X$ and $z$
are assigned the set $J$ of positions and the position $v$,
respectively, then the first conjunct expresses that for every node
$u\in J$ (except if the base position of $u$ is $v$), there is another node $u'\in J$ such that 
$\langle u \rangle =\langle u' \rangle w$ for some string $w$ of length $n$.
The second conjunct expresses that there are no two different selected
nodes $u$ and $u'$ in $J$ such that $\langle u\rangle =\langle
u'\rangle$. These two properties of $J$ and $v$ assure that if $sel^\xi(J,v)$ holds, then the nodes in $J$  are situated as, e.g., the solid nodes in Fig. \ref{fig:unfolding-fig}.

The exact definition is
\begin{align*}\sel(X,z):= & \forall x. \bigg[(x\in X) \pimplies \bigg(
  z=\langle x \rangle_n \vee \underline{\exists} y. \big(y\in X \wedge \psi_{1}(y,x)\big)\bigg)\bigg] \wedge \\
& \forall x,y. \bigg[ \bigg( (x\in X) \wedge (y \in X) \wedge \underline{\exists} z'.( z' =\langle x \rangle_n \wedge z' =\langle y \rangle_n)\bigg) \pimplies (x = y) \bigg].
\end{align*}

Then we define
$$\Theta = \exists X.\forall x. \theta(X,x)$$
where
\begin{itemize}
\item $\theta(X,x) :=  \; \sel(X,\varepsilon) \wedge (\varepsilon \in X) \wedge  \bigg( (x\in X) \pimplies 
\bigvee_{k=0}^m\exists y_{1,k}. \; \theta_k(X,x,y_{1,k})\bigg)$

\item $\theta_k(X,x,y_{1,k}) :=  \;  \varphi_k(x,y_{1,k}) \wedge (y_{1,k}\in X) \wedge$\\ 
\hspace*{50mm} $\wedge \forall y. \big ((y\in X) \wedge \psi_{1}(x,y) \pimplies \underline{\bigvee}_{i=1}^k (y = y_i)\big)$

\item $\sel(X,\varepsilon) = \underline{\exists} x. \mathrm{root}(x)
\wedge \sel(X,x)$,  and 

\item  $(\varepsilon \in X) := \underline{\exists} x. \mathrm{root}(x) \wedge
(x \in X)$.
\end{itemize}

\subsection{$\Theta$ is equivalent to a $\exists\forall({\cal L}_{\mathrm{step}})$-formula}

First we prove the following technical lemma on the subformula $\theta_k(X,x,y_{1,k})$. 
\begin{lm}\rm\label{technical-lemma} For every $\xi\in T_\Sigma$ $J\subseteq \pos(\xi)$, and $u\in \pos(\xi)$, there is at most one $k\geq 0$ and sequence $u_{1,k}\in \pos(\xi)$ such that
\[
\seml \theta_k(X,x,y_{1,k})\semr (\xi,J,u,u_{1,k})\neq 0.
\]
\end{lm}
\begin{proof} We prove by contradiction. Let us assume that there are $k, l \geq 0$ and  sequences $u_{1,k}, v_{1,l}\in \pos(\xi)$ such that $\seml \theta_k(X,x,y_{1,k})\semr (\xi,J,u,u_{1,k})\neq 0$ and $\seml \theta_l(X,x,y_{1,l})\semr (\xi,J,u,v_{1,l})\neq 0$. Assume also that $v_j \not\in \{u_1,\ldots,u_k \}$ for some $1\le j\le l$. 

Since $\seml \varphi_k(x,y_{1,k})\semr (\xi,u,u_{1,k})\neq 0$, by the implication (\ref{equ:progress}),
we have $\psi_{k}^\xi(u,u_{1,k})$. Analogously, we have $\psi_{k}^\xi(u,v_{1,l})$, which implies $\psi_{1}^\xi(u,v_j)$. We also have $u_{1,k}\in J$ and  $v_{1,l}\in J$.

We also have 
\[\seml \forall y. \big ((y\in X) \wedge \psi_{1}(x,y) \pimplies \underline{\bigvee}_{i=1}^k (y = y_i)\big)\semr (\xi,u,u_{1,k}) =1,\]
hence
\[\seml y\in X \wedge \psi_{1}(x,y) \pimplies \underline{\bigvee}_{i=1}^k (y = y_i)\semr (\xi,u,v_j,u_{1,k}) =1.\]
However, the latter implies that $v_j=u_i$ for some $1\le i\le k$, contradiction our assumption. This means $ \{v_1,\ldots,v_l \}= \{u_1,\ldots,u_k \}$. Finally, we note that the order $u_1,\ldots,u_k$ is uniquely determined by the $\sibl_n$ relation, which is a part of $\psi_{k}$.
\end{proof}

\begin{lm}\rm\label{BMSO-step-lemma} $\Theta$ is equivalent to a $\exists\forall({\cal L}_{\mathrm{step}})$-formula.
\end{lm}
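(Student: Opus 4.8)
The plan is to reduce everything to a single statement: that $\theta(x,X,y)$ is equivalent to some ${\cal L}_{\mathrm{step}}$-formula $\theta'(x,X,y)$. Once this is shown, $\Psi(x)=\exists X.\forall y.\theta \equiv \exists X.\forall y.\theta'$ is by definition an $\exists\forall({\cal L}_{\mathrm{step}})$-formula, which is exactly the claim. Now ${\cal L}_{\mathrm{step}}$ contains every ${\cal L}$-formula and every constant $a\in S$ and is closed under $\wedge$, $\vee$, $\neg$. Inspecting $\theta$, the guards $x\in X$, $y\in X$, $\desc(x,y)$ and $\fork_k(X,y,z_{1,k})$ all lie in $\bFO\subseteq{\cal L}$, the implication $\pimplies$ unfolds into $\neg,\vee,\wedge$, and the $\varphi_k$ are in ${\cal L}_{\mathrm{step}}$; the one construct that escapes ${\cal L}_{\mathrm{step}}$ is the \emph{weighted} existential block $\exists z_{1,k}.(\fork_k(X,y,z_{1,k})\wedge\varphi_k(y,z_{1,k}))$. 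So the whole task is to replace this block by an equivalent formula that uses no weighted quantification.

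First I would normalize each $\varphi_k$ using Lemma \ref{Lemma3}: write $\varphi_k(y,z_{1,k})\equiv\bigvee_j\big(a_j^{(k)}\wedge\alpha_j^{(k)}(y,z_{1,k})\big)$ with $a_j^{(k)}\in S$ and $\alpha_j^{(k)}\in{\cal L}$ Boolean-valued. Distributing $\wedge$ over $\vee$ and pulling the constants out, the target equivalence becomes
\[
\exists z_{1,k}.\big(\fork_k(X,y,z_{1,k})\wedge\varphi_k(y,z_{1,k})\big)\;\equiv\;\bigvee_j\Big(a_j^{(k)}\wedge\underline{\exists}z_{1,k}.\big(\fork_k(X,y,z_{1,k})\wedge\alpha_j^{(k)}(y,z_{1,k})\big)\Big),
\]
where on the right the \emph{Boolean} macro $\underline{\exists}$ appears in place of the weighted existential.

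The heart of the argument is justifying this equivalence, and it is the step I expect to be the main obstacle. Fix $\xi$, a set $J$ assigned to $X$, and a position $v$ assigned to $y$. By Observation \ref{obs-fork-unrest} there is, for each fixed $k$, at most one tuple $z_{1,k}$ for which $\fork_k^\xi(J,v,z_{1,k})$ holds; hence the semiring sum $\sum_{z_{1,k}}\seml\fork_k\wedge\varphi_k\semr$ has at most one nonzero summand and collapses to $\seml\varphi_k\semr$ evaluated at that unique fork witness, or to $0$ if none exists. Substituting the normal form gives $\sum_j a_j^{(k)}\cdot[\alpha_j^{(k)}\text{ holds at the witness}]$. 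On the right-hand side, $\underline{\exists}z_{1,k}.(\fork_k\wedge\alpha_j^{(k)})$ is Boolean and equals $1$ exactly when the unique fork witness satisfies $\alpha_j^{(k)}$, so its semantics is $\sum_j a_j^{(k)}\cdot[\alpha_j^{(k)}\text{ holds at the witness}]$ as well; the two sides therefore coincide. All of this rests on the unambiguity of $\fork_k$, and the rest is closure bookkeeping.

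Finally I would reassemble $\theta'$. Each replaced block $\bigvee_j\big(a_j^{(k)}\wedge\underline{\exists}z_{1,k}.(\fork_k\wedge\alpha_j^{(k)})\big)$ is in ${\cal L}_{\mathrm{step}}$, since $\fork_k\in\bFO\subseteq{\cal L}$ and, because ${\cal L}$ (being $\bFOmod$ or $\bMSO$) is closed under $\wedge$, $\neg$ and first-order quantification, each $\underline{\exists}z_{1,k}.(\fork_k\wedge\alpha_j^{(k)})$ is again a Boolean ${\cal L}$-formula. Taking $\bigvee_{k=0}^m$ and conjoining $\desc(x,y)\in\bFO\subseteq{\cal L}$ stays in ${\cal L}_{\mathrm{step}}$; the implication $y\in X\pimplies(\cdots)$ unfolds to $\neg(y\in X)\vee((y\in X)\wedge(\cdots))$, and the outer conjunct $x\in X$ is another Boolean combination with an ${\cal L}$-atom, all of which preserve membership in ${\cal L}_{\mathrm{step}}$. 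Hence $\theta\equiv\theta'\in{\cal L}_{\mathrm{step}}$, and $\Psi(x)$ is equivalent to the $\exists\forall({\cal L}_{\mathrm{step}})$-formula $\exists X.\forall y.\theta'$, completing the proof.
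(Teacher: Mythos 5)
Your proof is correct and follows essentially the same route as the paper's: reduce to showing $\theta(x,X,y)$ is an ${\cal L}$-step formula, normalize each $\varphi_k$ via Lemma \ref{Lemma3}, pull the constants out of the weighted existential, and replace $\exists z_{1,k}$ by the Boolean $\underline{\exists} z_{1,k}$ using the uniqueness of fork witnesses. The only cosmetic difference is that you invoke Observation \ref{obs-fork-unrest} where the paper cites Observation \ref{obs2-fork}; both supply the needed at-most-one-witness property, and your closure bookkeeping at the end matches the paper's concluding remarks.
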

\begin{proof}We show that the formula $\theta(X,x)$
is equivalent to an ${\cal L}$-step formula. Let us apply $\varphi \pimplies \psi := \neg \varphi \vee (\varphi \wedge \psi)$ to the first occurrence of $\pimplies$. Then, since   $\sel(X,z)$ is in BFO+mod, it suffices to show that
\[\exists y_{1,k}.\big[ \varphi_k(x,y_{1,k}) \wedge  (y_{1,k}\in X) \wedge 
 \forall y. \big ((y\in X) \wedge \psi_{1}(x,y) \pimplies\underline{\bigvee}_{i=1}^k (y = y_i)\big) \big]\]
is an ${\cal L}$-step formula. By Lemma \ref{Lemma3}, we have $$\varphi_k(x,y_{1,k})\equiv \bigvee_{i_k \in I_k}a_{i_k} \wedge \chi_{i_k}(x,y_{1,k})$$ for some finite set $I_k$, semiring elements $a_{i_k}\in K$, and  formulas $\chi_{i_k}(x,y_{1,k})$ in ${\cal L}$. Then

\[
\begin{array}{ll}
& \exists y_{1,k}.\big[ \varphi_k(x,y_{1,k}) \wedge y_{1,k}\in X \wedge 
 \underbrace{\forall y. \big ((y\in X) \wedge \psi_{1}(x,y) \pimplies \underline{\bigvee}_{i=1}^k (y = y_i)\big)}_{=: \mu} \big] \\
 \equiv \;\; & \exists y_{1,k}.\big[ \big( \bigvee_{i_k \in I_k}a_{i_k} \wedge \chi_{i_k}\big) \wedge (y_{1,k}\in X) \wedge 
 \mu \big] \\[2mm]
 \equiv\;\; & \exists y_{1,k}.\big[ \bigvee_{i_k \in I_k} a_{i_k} \wedge \chi_{i_k}\wedge (y_{1,k}\in X) \wedge 
 \mu \big]\\[2mm]
 \equiv^\dagger\; &  \bigvee_{i_k \in I_k} a_{i_k} \wedge \underline{\exists}y_{1,k}.\big[\chi_{i_k}\wedge (y_{1,k}\in X) \wedge 
 \mu \big] 
\end{array}
\]
and the last formula is an ${\cal L}$-step formula because the formula
\[\underline{\exists}y_{1,k}.\big[\chi_{i_k}(x,y_{1,k})\wedge (y_{1,k}\in X) \wedge  
 \mu \big]\]
is in ${\cal L}$. Indeed, the first and the second conjunct are in  $\cal L$ and BFO, respectively, and $\mu$ is in BFO+mod. Moreover, we have assumed that $\bFOmod \subseteq {\cal L}\subseteq \bMSO$ and that $\cal L$ is closed under conjunction and $\underline{\exists}y_{1,k}$. 

In the step $\dagger$ of the reasoning we use the fact that, for every $i_k \in I_k$,  $\xi \in T_\Sigma$, $J\subseteq \pos(\xi)$, and $u\in\pos(\xi)$, there is at most one \underline{sequence} $u_{1,k}\in \pos(\xi)$ such that
\begin{equation*}
\seml\chi_{i_k}(x,y_{1,k})\wedge  (y_{1,k}\in X) \wedge 
 \mu \semr(\xi,J,u,u_{1,k})\neq 0.
\end{equation*}
The latter statement can be seen as follows. If the above inequality holds, then in particular $\seml \chi_{i_k}(x,y_{1,k}) \semr(\xi,u,u_{1,k})\neq 0$, which implies $\seml \varphi_{k,n}(x,y_{1,k}) \semr(\xi,u,u_{1,k})\neq 0$. 
Then also $\theta_k(X,x,y_{1,k})\semr (\xi,J,u,u_{1,k})\neq 0$ and we can apply Lemma \ref{technical-lemma}. Let us denote this sequence by $u_{1,k}^{(i_k)}$.

Having this uniqueness, we can prove $\dagger$ as follows:
\[
\begin{array}{cl}
  & \seml \exists y_{1,k}.\big[ \bigvee_{i_k \in I_k} a_{i_k} \wedge \chi_{i_k}\wedge (y_{1,k}\in X) \wedge \mu \big] \semr(\xi,J,u)\\[2mm]
= & \sum_{v_{1,k} \in \pos(\xi)} \sum_{i_k \in I_k} \seml a_{i_k} \wedge \chi_{i_k}\wedge (y_{1,k}\in X) \wedge \mu  \semr(\xi,J,u,u_{1,k})\\[2mm]
= & \sum_{v_{1,k} \in \pos(\xi)} \sum_{i_k \in I_k}  a_{i_k}  \cdot \seml \chi_{i_k}\wedge (y_{1,k}\in X) \wedge \mu  \semr(\xi,J,u,u_{1,k})\\[2mm]
= & \sum_{i_k \in I_k}  a_{i_k}  \cdot \seml \chi_{i_k}\wedge (y_{1,k}\in X) \wedge \mu  \semr(\xi,J,u,u_{1,k}^{(i_k)})\\[2mm]
= & \sum_{i_k \in I_k}  a_{i_k}  \cdot \seml \underline{\exists} y_{1,k}. \big[\chi_{i_k}\wedge (y_{1,k}\in X) \wedge \mu \big] \semr(\xi,J,u)\\[2mm]
= & \seml \bigvee_{i_k \in I_k}  a_{i_k}  \wedge \underline{\exists} y_{1,k}. \big[ \chi_{i_k}\wedge (y_{1,k}\in X) \wedge \mu \big] \semr(\xi,J,u)\enspace.
\end{array}
\]
\end{proof}

\subsection{Proof of Theorem \ref{EA-theo}}

Let $\xi \in T_\Sigma$ be an arbitrary tree throughout this section.

Let $\chi \in \uBTC_w^l(\Psi)$ for some $l\in \nat_+$ and $w\in \nat_+^*$, where $\Psi$ is the family of $n$-progress formulas introduced in Section \ref{bTC-section}. Recall that $\overline{\chi}$ has $l$ free variables of which the leftmost is $x_w$.  Then we define
$$\listpos(\xi,\overline{\chi})=\{u,u_{1,l-1} \in \pos(\xi)^l\mid \overline{\chi}^\xi(u,u_{1,l-1}) \text{ holds }\}.$$
Moreover, for every $w\in \nat_+^*$ and $\chi \in \uBTC_w^l(\Psi)$ we denote by
$\chi_\Phi$ the formula which we obtain by replacing every occurrence
of a subformula $\psi_{k}(x_v,x_{v1,k})$ of $\chi$ by
$\varphi_k(x_v,x_{v1,k})$. Note that 
$$\uBTC_w^{l}(\Phi)=\{ \chi_\Phi \mid \chi \in \uBTC_w^l(\Psi)\}.$$

Let $J \subseteq \pos(\xi)$ and $v \in \pos(\xi)$ be a base position. We define
\[\mathrm{\nextbase}_n(J,v) \text{ to be the vector } (v_1,\ldots,v_k) \in \pos(\xi)^k \]
uniquely determined by the following conditions:
\begin{itemize}
\item $k\geq 0$,
\item $\forall (1\le  i\le k): (v \le_n v_i)^\xi$ and $\exists (u_i \in J):  v_i = \langle u_i\rangle$,
\item $\forall (1\le  i< k): \sibl_n(v_i,v_{i+1})$, and
\item $\forall (v'\in \pos(\xi)):$   if   $(v \le_n v')$ and  $\exists (u' \in J):  v' = \langle u'\rangle$,  then $v'=v_i$  for some $1\le i \le k$.
\end{itemize}

Let $\mathrm{\nextbase}_n(J,v) = (v_1,\ldots,v_k)$ and let 
\begin{equation}\label{eq:Ji}
J_i = \{v \in J \mid (v_i \le v)^\xi\}
\end{equation}
 for every $1 \le i \le k$. It is easy to observe that the predicate $\sel^\xi$ is inductive on $J$ in the following sense.
\begin{ob}\rm\label{ob:BTC-number} Let $l\geq 1$,  $J\subseteq \pos(\xi)$ with $|J|=l$, and $v\in \pos(\xi)$ be a base position. 
Moreover, let $\mathrm{\nextbase}_n(J,v) = (v_1,\ldots,v_k)$ and let $J_i$ be defined as in (\ref{eq:Ji}) for every $1 \le i \le k$. Then
$\sel^\xi(J,v)$ holds, if and only if there is exactly one sequence $u,u_{1,k} \in J$ such that $v=\langle u\rangle$, $\psi^\xi_{k,n}(u,u_{1,k})$ and $v_i=\langle u_i\rangle$ and $\sel^\xi(J_i,v_i)$  for every $1 \le i \le k$.
\end{ob}
Moreover, let $w\in \nat_+^*$. Then we define the formula $f(J,v,w)$ inductively by
\[
f(J,v,w) = \exists x_{w1,k}.\psi_{k}(x_w,x_{w1,k})\wedge f(J_1,v_1,w1)\wedge  \ldots\wedge f(J_k,v_k,wk).
\]
We may call $f(J,v,w)$ the {\em $\psi$-formula determined by  $J$, $v$, and $w$}. Note that in general $f(J,v,w)$ is not an unfolding of $\Psi$. However, for  a set 
 $J\subseteq \pos(\xi)$ of positions with $\sel^\xi(J,v)$, the
formula $f(J,v,w)$ is an unfolding of $\Psi$ and $J$ can be considered as an assignment which satisfies $\overline{f(J,v,w)}$. We make this clear in the next lemma.

\begin{lm}\rm\label{lm:f-v-J} Let $l\geq 1$, $w\in \nat_+^*$, $J\subseteq \pos(\xi)$ with $|J|=l$, and $v\in \pos(\xi)$ be a base position. 
Then $\sel^\xi(J,v)$ holds, if and only if $f(J,v,w) \in \uBTC_w^l(\Psi)$ and there is exactly one enumeration $u,u_1,\ldots,u_{l-1}$ of $J$ such that $\langle u\rangle =v$ and $\overline{f(J,v,w)}^\xi(u,u_{1,l-1})$ holds.
\end{lm}
\begin{proof} By induction on $l$. Let $\mathrm{\nextbase}_n(J,v) = (v_1,\ldots,v_k)$, $J_i$ be defined as in (\ref{eq:Ji}), and $|J_i|=l_i$ for every $1 \le i \le k$. 

\underline{$l=1$:} Now $\mathrm{\nextbase}_n(J,v) = (\,)$ and $f(J,v,w)=\psi_{0}(x_w)$. Hence the statement trivially holds by the definition of $\sel(X,z)$.

\underline{$l\Rightarrow l+1$:} 

First we prove the implication $\Rightarrow$.  Since $\sel^\xi(J,v)$, by Observation \ref{ob:BTC-number}, there is a unique sequence $u,u_{1,k} \in J$ such that $v=\langle u\rangle$, $\psi^\xi_{k,n}(u,u_{1,k})$ and $v_i=\langle u_i\rangle$ and $\sel^\xi(J_i,v_i)$ for every $1 \le i \le k$. By the induction hypothesis,
for every $1 \le i \le k$, $f(J_i,v_i,wi) \in \uBTC^{l_i}_{wi}(\Psi)$ and there is a unique enumeration $u_i, p_i$ of $J_i$, such that $\overline{f(J_i,v_i,wi)}^\xi(u_i,p_i)$, where $p_i$ now denotes a sequence of length $l_i-1$. Since $l=l_1+\ldots+l_k$, we have $f(J,v,w) \in \uBTC^{l+1}_w(\Psi)$. Moreover,  for the enumeration $u,u_{1,k},p_1,\ldots,p_k$ of $J$, we have
$\overline{f(J,v,w)}^\xi(u,u_{1,k},p_1,\ldots,p_k)$.

Next we prove the implication $\Leftarrow$. Now $f(J,v,w) = \exists x_{w1,k}.\psi_{k}(x_w,x_{w1,k})\wedge f(J_1,v_1,w1)\wedge  \ldots\wedge f(J_k,v_k,wk)$, where $f(J_i,v_i,wi) \in \uBTC^{l_i}_{wi}(\Psi)$.
Then, we can decompose the given enumeration of $J$ into $u,u_{1,k},p_1,\ldots,p_k$ such that $\psi^\xi_{k,n}(u,u_{1,k})$ and 
$u_i,p_i$ is an enumeration of $J_i$ with $\overline{f(J_i,v_i,wi)}^\xi(u_i,p_i)$ for every $1\le i\le k$. By the induction hypothesis $\sel^\xi(J_i,v_i)$ for every $1 \le i \le k$.
Moreover, $v=\langle u\rangle$ and $v_i=\langle u_i\rangle$ for every $1\le i\le k$. By Observation \ref{ob:BTC-number}, this means that $\sel^\xi(J,v)$.
\end{proof}

In the following lemma we prove that, roughly speaking, for each unfolding of $\Psi$ and assigment satisfying it, the set of nodes appearing in the assigment determines the unfolding.

\begin{lm}\rm \label{lm:unf-sel} Let $v\in \pos(\xi)$ be a base position, $w\in \nat_+^*$, $\chi \in \uBTC_w^l(\Psi)$ and $u,u_{1,l-1} \in  \listpos(\xi,\overline{\chi})$ with $\langle u\rangle  = v$. Then $\sel^\xi(\{u,u_{1,l-1}\},v)$ holds and $\chi = f(\{u,u_{1,l-1}\},v,w)$.
\end{lm}
\begin{proof} We prove by induction on $l$.

\underline{$l=1$:} Then $\chi=\psi_{0}(x_w)$. By $\langle u\rangle  = v$ we have $\sel^\xi(\{u\},v))$. Moreover, $f(\{u\},v,w)=\psi_{0}(x_w)$.

\underline{$l\Rightarrow l+1$:} Now $\chi = \exists x_{w1,k}.\psi_{k}(x_w,x_{w1,k})\wedge \chi_1\wedge  \ldots\wedge \chi_k$, where $\chi_i \in \uBTC^{l_i}_{wi}(\Psi)$ for some $l_i$ such that $l=l_1+\ldots +l_k$. Moreover, $u,u_{1,l-1} = u,u_{1,k},p_1,\ldots,p_k$, where 
$\psi_{k}^\xi(u,u_{1,k})$, and $u_i,p_i \in \listpos(\xi,\overline{\chi_i})$ for every $1\le i \le k$. Then,
 by the induction hypothesis, we have $\sel^\xi(\{u_i,p_i\},v_i))$
with $v_i= \langle u_i \rangle $, and $\chi_i=f(\{u_i,p_i\},v_i,wi)$ for every $1\le i \le k$. Since $v=\langle u \rangle$ and 
$\psi_{1}^\xi(u,u_{i})$ for every $1\le i \le k$, we have $\sel^\xi(\{u,u_{1,k},p_1,\ldots,p_k\},v))$. Finally,
$\chi = \exists x_{w1,k}.\psi_{k}(x_w,x_{w1,k}) \wedge f(\{u_1,p_1\},v_1,w1)\wedge \ldots \wedge f(\{u_k,p_k\},v_k,wk)=f(\{u,u_{1,l-1}\},v,w)$.
\end{proof}

Now we are able to show that there is a bijection between sets $J\subseteq \pos(\xi)$ of positions with $\sel^\xi(J,v)$ and unfoldings of $\Psi$ with assignments which satisfy them.

\begin{lm}\rm \label{lm:bijection} Let $l\geq 1$, $w\in \nat_+^*$, and $v\in \pos(\xi)$ be a base position. There is bijection between the sets
\[G=\{(J,v) \mid J\subseteq \pos(\xi), | J| =l, \sel^\xi(J,v) \}\]
and
\[H_w=\{(\chi,u,u_{1,l-1})\mid \chi \in \uBTC_w^l(\Psi), u,u_{1,l-1} \in  \listpos(\xi,\overline{\chi}) \text{ with } \langle u\rangle =v\}.\]
\end{lm}
\begin{proof} Define the mapping
\[ G\to H_w \text{ by } (J,v) \mapsto \big(f(J,v,w),u,u_{1,l-1}\big),\]
where $u,u_{1,l-1}$ is the enumeration of $J$ appearing in Lemma \ref{lm:f-v-J}. It should be clear that this mapping is well-defined and injective. Moreover, by Lemma \ref{lm:unf-sel} it is surjective.
\end{proof}

\begin{lm}\rm \label{lm:unf-sel2} Let $w\in \nat_+^*$, $v\in \pos(\xi)$ be a base position and $J\subseteq \pos(\xi)$ such that $\sel^\xi(J,v)$.
For every $p\in J$, there is a unique integer $k(p)\geq 0$ and a sequence $p_{1,k(p)} \in J$ such that
$\psi_{k(p)}^\xi(p,p_{1,k(p)})$ holds, and 
$$\seml \overline{f(J,v,w)_\Phi}\semr(\xi,u,u_{1,l-1})= \prod\limits_{p\in J} \seml \varphi_{k(p)}(x,y_{1,k(p)})\semr (\xi,p,p_{1,k(p)} ),$$
where $u,u_{1,l-1}$ is the unique enumeration of $J$ appearing in Lemma \ref{lm:f-v-J}.
\end{lm}
\begin{proof} By induction on $|J|$.
\end{proof}

\

\begin{proof} of Theorem \ref{EA-theo}.  Let $\xi \in T_\Sigma$. Then we have:
\[
\seml  \BTC(\Phi)\semr (\xi,\varepsilon)
=  \sum\limits_{1 \le l \le \size(\xi)} \seml  \BTC_\varepsilon^l(\Phi)\semr (\xi,\varepsilon)
= \sum\limits_{1 \le l \le \size(\xi)} \,\sum\limits_{\chi \in \uBTC_\varepsilon^l(\Phi)} \,\seml \chi\semr(\xi,\varepsilon)
\]
where the last equation is due to  Lemma \ref{lm:alternative-semantics}. Next we change the index set of the second summation:
\[
\sum\limits_{\chi \in \uBTC_\varepsilon^l(\Phi)} \,\seml \chi\semr(\xi,\varepsilon)
= \sum\limits_{\chi \in \uBTC_\varepsilon^l(\Psi)} \,\seml \chi_\Phi\semr(\xi,\varepsilon)\enspace.
\]
Let $\chi \in \uBTC_\varepsilon^l(\Psi)$.  Since $\seml\overline{\chi_\Phi}\semr(\xi,u,u_{1,{l-1}}) \neq 0$  implies that  $\overline{\chi}^\xi(u,u_{1,l-1})$ holds for every $u,u_{1,{l-1}} \in \pos(\xi)$, we have:
\[
\seml \chi_\Phi\semr(\xi,\varepsilon)
= \sum\limits_{\substack{\varepsilon, u_{1,l-1} \in \pos(\xi)}}\,\seml \overline{\chi_\Phi}\semr(\xi,\varepsilon,u_{1,l-1})
= \sum\limits_{\substack{\varepsilon, u_{1,l-1} \in \listpos(\xi,\overline{\chi})}}\,\seml \overline{\chi_\Phi}\semr(\xi,\varepsilon,u_{1,l-1})\enspace.
\]

To summarize so far, we have 
\[
\seml  \BTC(\Phi)\semr (\xi,\varepsilon)
= \sum\limits_{1 \le l \le \size(\xi)}\, \sum\limits_{\chi \in \uBTC_\varepsilon^l(\Psi)} \,
\sum\limits_{\varepsilon,u_{1,l-1}\in \listpos(\xi,\overline{\chi})}\,
\seml \overline{\chi_\Phi}\semr(\xi,\varepsilon,u_{1,l-1}).
\]
Then we can continue as follows:
\[
\begin{array}{cl}
   & \sum\limits_{1 \le l \le \size(\xi)}\, \sum\limits_{\chi \in \uBTC_\varepsilon^l(\Psi)} \,
\sum\limits_{\varepsilon,u_{1,l-1}\in \listpos(\xi,\overline{\chi})}\,
\seml \overline{\chi_\Phi}\semr(\xi,\varepsilon,u_{1,l-1})\\[5mm]
=  & \sum\limits_{1 \le l \le \size(\xi)}\, 
\sum\limits_{\substack{J \subseteq \pos(\xi)\\|J|=l \\  \sel^\xi(J,\varepsilon)\\ \varepsilon \in J}}
\, \seml \overline{f(J,\varepsilon,\varepsilon)_\Phi}\semr(\xi,\varepsilon,u_{1,l-1}) \\[14mm]
& \text{(by Lemmas \ref{lm:unf-sel} and \ref{lm:bijection}, where }(J,\varepsilon) \mapsto (f(J,\varepsilon,\varepsilon),\varepsilon,u_{1,l-1} ))\\
& \text{ is the bijection in that lemma})\\[3mm]

=  & \sum\limits_{\substack{J \subseteq \pos(\xi)\\\sel^\xi(J,\varepsilon)\\ \varepsilon \in J}}\,
\seml \overline{f(J,\varepsilon,\varepsilon)_\Phi}\semr(\xi,\varepsilon,u_{1,|J|-1}) \\[12mm]
& \text{(where }(J,\varepsilon) \mapsto (f(J,\varepsilon,\varepsilon),\varepsilon,u_{1,|J|-1} ))\text{ is the bijection in Lemma \ref{lm:bijection}})\\[3mm]

= & \sum\limits_{\substack{J \subseteq \pos(\xi)\\\sel^\xi(J,\varepsilon)\\ \varepsilon \in J}}\, \prod\limits_{p\in J} \seml \varphi_{k(p)}(x,y_{1,k(p)})\semr (\xi,p,p_{1,k(p)} )\\[12mm]
& (\text{by Lemma \ref{lm:unf-sel2}})\\[2mm]

= & \sum\limits_{\substack{J \subseteq \pos(\xi)\\\sel^\xi(J,\varepsilon)\\ \varepsilon \in J}}\, \prod\limits_{p\in J} \seml \theta_{k(p)}(X,x,y_{1,k(p)})\semr (\xi,J,p,p_{1,k(p)} )\\[12mm]
& (\text{by the definition of } \theta_k \text{ and the fact that }k(w)\text{ is unique})\\[2mm]

= & \sum\limits_{\substack{J \subseteq \pos(\xi)\\\sel^\xi(J,\varepsilon)\\ \varepsilon \in J}}\, \prod\limits_{p\in J} \seml \bigvee_{k=0}^m\exists y_{1,k}. \;\theta_{k}(X,x,y_{1,k(p)})\semr (\xi,J,p )\\[12mm]
& (\text{by Lemma \ref{technical-lemma}})\\[2mm]

=  & \sum\limits_{\substack{J \subseteq \pos(\xi)\\\sel^\xi(J,\varepsilon)\\ \varepsilon \in J}}\,
\prod\limits_{p \in \pos(\xi)}\seml (x\in X) \pimplies
\bigvee_{k=0}^m\exists y_{1,k}. \; \theta_k(X,x,y_{1,k}) \semr (\xi,J,p)\\[10mm]
=  & \sum\limits_{J \subseteq \pos(\xi)} \prod\limits_{p \in \pos(\xi)}\seml \theta(X,x)\semr(\xi,J,p) \\[5mm]
=  & \seml \exists X. \forall x. \theta(X,x)\semr(\xi)\\[2mm]
=  &  \seml\Theta\semr(\xi) 
\end{array}
\]
\end{proof}

\section{From RMSO-definability to Recognizability}

Here we show that every RMSO-definable weighted tree language is recognizable. We prove this as usual by induction on the structure of the formulas.

\begin{theo}\label{th:rec=definable} Let  $r: T_\Sigma \rightarrow S$ be a weighted tree language. If $r$ is  $\RMSO$-definable, then $r$ is recognizable.
\end{theo}
\begin{proof} Let $\varphi$ be an RMSO-formula.  If $\varphi$ has the form $a$, $\lab_\sigma(x)$,
$\edge_i(x,y)$,  $x \in X$, $\varphi \wedge \psi$,
$\varphi \vee \psi$, $\exists x. \varphi$, or $\exists X. \varphi$,
then  we can proceed as in \cite[Lm. 5.2-5.4]{drovog06} showing that $\seml \varphi \semr$ is recognizable. 

For the formula $x \le y$ we apply Observation \ref{ob:BMSO-rec}. 

Next we consider a formula $\varphi$  of the form   $\neg \psi$ where $\psi$ is a $\bMSO$-step formula. Then $\varphi$ is also a $\bMSO$-step formula, and by Lemma \ref{Lemma3}   its semantics is a recognizable step function. Using the fact that recognizable weighted tree languages are closed under scalar product and summation (cf.  \cite[Lm. 3.3 of]{drovog06}), we obtain that the semantics of $\varphi$ is recognizable.  

Next let $\varphi$ be of the form $\forall x. \psi$ where $\psi$ is a  BMSO-step formula. By Lemma \ref{Lemma3}, the semantics of $\psi$ is a recognizable  step function and thus by  \cite[Lm. 5.5]{drovog06} the semantics of  $\varphi$ is  recognizable.

Finally let $\varphi$ be of the form $\forall X. \chi$ where $\chi$ is a $\bMSO$-formula. Then also $\varphi$ is a $\bMSO$-formula of which the semantics is a recognizable weighted tree language again due to Observation \ref{ob:BMSO-rec}. 

Thus, summing up, every  $\RMSO$-definable weighted tree language is recognizable. 
\end{proof}

In the present paper we have defined the fragment  $\RMSO$ of restricted MSO in the spirit of
\cite{gas10} (and of \cite{bolgasmonzei10}). It is syntactically slightly different from the fragment with the same name introduced in \cite{drogas05,drogas07} and used in \cite{drovog06,drovog11} for the tree case. In the restricted MSO-fragment of \cite{drogas05,drogas07}, (cf. e.g.  \cite[Def. 4.1 and 4.8]{drovog06}) $x\le y$ is not an atomic formula, negation is only applicable to atomic formulas
except coefficients from $S$, and  second-order universal
quantification is not allowed. Henceforth we will call this fragment
$\RMSO'$ (cf.  \cite[Def. 4.8]{drovog06}). In  \cite[Thm. 5.1]{drovog06} is was proved that a weighted tree language (over a commutative semiring) is $\RMSO'$-definable if, and only if it is recognizable. Due to Theorem \ref{main} we obtain the following corollary.

\begin{corollary} Let $r$ be a weighted tree language. Then, $r$ is $\RMSO$-definable if, and only if $r$ is $\RMSO'$-definable.
\end{corollary}

{\bf Acknowledgement.} The authors are grateful to one of the referees  for his/her
careful analysis and helpful remarks which definitely improved the quality of the paper.

\bibliographystyle{alpha}

\appendix

\section{Collection of the Used Macros}

For the convenience of the reader we list here all the macros which are used in this paper.

For every $\varphi,\psi \in \MSO$:
\begin{compactitem}
\item $\varphi \stackrel{+}{\rightarrow} \psi:= \neg \varphi \vee (\varphi \wedge
\psi)$ \hspace{5mm} (cf. p.\pageref{p:+})
\end{compactitem}

\noindent For every $\varphi,\psi \in \bMSO$ (cf. p.\pageref{p:B}):
\begin{compactitem}
\item $\varphi \underline{\vee} \psi := \neg(\neg \varphi \wedge \neg \psi)$ 
\item $\underline{\exists} x. \varphi := \neg \forall x. \neg \varphi$
\item $\underline{\exists} X. \varphi := \neg \forall X. \neg \varphi$
\end{compactitem}

\noindent Next we proceed from the simpler macros to the more complex ones.
\begin{compactitem}
\item $\edge(x,y) := \underline{\bigvee}_{1\leq i \leq \maxrk(\Sigma)} \edge_i(x,y)$
\hspace{5mm} (cf. p.\pageref{p:le_w})

\

\item $\rt(x) :=  \forall y. \neg\,\edge(y,x)$ \hspace{5mm} (cf. p.\pageref{p:le_w})

\

\item $\formpath_w(y_{0,n}) := \bigwedge\limits_{1 \le i \le n}
\edge_{w_i}(y_{i-1},y_i)$ \hspace{5mm} (cf. p. \pageref{p:le_w})

\

\item $(x \le_w y) := \underline{\exists} y_{0,n}. (x=y_0) \wedge
\formpath_w(y_{0,n}) \wedge (y_n=y)$ \hspace{5mm} (cf. p. \pageref{p:le_w})

\

\item $x \le_n y := \underline{\bigvee}_{w \in \{1,\ldots,\maxrk(\Sigma)\}^n} (x \le_w y)$ \hspace{5mm} (cf. p. \pageref{p:le_w})

\

\item $(\height(x) \ge n) := \underline{\exists} z. (x \le_nz)$ \hspace{5mm} (cf. p. \pageref{p:height})

\

\item $\sibl(x, y) := \underline{\exists} z. \underline{\bigvee}_{1\leq i < j \leq \maxrk(\Sigma)} \edge_i(z,x)\wedge \edge_j(z,y)$  \hspace{5mm} (cf. p. \pageref{p:BTC})

\

\item $\sibl_n(x,y) := 
\underline{\exists} x',y'. \big( \sibl(x',y') \wedge (x'\le_{n-1} x) \wedge
(y' \le_{n-1} y)\big)$  \hspace{5mm} (cf. p. \pageref{p:BTC})

\

\item $(|y| > n) := \underline{\exists} x. \; (x \le_{n+1} y)$ \hspace{5mm} (cf. p. \pageref{p:le_w})

\

\item $(y/n \in X) := \underline{\exists} x. \; (x\in X) \wedge (x \le_n y)$ \hspace{5mm} (cf. p. \pageref{p:le_w})

\

\item $(m \in |X|) := \underline{\exists} x,y. \; \rt(x) \wedge (x \le_m y) \wedge (y \in X)$ \hspace{5mm} (cf. p. \pageref{p:le_w})

\

\item $(|x| \equiv_n m) := $\\
$ \forall X. \left(\left( (x\in X) \wedge \left(\forall y. ((y\in X) \wedge (|y| > n)) \stackrel{+}\rightarrow (y/n \in X)\right)\right)\stackrel{+}\rightarrow (m \in |X|)\right)$ \\
(cf. p. \pageref{p:|x|equiv_n})

\

\item $(y=\langle x\rangle_n) := \bigwedge_{0\leq q < n} \big((|x|
  \equiv_n q) \stackrel{+}\rightarrow (y \le_q x)\big)$ \hspace{5mm} (cf. p. \pageref{p:BTC})

\

\item $\formpath(y_{0,n}) := \underline{\bigvee}_{w \in \{1,\ldots,\maxrk(\Sigma)\}^n}
\formpath_w(y_{0,n})$ \hspace{5mm} (cf. p. \pageref{p:onlmp})

\

\item $\formlmp(y_{1,n}) :=
  \formpath(y_{1,n}) \wedge$ \\
\hspace*{27mm} $ \forall z.\bigg[\bigg((y_1 \le_{n-1} z) \wedge (z \not= y_n)\bigg)
\stackrel{+}{\rightarrow}\underline{\bigvee}_{i=1}^{n-1} \; \sibl_{n-i}(y_{i},z') \bigg]$ \\
(cf. p. \pageref{p:onlmp})

\

\item $\onlmp_{n-1}(x,y) := \underline{\exists} y_{1,n}. (x=y_1) \wedge
  \formlmp(y_{1,n}) \wedge \underline{\bigvee}_{1 \le i
    \le n} (y_i=y)$ \\
(cf. p. \pageref{p:onlmp})

\

\item $\formcut_{n,k}(x,y_{1,k}):=  \left(\bigwedge_{i=1}^{k} (x \le_n y_i)\wedge (\height(y_i) \geq n)\right) \wedge$\\ 
$ \left(\bigwedge_{i=1}^{k-1} \sibl_n(y_i, y_{i+1})\right) \wedge
\left(\forall z. ((x \le_n z) \wedge (\height(z) \geq n))  \stackrel{+}{\rightarrow}\big(\underline{\bigvee}_{i=1}^k z=y_i\big)\right)$\\
(cf. p. \pageref{p:onlmp})

\

\item $\chk_\zeta(x,y_{1,k}) := \bigwedge\limits_{w \in
  \pos(\zeta) \setminus \pos_{Z_k}(\zeta)} (\underline{\exists}
y. \; (x \le_w y) \wedge \mathrm{label}_{\zeta(w)}(y))
\wedge$\\
\hspace*{30mm}$\bigwedge\limits_{1 \le i \le k} (x\le_{\pos_{z_i}(\zeta)} y_i)$
\hspace{5mm} (cf. p. \pageref{p:check})

\end{compactitem}

\end{document}